\pgfplotsset{width=8cm,compat=newest}
\def\colorful{0}
\newcommand{\violet}[1]{{\color{violet}{#1}}}
\newcommand{\violet}[1]{{{#1}}}
\newcommand{\error}{\mathrm{error}}
\newcommand{\Sens}{\mathrm{Sens}}
\newcommand{\paren}[1]{\left({#1}\right)}
\DeclareMathOperator*{\argmax}{arg\,max}
\newlist{enumprop}{enumerate}{1} 
\setlist[enumprop]{label=\arabic*.,ref=\theproposition.\arabic*}
\newtheorem*{rep@theorem}{\rep@title}
\newcommand{\newreptheorem}[2]{
\newenvironment{rep#1}[1]{
 \def\rep@title{#2 \ref{##1}}
 \begin{rep@theorem}\itshape}
 {\end{rep@theorem}}}
\begin{document}



\title{The Query Complexity of Certification
\vspace{15pt}}

\author{Guy Blanc \vspace{8pt} \\ \hspace{-5pt}{\sl Stanford}
\and Caleb Koch \vspace{8pt}\\ \hspace{-8pt} {\sl Stanford}
\and \hspace{0pt} Jane Lange \vspace{8pt} \\ \hspace{-4pt}  {\sl MIT}
\and Li-Yang Tan \vspace{8pt} \\ \hspace{-8pt} {\sl Stanford}}

\date{\vspace{15pt}\small{\today}}

\maketitle

\begin{abstract}
We study the problem of {\sl certification}: given queries to a function $f : \zo^n \to \zo$ with certificate complexity $\le k$ and an input $x^\star$, output a size-$k$ certificate for $f$'s value on~$x^\star$.  

For monotone functions, a classic local search algorithm of Angluin accomplishes this task with~$n$ queries, which we show is optimal for local search algorithms.   Our main result is a new  algorithm for certifying monotone functions with $O(k^8 \log n)$ queries, which comes close to matching the information-theoretic lower bound of $\Omega(k \log n)$.  The design and analysis of our algorithm are based on a new connection to threshold phenomena in monotone functions.

We further prove exponential-in-$k$ lower bounds when $f$ is non-monotone, and when $f$ is monotone but the algorithm is only given random examples of $f$.  These lower bounds show that assumptions on the structure of $f$ and query access to it are both necessary for the polynomial dependence on $k$ that we achieve. 
\end{abstract} 

\thispagestyle{empty}
\newpage 
\setcounter{page}{1}
\section{Introduction}

Given a function $f : \zo^n\to\zo$ and an input $x^\star$, {\sl why} does $f$ output $f(x^\star)$ on  $x^\star$?  Among the many possibilities for what constitutes such an ``explanation", the notion of {\sl certificates} is perhaps  the simplest: a set $S \sse [n]$ of $x^\star$'s coordinates that determines $f$'s value on $x^\star$.  That is, $f(y) = f(x^\star)$ for all $y$ that agree with $x^\star$ on the coordinates in $S$.  


It is natural to seek {\sl small} certificates, i.e.~succinct explanations: the smaller $S$ is, the more inputs it covers, and the more general it is as an explanation. This leads us to the following standard definition from complexity theory:

\begin{definition}[Certificate complexity]
\label{def:cert-complexity}
For a function $f : \zo^n \to \zo$ and an input $x^\star$, the {\sl complexity of certifying $f$'s value on $x^\star$} is the quantity: 
\[ C(f,x^\star) \coloneqq \min_{S\sse [n]}\big\{ |S| \colon \text{$f(y)= f(x^\star)$ for all $y$ s.t.~$y_S = x^\star_S$}\big\}.\] 
The {\sl certificate complexity of $f$} is the quantity $\ds C(f)\coloneqq \max_{x\in \zo^n} \{ C(f,x)\}$.
\end{definition}
We can now state the algorithmic problem that we study in this work, that of efficiently finding small certificates:  

\begingroup 
\addtolength\leftmargini{-0.1in}
\begin{quote} 
 {\bf Certification Problem:}  {\sl Given queries to a function $f : \zo^n\to\zo$ with certificate complexity $\le k$ and an input $x^\star$, output a size-$k$ certificate for $f$'s value on~$x^\star$.}
\end{quote} 
\endgroup

\paragraph{Motivation.} 
In addition to being a basic and natural problem, this is also an abstraction of a problem of interest in {\sl explainable machine learning}, where~$f$ represents a black box model that we seek to explain the predictions~of. 
 Modern machine learning algorithms, powered by large amounts of computational resources and trained on massive datasets, produce models that perform very well, but are so complicated that they are essentially inscrutable black boxes.  This is a concern as we increasingly delegate weighty decisions to these models.  The  field of explainable machine learning seeks to address this by developing techniques to  explain the predictions of these models~\cite{DK17,Lip18}.  
 
 There are numerous notions of ``explanations" in this literature~\cite{SK10,BSHKHM10,SVZ14,RSG16,KL17,LL17,STY17};  Ribero, Singh, and Guestrin~\cite{RSG18} were the first to propose certificates.  Their work introduced a relaxed ``approximate'' notion of certificates, where the set $S$ of coordinates {\sl mostly} determines $f$'s value rather than fully determines it, and ``mostly" is measured with respect to a distribution over inputs.  We discuss~\cite{RSG18}, this notion of ``approximate certificates", and corresponding approximate  certification algorithms in more detail in~\Cref{sec:prior}.

\subsection{Our results} 


\subsubsection{Local search for monotone functions and its limitations} 
The certification problem can be viewed as the problem of efficiently finding an ``$f$-monochromatic" subcube in $\zo^n$ of codimension $\le k$ containing $x^\star$, where a subcube is $f$-monochromatic if $f$ takes the same value on all inputs in that subcube.  From this perspective, it is natural to proceed by {\sl local search}: first query $f$ on $x^\star$ and its immediate Hamming neighbors, and iteratively expand this neighborhood until it contains an $f$-monochromatic subcube of the desired size.

Indeed, a classic algorithm due to Angluin~\cite{Ang88} shows how such a local search can be carried out systematically for {\sl monotone} functions, and solves the certification problem with just~$n$ queries:\bigskip


\noindent {\bf Angluin's algorithm:}  {\it Given queries to a monotone function $f:\zo^n \to \zo$ with certificate complexity $\le k$ and an input $x^\star$, Angluin's algorithm makes $n$ queries to $f$ and returns a size-$k$ certificate for $f$'s value on $x^\star$. }

\bigskip

Angluin's algorithm is a modification of a similar algorithm given by Valiant~\cite{Val84}.  

We begin by observing that Angluin's algorithm is optimal among local search algorithms.  We consider a local search algorithm to be any algorithm whose first query is~$x^\star$, and whose subsequent queries are Hamming neighbors of some input that has been previously queried.  In other words, at any point in the execution of a local search algorithm, the set of inputs that have been queried so far forms a connected subgraph of $\zo^n$ containing $x^\star$.  We show the following lower bound:


\begin{claim}[Lower bound against local search algorithms] 
\label{prop:local-lower-bound-v2} 
For any $\eps > 0$ the following holds.  Any local search algorithm solving the certification problem for monotone functions $f : \zo^n\to \zo$ must have query complexity $\Omega(\eps n)$, even if $f$ is promised to have certificate complexity $k=1$ and even if the algorithm is only required to return a size-$\Omega(\eps n)$ certificate with probability $\eps$. 
\end{claim}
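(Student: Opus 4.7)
The plan is to construct a hard distribution of functions and apply Yao's minimax principle. The hard family consists of the $n$ dictators $f_i(x) = x_i$ for $i \in [n]$; each is monotone with $C(f_i) = 1$. Fix $x^\star = 1^n$, so $f_i(x^\star) = 1$ for every $i$. The key property of this family is that \emph{every} certificate (of any size) for $f_i$'s value at $x^\star$ must contain the coordinate $i$: if $i \notin S$, the input $y$ obtained from $x^\star$ by flipping only coordinate $i$ agrees with $x^\star$ on $S$ but satisfies $f_i(y) = 0 \neq 1 = f_i(x^\star)$. Consequently, if the algorithm outputs a set of size at most $c \eps n$ for some small constant $c < 1$, success on $f_i$ requires $i$ to be in the output.

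By Yao's principle it suffices to show that no deterministic local search algorithm making fewer than $(1-c)\eps n$ queries can succeed with probability $\ge \eps$ when $i$ is drawn uniformly from $[n]$. Fix such a deterministic algorithm $\mathcal{A}$, and define its \emph{canonical trace} to be the execution obtained by pretending every query returns $1$. Let $q$ be the number of queries in the canonical trace, $T \sse [n]$ the set of coordinates flipped to $0$ in some canonically queried point, and $S^\circ$ the canonical output. The local search constraint---each new query is a Hamming neighbor of a previously queried point, with the first query being $x^\star = 1^n$---implies that each query after the first enlarges $T$ by at most one coordinate, so $|T| \le q - 1$.

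The indistinguishability step is that for any $i \notin T$, running $\mathcal{A}$ on $f_i$ reproduces the canonical trace exactly: every canonically queried point $y$ has $y_i = 1$ (since $i$ is never flipped along the trace), so $f_i(y) = 1$ matches the canonical assumption, and $\mathcal{A}$ outputs $S^\circ$. Therefore for uniformly random $i \in [n]$ the success probability is bounded by
\[
\Pr[\mathcal{A} \text{ succeeds}] \;\le\; \Pr[i \in T] + \Pr[i \in S^\circ] \;\le\; \frac{|T| + |S^\circ|}{n} \;\le\; \frac{(q - 1) + c \eps n}{n}.
\]
Requiring this to be at least $\eps$ forces $q \ge (1-c)\eps n + 1 = \Omega(\eps n)$, which is the claimed bound.

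Neither ingredient is technically hard, so the main conceptual step is identifying dictators as the right hard family and, more importantly, recognizing that local search is fundamentally throttled by the rate at which it can enlarge the set of ``touched'' coordinates along a connected path from $x^\star$. Extending from deterministic to randomized algorithms is handled by the standard averaging step in Yao's principle against the uniform-over-dictators distribution.
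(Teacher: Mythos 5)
Your proposal is correct and matches the paper's proof essentially step for step: both use Yao's principle against a uniformly random dictator with $x^\star = 1^n$, bound the number of coordinates ever set to $0$ along the local-search trace by $q-1$ (an induction on the Hamming-adjacency constraint), and conclude by a union bound between the event that the dictator coordinate is touched and the event that it lands in the fixed canonical output. The only cosmetic difference is that you phrase the indistinguishability via a ``canonical trace,'' while the paper organizes the same reasoning into two short propositions before the union bound.
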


\subsubsection{Near-optimal certification algorithm for monotone functions}

Our main result is an algorithm for certifying monotone functions that is substantially more efficient than Angluin's: 

\begin{theorem}[Efficient certification of monotone functions] 
\label{thm:main} 
Given queries to a monotone function $f:\zo^n \to \zo$ with certificate complexity $\le k$ and an input~$x^\star$, our algorithm makes $O(k^8\log n)$ queries to $f$ and w.h.p.~returns a size-$k$ certificate for $f$'s value on $x^\star$. 
\end{theorem}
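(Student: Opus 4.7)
My plan is to replace Angluin's $n$-query local search by an iterative \emph{pruning} procedure that shrinks a candidate set $S \subseteq [n]$ by a constant factor per round using only $\mathrm{poly}(k)$ queries, so that after $O(\log n)$ rounds $|S| = \mathrm{poly}(k)$ and we can finish with an Angluin-style sweep. To set up the reduction, assume WLOG that $f(x^\star)=1$ and let $T = \{i : x^\star_i = 1\}$; certificates for $f$ at $x^\star$ are exactly minterms of $f$ contained in $T$, so the task is to find a minterm $M \subseteq T$ with $|M| \le k$. Throughout the algorithm I maintain the invariant that the current candidate set $S \subseteq T$ contains some size-$\le k$ minterm $M^\star$ of $f$; the initial $S = T$ satisfies this by hypothesis.

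For a single pruning round I would pick a large block $B \subseteq S$ and attempt to replace $S$ by $S \setminus B$. The naive certificate ``set $B$ to $0$ and query $f$'' is \emph{not} sufficient to preserve the invariant, since restriction can blow up certificate complexity (e.g., $f = x_1 \vee (x_2 \wedge \cdots \wedge x_n)$ has $C(f,\mathbf{1})=1$ but zeroing $x_1$ leaves a function with $C=n-1$). This is where the threshold-phenomena connection enters. Rather than query $f$ at a single boundary input, I query $f$ on many random inputs drawn from the $p$-biased product measure $\mu_p$ on $S$ (with the other coordinates fixed to $0$), for a carefully chosen $p = p(k)$. The intuition is that the random-sampling acceptance probability $\Pr_{x \sim \mu_p}[f(x) = 1]$ serves as a smooth, queryable proxy for ``$S$ contains a small minterm'', which is robust to the destruction of any particular minterm and hence carries more information than a single Boolean query.

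The technical core would be a \textbf{Threshold Lemma} stating: there is a $p = \Theta(1/k^{c_1})$ such that (a) any monotone function on $S$ with a size-$\le k$ minterm at $\mathbf{1}$ has $\Pr_{\mu_p}[f = 1] \ge 1/k^{c_2}$, while (b) if every minterm in $S$ has size $\gg k$ then this probability drops by a $\mathrm{poly}(k)$ factor. Given this separation, in each round I would (i) partition $S$ into roughly $k$ blocks, (ii) for each candidate $B_j$, estimate $\Pr_{\mu_p}[f = 1 \mid \text{restrict } B_j \text{ to } 0]$ to additive accuracy $1/\mathrm{poly}(k)$ using $\mathrm{poly}(k)$ queries by Chernoff, (iii) discard a $B_j$ whose estimate stays above the lemma's threshold — the existence of such a $B_j$ follows from pigeonhole, since the minterm $M^\star$ misses at least one block. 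This preserves the invariant and removes a $1/k$ fraction of $S$; iterating for $O(k \log n)$ rounds, each costing $\mathrm{poly}(k)$ queries, gives the $O(k^8 \log n)$ bound (with the exponent $8$ emerging from the product of the polynomial-in-$k$ factors used in picking $p$, the gap, the Chernoff sample size, and the shrinkage rate).

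The main obstacle is establishing the Threshold Lemma with quantitatively useful $\mathrm{poly}(k)$ gaps that hold uniformly over the monotone functions encountered along the recursion. Proving (a) is easy — a fixed size-$k$ minterm is preserved under $\mu_p$ with probability $p^k$, so tuning $p$ gives a $k^{-O(k)}$ bound, but we actually need $1/\mathrm{poly}(k)$, which forces us to exploit the global monotone structure (e.g., an averaging or Margulis–Russo-style argument over the many minterms a low-cert-complexity monotone function must possess) rather than just one minterm. Proving the converse (b), that destroying all small minterms forces a genuine drop in $\Pr_{\mu_p}[f = 1]$, is the truly delicate step and is where I expect most of the work to go; I would approach it by relating $\Pr_{\mu_p}[f=1]$ at the critical $p$ to the existence of short minterms via a sharp-threshold / total-influence inequality for monotone functions, so that the absence of short minterms provably suppresses the acceptance probability.
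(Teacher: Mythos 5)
Your approach is genuinely different from the paper's --- you try to prune a large candidate set $S$ top-down by zeroing out blocks, whereas the paper builds a certificate bottom-up starting from the empty set, repeatedly restricting the coordinate of largest $p$-biased influence and using the critical probability $p(f)$ as its progress measure --- but your plan has a concrete gap that the paper's adaptive choice of $p$ is specifically engineered to avoid.

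The proposed Threshold Lemma, part (a), is false as stated. Take $f$ to be the AND of $k$ variables, which has $C(f)=k$ and exactly one minterm $M^\star=\{1,\dots,k\}$ at $x^\star=\mathbf 1$. Then $\Pr_{\mu_p}[f=1]=p^{k}$, and for any fixed $p=\Theta(1/k^{c_1})$ (indeed for any $p$ bounded away from $1$) this is $\exp(-\Omega(k))$, far below $1/\mathrm{poly}(k)$. The ``exploit the many minterms'' escape route you suggest cannot save you here, because AND-of-$k$ is a low-certificate-complexity monotone function with only one minterm. Symmetrically, OR-of-$k$ would require $p$ to be tiny to get a useful converse (b). There is no single $p=p(k)$ that produces a $\mathrm{poly}(k)$ acceptance-probability gap uniformly over all restrictions encountered during the recursion, because the relevant scale of $p$ can lie anywhere in $(0,1)$ depending on whether the current restriction is more AND-like or OR-like, and it drifts as you prune.

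This is precisely what the paper's algorithm handles by estimating the \emph{critical probability} $p(f)$ of the current restriction (\Cref{lem:estimate-crit-prob,lem:approx_crit_prob}) and always working at bias $\approx p(f)$, where $\Var_{p}[f]=\Omega(1)$ and the OSSS inequality (\Cref{thm:OSSS}) guarantees a coordinate of $p$-biased influence $\ge \Omega(1/k^2)$; the Russo--Margulis lemma (\Cref{lemma:russo_margulis}) then shows that restricting such a coordinate shifts $p(f)$ by $\Omega(1/k^3)$ (\Cref{lem:p_progress}), so after $O(k^3)$ restrictions the function is killed. The paper also does not restrict the complement of a set to $0$ at the outset; it finds an \emph{arbitrary} $\mathrm{poly}(k)$-size certificate (\Cref{alg:arbitrary_certificate}), then uses the fact that every $0$-certificate meets every $1$-certificate (\Cref{fact:cert_intersections}) to argue $O(k)$ repetitions suffice to certify $x^\star$ specifically (\Cref{alg:particular_certificate}), and finally trims with Angluin on a $\mathrm{poly}(k)$-size set (\Cref{claim:trim_cert}). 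If you wanted to rescue your pruning scheme, you would at minimum have to let $p$ adapt to the current function the way the paper does; at that point your ``Threshold Lemma'' essentially becomes the OSSS/Russo--Margulis machinery, and the simpler path is the paper's bottom-up construction.
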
 

As one would expect given~\Cref{prop:local-lower-bound-v2}, our algorithm does not proceed by local search.  In fact, our algorithm takes the exact {\sl opposite} approach.  A local search algorithm for monotone functions starts with the trivial certificate $S = \{ i \in [n]\colon x^\star_i = f(x^\star_i)\}$ and trims it down in size by removing coordinates that are ``irrelevant to $S$".  Our algorithm proceeds the opposite way: we start with the empty set $S = \varnothing$ and add to it coordinates that we deem ``important".  We describe our approach in detail in~\Cref{sec:overview}.

We complement~\Cref{thm:main} with a lower bound showing that the query complexity of our algorithm is near optimal, even if the algorithm only has to return a certificate of size $\ell \gg k$: 

\begin{claim}[Lower bound for monotone functions]
\label{thm:approx-monotone-lb}
For any $c <1$ and any $k\le \ell \le n^c$, let $\mathcal{A}$ be an algorithm which, given query access to a monotone function $f : \zo^n\to \zo$ with certificate complexity $\le k$ and an input $x^\star$, returns a size-$\ell$ certificate for $f$'s value on $x^\star$ w.h.p.  The query complexity of $\mathcal{A}$ must be  $\Omega(k\log n)$. 
\end{claim}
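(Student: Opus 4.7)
The plan is a straightforward information-theoretic counting argument against a hard distribution of monotone conjunctions. For a uniformly random $S \subseteq [n]$ of size $k$, let $f_S(x) = \bigwedge_{i \in S} x_i$, a monotone function, and fix $x^\star = 1^n$. Then $f_S(x^\star) = 1$, and flipping any coordinate $i \in S$ of $x^\star$ to $0$ flips $f_S$'s output, so $C(f_S, x^\star) = k$ and $C(f_S) = k$. A size-$\ell$ set $T \subseteq [n]$ is a valid certificate for $f_S$'s value on $x^\star$ if and only if $S \subseteq T$, so the algorithm's task reduces to outputting a size-$\ell$ set containing the unknown random $k$-subset $S$.

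By Yao's minimax principle, to lower bound the worst-case randomized query complexity it suffices to lower bound the query complexity of deterministic algorithms succeeding w.h.p.~over this input distribution. Any deterministic algorithm making $q$ queries is a binary decision tree of depth $q$ with at most $2^q$ leaves, where each leaf $\tau$ is labeled with a fixed output set $T(\tau)$ of size $\ell$. The algorithm succeeds on input $f_S$ exactly when the leaf $\tau$ reached satisfies $S \subseteq T(\tau)$. At each leaf, the number of size-$k$ subsets $S$ with $S \subseteq T(\tau)$ is at most $\binom{\ell}{k}$, so summing over all $\le 2^q$ leaves, the algorithm succeeds on at most $2^q \binom{\ell}{k}$ of the $\binom{n}{k}$ possible choices of $S$. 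For success probability at least $1/2$ this forces $2^q \binom{\ell}{k} \ge \binom{n}{k}/2$.

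Rearranging gives $q \ge \log_2\!\big(\binom{n}{k}/\binom{\ell}{k}\big) - 1$. Since $\binom{n}{k}/\binom{\ell}{k} = \prod_{i=0}^{k-1}(n-i)/(\ell-i) \ge (n/\ell)^k$ and $\ell \le n^c$, we obtain $q \ge k(1-c)\log_2 n - 1 = \Omega(k \log n)$ for the fixed constant $c < 1$. I do not anticipate a significant obstacle: the only care needed is the Yao's principle reduction to deterministic algorithms and the observation that the per-leaf counting bound $\binom{\ell}{k}$ applies uniformly to adaptive query algorithms, because at each leaf only the pair (transcript, output set) matters, not how the queries were chosen.
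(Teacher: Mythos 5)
Your proposal is correct and follows essentially the same approach as the paper's proof of \Cref{lem:monotone-lb}: the same hard distribution (uniform monotone conjunction of $k$ variables with $x^\star = 1^n$), the same observation that a set certifies iff it contains the conjunction's support, and the same Yao-plus-counting argument bounding success probability by $2^q \binom{\ell}{k}/\binom{n}{k}$. The only cosmetic difference is that you bound $\binom{n}{k}/\binom{\ell}{k} \ge (n/\ell)^k$ directly via the product formula, whereas the paper bounds $\binom{\ell}{k}/\binom{n}{k} \le (e\ell/n)^k$; both yield $\Omega(k\log n)$.
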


\subsubsection{Algorithms and lower bounds for other settings} 

Finally, we study the extent to which the setting of~\Cref{thm:main} can be relaxed: what if $f$ is an arbitrary function, one that is not necessarily monotone?  What if the algorithm is only given uniformly-distributed random examples $(\bx,f(\bx))$ instead of query access to $f$?  We obtain fairly tight upper and lower bounds for both these settings. \Cref{table} summarizes these bounds and contrasts them with our results as described in the previous subsection:

\vspace{15pt}
\begin{table}[H]
  \captionsetup{width=.9\linewidth}
\begin{adjustwidth}{-4em}{}
\renewcommand{\arraystretch}{1.9}
\centering
\begin{tabular}{|@{}c@{}|c|@{}c@{}|}
\hline
Algorithm is given:  & Upper bound  & Lower bound   \\ \hline \hline 
 \begin{tabular}{@{}c@{}}
Queries to monotone $f$,  \vspace{-10pt} \\
and proceeds by local search \end{tabular} & ~~Angluin's algorithm: $n$ queries~~& \begin{tabular}{c} 
\Cref{prop:local-lower-bound-v2}: $\Omega(n)$ queries 
\end{tabular} 
\\ 
\hline  Queries to monotone $f$ & ~~\Cref{thm:main}: $O(k^8\log n)$ queries~~& ~~\Cref{thm:approx-monotone-lb}: $\Omega(k\log n)$ queries~~\\ \hline
\begin{tabular}{c}
    Queries to  arbitrary $f$  \\ \hline 
     ~~Random examples of monotone $f$~~  
\end{tabular} & 
\Cref{claim:random-examples-ub}: $O(2^k k\log n)$ examples
& 
\begin{tabular}{c}
\Cref{claim:query-lb}: $\Omega(2^k + k\log n)$ queries  \\ \hline 
\Cref{claim:random-examples-lb}: $\Omega(2^k + k\log n)$ examples  
\end{tabular}
\\ 
\hline 
  \end{tabular} 
\end{adjustwidth}
\caption{Bounds on the query complexity of certification. }  
\label{table}
\end{table}

The exponential-in-$k$ lower bounds for these alternative settings (the last two rows of~\Cref{table}) show that some assumption on the structure of $f$, such as monotonicity, and query access to it are both necessary for the polynomial dependence on $k$ that we achieve in~\Cref{thm:main}. As in~\Cref{thm:approx-monotone-lb}, these lower bounds hold even if the algorithm is only required to return a size-$\ell$ certificate where $\ell$ can be significantly larger than~$k$; we defer the precise statements to the body of the paper.  

\subsection{Prior work on ``approximate" and exact certificates} 
\label{sec:prior} 

We begin by discussing two works from the explainable machine learning literature,~\cite{RSG18} and~\cite{BLT-Explanations}, that are direct precursors to ours. 

\vspace{-5pt} 
\paragraph{\cite{RSG18}.} Ribero, Singh, and Guestrin were the first to propose certificates as explanations for black box machine learning models. They introduced a relaxed notion of certificates that allows for errors\footnote{They termed such explanations {\sl anchors}, which has since become standard in the explainable machine learning literature. We stick with the term certificates in our description of their results.}:

\begin{definition}[Approximate certificates~\cite{RSG18}]
For a function $f : \zo^n \to \zo$, an input $x^\star$, a distribution $\mathcal{D}$ over $\zo^n$, and $\eps > 0$, we say that a set $S\sse [n]$ is an {\sl $\eps$-error certificate for $f$'s value on $x^\star$ with respect to $\mathcal{D}$} if  $\ds\Prx_{\by\sim \mathcal{D}}[\,f(\by)\ne f(x^\star)\mid \by_S = x^\star_S\,] \le \eps.$
\end{definition}

 \cite{RSG18}'s work was empirical in nature:  their paper demonstrated, through experiments and a user study, the effectiveness of succinct certificates as explanations.  
 Their work also gave heuristics for finding succinct approximate certificates, but these heuristics do not come with provable performance guarantees.

\cite{RSG18}'s work has been influential in explainable machine learning. For more, see the discussion of their work in the book~\cite[Chapter~\S5.9]{molnar2020interpretable}, and the open source library~\cite{Alibi,Alibi-webpage} for implementation details of their heuristics.

\paragraph{\cite{BLT-Explanations}.} Motivated by~\cite{RSG18},~\cite{BLT-Explanations}  gave an algorithm for finding succinct approximate certificates that comes with performance guarantees with respect to the {\sl uniform distribution}:

\begin{theorem}[\cite{BLT-Explanations}'s approximate certification algorithm; informal]
\label{thm:BLT}
Let $\mathcal{U}$ denote the uniform distribution over $\zo^n$ and $\eps > 0$.  Given query access to~$f : \zo^n \to\zo$ with ``$\eps$-error certificate complexity" $\le k$ and an input~$x^\star$, \cite{BLT-Explanations}'s algorithm makes $\poly(k,1/\eps,n)$ queries to~$f$ and returns a set of coordinates $S(x^\star)$.   

With probability $\ge 1-\eps$ over $\bx^\star \sim \mathcal{U}$, the set $S(\bx^\star)$ is an $\eps$-error certificate for $f$'s value on~$\bx^\star$ with respect to $\mathcal{U}$ and   $|S(\bx^\star)|\le \poly(k,1/\eps)$.   
\end{theorem}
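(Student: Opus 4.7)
The plan is an influence-based greedy algorithm. Initialize $S = \varnothing$. At each iteration, for each coordinate $i \notin S$, use fresh random queries to estimate the conditional influence
\[
\widehat{\mathrm{Inf}}_i \coloneqq \Prx_{\by \sim \mathcal{U}}\big[\,f(\by^{(i)}) \ne f(\by) \,\big|\, \by_S = x^\star_S\,\big],
\]
where $\by^{(i)}$ denotes $\by$ with its $i$-th coordinate flipped. Add the coordinate with the largest estimate to $S$, and halt once the estimated bias $\widehat{\mu}_S \coloneqq \Pr_{\by}[\,f(\by)=1 \mid \by_S = x^\star_S\,]$ lies within $\eps/2$ of $\{0,1\}$; then output~$S$.

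Correctness splits into a validity check and a size bound. For validity, Hoeffding concentration on $\poly(k,1/\eps)$ fresh samples combined with the halting condition guarantees that any returned $S$ is w.h.p.\ an $\eps$-error certificate with respect to $\mathcal{U}$. For the size bound I would analyze a potential such as the conditional variance $\Phi(S) \coloneqq \mathrm{Var}_{\by}[\,f(\by) \mid \by_S = x^\star_S\,]$. The hypothesis that $f$ has $\eps$-error certificate complexity $\le k$ implies, by averaging over $\bx^\star \sim \mathcal{U}$, that outside an $\eps$-fraction of inputs there exists a target certificate $T \sse [n]$ of size $\le k$ whose coordinates together drive $\Phi$ below $\eps$. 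By pigeonhole, at least one coordinate in $T\setminus S$ accounts for an $\Omega(1/k)$ fraction of the remaining variance, and the classical variance-decomposition/influence relationship then implies that this coordinate has true influence at least $\Omega(\Phi(S)/k)$. The greedy step picks one at least that good, so $\Phi$ decreases by a $\poly(1/k)$ factor per iteration, terminating within $\poly(k,1/\eps)$ rounds.

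The main obstacle is making this potential argument quantitatively tight: greedy influence-based rules are cousins of the ID3 decision-tree heuristic, whose analyses typically lose super-polynomial factors without additional structural assumptions, so care is needed to keep the exponent on $k$ and $1/\eps$ polynomial. A clean route is to iterate the potential on the variance (rather than the error directly) and exploit that an influence-$\eta$ coordinate reduces the variance by $\Omega(\eta^2)$ whenever it splits the conditional mass non-trivially, coupled with the pigeonhole step above for the lower bound on $\eta$. A secondary subtlety is the sampling error: each iteration must estimate up to $n$ influences to additive accuracy $\Theta(\eps/k)$ using $\poly(k,1/\eps)$ queries apiece, so a union bound over $\poly(k,1/\eps)$ iterations and $n$ coordinates yields the overall $\poly(k,1/\eps,n)$ query budget stated in the theorem, while the final step amplifies the per-input failure probability to $\eps$ via a final validation round of $\poly(k,1/\eps)$ samples at the returned~$S$.
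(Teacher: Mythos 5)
This theorem is a restatement of a result from \cite{BLT-Explanations}, which the paper \emph{cites} but does not prove, so there is no internal proof to compare against --- only the paper's one-paragraph description of that algorithm. Your algorithm (greedily add the coordinate of approximately largest influence on the conditioned subfunction; recurse) matches that description in outline, modulo the footnote that \cite{BLT-Explanations} actually uses a notion of ``noisy influence'' rather than the plain conditional flip influence you estimate.

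The genuine gap is the pigeonhole step underlying your size bound. You assert that because some size-$k$ set $T$ drives the conditional variance below $\eps$, pigeonhole yields a coordinate in $T$ ``accounting for'' an $\Omega(1/k)$ fraction of the remaining variance, and hence having influence at least $\Omega(1/k)$ times the current conditional variance. Variance does not decompose additively over the coordinates of a certificate, so this pigeonhole has no formal content: the Poincar\'e-type inequality $\sum_i \Inf_i(f) \ge \Var[f]$ yields only $\max_i \Inf_i(f) \ge \Var[f]/n$, and the mere existence of a small certificate does not by itself concentrate first-level Fourier mass on that certificate's coordinates (it controls sensitivity, not degree-one weight). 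To obtain a $\poly(k)$ lower bound on the maximal influence one needs a real structural inequality --- e.g.\ OSSS, $\max_i \Inf_{i,p}^{\sim}[f] \ge \Var_p[f]/D(f)$ combined with $D(f)\le C(f)^2$, which is precisely the tool this paper invokes for its own exact-certification algorithm (\Cref{thm:OSSS}, \Cref{lem:inf_lower_bound}). In the approximate setting of \cite{BLT-Explanations} there is a further wrinkle: $\eps$-error certificate complexity does not cleanly control decision-tree depth, so even OSSS does not apply off the shelf, which is presumably why they pass to a noisy-influence potential rather than raw conditional variance. You flag this as ``the main obstacle,'' and indeed it is exactly the point at which the argument as written does not go through.
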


Comparing \Cref{thm:BLT} to our algorithm in~\Cref{thm:main}, we see that~\Cref{thm:BLT} applies to all functions whereas~\Cref{thm:main} only applies to monotone ones.  On the other hand, there are two sources of errors in~\Cref{thm:BLT}, neither of which are present in~\Cref{thm:main}: the guarantees of~\cite{BLT-Explanations}'s algorithm only hold for most $x^\star$ and not for all of them, and the certificates returned are $\eps$-error certificates and not actual certificates.   Even if one is willing to tolerate both sources of errors, the fact that they are measured with respect to the uniform distribution remains a significant shortcoming---this was identified in~\cite{BLT-Explanations} as the main limitation of their result.

A primary motivation for our work was to develop certification algorithms that, like~\cite{BLT-Explanations}'s, come with provable performance guarantees, but where these guarantees hold in the much more challenging {\sl errorless} setting.  

\paragraph{Other related work on finding certificates.} There has been significant work on finding prime implicants in the ML and AI literature (see e.g. \cite{I20, DH20, INS20, INM19} and the references therein), including for monotone functions \cite{SCD18,MGC+21}.  In our terminology, a prime implicant is a $1$-certificate which is minimal under set inclusion (relatedly a minimal $0$-certificate is a prime implicant for $\lnot f$). These algorithms for computing prime implicants all have worst-case query complexity and runtime that is at least linear in $n$. In contrast, our algorithm has only a logarithmic dependence on $n$ and always returns a prime implicant.

\section{Overview of our algorithm and its analysis} \label{sec:overview}

Before describing our algorithm, we first give an overview of Angluin's and~\cite{BLT-Explanations}'s algorithms, in tandem with a discussion of how these algorithms led to ours and how ours differs from them.  Throughout this section, let $f: \zo^n\to\zo$ be a monotone function and suppose without loss of generality that $f(x^\star) =1$ for the input~$x^\star$ that we seek to certify. 

\paragraph{Angluin's algorithm.}  By the monotonicity of $f$, the set $S_{x^\star} \coloneqq \{ i\in [n]\colon x^\star_i = 1\}$ is certainly a certificate for $f$'s value at~$x^\star$.  The assumption that $f$ has certificate complexity $\le k$ implies the existence of at least one subset $T\sse S_{x^\star}$ of size $\le k$  that remains a certificate for $f$'s value at $x^\star$. The goal of Angluin's algorithm is to find one of them.

\begin{definition}[Irrelevant coordinate of a certificate] 
\label{def:irrelevance} 
For a function $f$, an input $x^\star$, a certificate $S \sse [n]$ for $f$'s value at $x^\star$, and a coordinate $i\in S$, we say that $i$ is {\sl irrelevant to $S$} if $S\setminus \{ i\}$ remains a certificate for $f$'s value at $x^\star$, and otherwise say that it is {\sl relevant}. 
\end{definition}

 Angluin's algorithm starts with $S_{x^\star}$ and trims it down in size, removing irrelevant coordinates one by one, all the while maintaining the invariant that the current set remains a certificate.  A naive implementation of this plan results in a query complexity of $\Theta(|S_{x^\star}|^2)$.  A simple but key observation yields an improved query complexity of $O(|S_{x^\star}|) \le O(n)$: if $i$ is relevant for a certificate $S$, it remains relevant for any certificate $S'\sse S$.  Therefore, each coordinate $i\in S_{x^\star}$ is processed at exactly once throughout the entire execution of the algorithm.  (For completeness, we give a formal description of Angluin's algorithm and its analysis in~\Cref{appendix:angluin}.)  

\paragraph{\cite{BLT-Explanations}'s approximate certification algorithm.} \cite{BLT-Explanations}'s algorithm, as well as ours, takes an approach that is the  opposite of Angluin's, and indeed, the opposite of all local search algorithms.  Instead of starting with $S_{x^\star}$ and  removing irrelevant coordinates, we start with the empty set and add to it coordinates that we deem ``important".  The notion of {\sl influence} from the analysis of boolean functions provides a way to quantify the importance of coordinates:  

\begin{definition}[Influence] 
\label{def:influence} 
For a function $f: \zo^n\to\zo$ and a coordinate $i\in [n]$, the {\sl influence of $i$ on $f$} is the quantity $\Inf_i(f) \coloneqq \ds\Prx_{\mathrm{uniform}~\bx}[f(\bx)\ne f(\bx^{\oplus i})]$, where $\bx^{\oplus i}$ denotes $\bx$ with its $i$-th coordinate flipped. 
\end{definition} 

\cite{BLT-Explanations}'s algorithm is simple: using queries to $f$, determine the coordinate $i$ with (approximately) the largest influence\footnote{This is slightly imprecise, since~\cite{BLT-Explanations} actually uses a notion of ``noisy influence" which generalizes~\Cref{def:influence}.  We do not need this generalization in this work.} on $f$; restrict the $i$-th coordinate of $f$ according to $x^\star_i$ and recurse. \cite{BLT-Explanations} proved that for most $x^\star$'s, running this recursion to a certain depth suffices to guarantee a low-error certificate for $f$'s value on $x^\star$, where ``most" and ``low-error" are both with respect to the uniform distribution.


\subsection{The three components of our algorithm} 

The difference between our setting  and~\cite{BLT-Explanations}'s is akin to the difference between  exact and uniform-distribution learning: exact learning is more challenging than distribution-independent learning, which is in turn more challenging than uniform-distribution learning.  \cite{BLT-Explanations}'s algorithm can be seen to fail badly in the setting of zero-error certificates: there are monotone functions $f$ with certificate complexity $k\ll n$ such that their recursion has to be run to the maximum depth of $n$ (corresponding to the trivial certificate $S=[n]$) in order to return a zero-error certificate.

Our algorithm is more involved than~\cite{BLT-Explanations}'s and has three main  components: 

\begin{enumerate} 
\item {\sl Finding a small certificate.} This component is independent of the input $x^\star$ that we seek to certify.  We design an algorithm that finds an {\sl arbitrary} $\poly(k)$-size certificate for a monotone~$f$---by arbitrary, we mean that this can be a certificate for $f$'s value on any input, not necessarily a specific one. In other words, this is a set $S\sse [n]$ and a bit $b\in \zo$ such that $f$ with all the coordinates $i\in S$ restricted to $b$ is a constant function.  

\item {\sl Finding a small certificate for $x^\star$.}  We then show how the algorithm above can be called $O(k)$ times to find a $\poly(k)$-size certificate for $f$'s value on $x^\star$.  The fact that $O(k)$ calls suffice follows from a basic result in query complexity, that every $1$-certificate and $0$-certificate of a function share at least one variable. (We defer the definitions of these terms to the body of the paper.)

\item {\sl Trimming the certificate.} Finally, we use Angluin's algorithm to trim the size of this certificate from $\poly(k)$ down to $\le k$. Crucially, we enter this trimming process with a certificate whose size is already bounded by $\le \poly(k)$, in contrast to Angluin's algorithm which starts with the certificate $S_{x^\star}$, the size of which can be as large as $n$.  The number of queries that we require for this step is therefore only $\le \poly(k)$, independent of $n$.
\end{enumerate} 

\subsubsection{Killing a monotone function}

We elaborate on the first component; the other two are fairly straightforward.  It will be useful for us to view this as the task of ``killing" a monotone function efficiently: using as few queries to $f$ as possible, find an assignment to a small set of coordinates that {\sl kills} $f$, meaning that the corresponding restriction of~$f$ is a constant function.

Our algorithm for this step is most easily understood from the perspective of {\sl threshold phenomena} in monotone functions---this connection is the key new ingredient in our work.  A wealth of techniques has been developed for the study of this topic, which is central to the theory of random graphs and percolation theory.  We will only need a few of the fundamentals. 

Every monotone function $f : \zo^n\to \zo$ can be associated with a function $\Phi_f : [0,1] \mapsto [0,1]$, 
\[ \Phi_f(p) \coloneqq \Ex_{\bx \sim \zo^n_{p}}[ f(\bx) ], \] 
where $\zo^n_{p}$ denotes the $p$-biased product distribution over $\zo^n$.  If $f$ is non-constant, this is a strictly increasing function of $p$, going from $0$ to $1$ as $p$ goes from $0$ to $1$. 

\begin{definition}[Critical probability] 
Let $f : \zo^n\to\zo$ be a non-constant monotone function.  The {\sl critical probability} of $f$ is the unique value $p(f) \in (0,1)$ for which $\Phi_f(p(f)) = \frac1{2}$.
\end{definition}

We use the critical probability of $f$ as a proxy for how close to constant it is, i.e.~how ``dead" the function is.  If $f$'s critical probability is $\ge \frac1{2}$, our algorithm kills it to the constant-$0$ function by driving its critical probability towards $1$; otherwise, we kill it to the constant-$1$ function by driving its critical probability towards $0$.   Our algorithm for doing so is similar in spirit to~\cite{BLT-Explanations}'s algorithm,  with the crucial difference being that ours ``continually adapts" to the critical probability of $f$ and its subfunctions: 
\begin{enumerate} 
\item Estimate the critical probability $p(f)$ of $f$. 
\item Determine the coordinate $i$ with approximately the largest {\sl $p(f)$-biased influence} on $f$.  The $p$-biased influence of a coordinate is the generalization of~\Cref{def:influence} to $p$-biased product distributions over $\zo^n$. 
\item Recurse on the subfunction $f_{x_i=b}$, the restriction of $f$ to $x_i = b$, where $b = 0$ if $p(f)\ge\frac1{2}$ and $b=1$ otherwise. 
\end{enumerate} 

Our analysis of this process relies on two basic results from the study of graph properties and percolation. We first use the O'Donnell--Saks--Schramm--Servedio inequality~\cite{OSSS05} to show that restricting~$f$ by the coordinate with the largest $p(f)$-biased influence changes its $p(f)$-biased expectation substantially: 
\[ \bigg| \Ex_{\text{$p(f)$-biased $\bx$}}[f(\bx)] - \Ex_{\text{$p(f)$-biased $\bx$}}[f_{x_i=b}(\bx)]\bigg|  \ge \Omega\left(\frac1{k^2}\right).\] 
We then show, via the Russo--Margulis lemma~\cite{Mar74,Rus78}, that the above implies that the critical probability of $f$ changes substantially: 
\begin{equation}  |p(f)-p(f_{x_i=b})| \ge \Omega\left(\frac1{k^3}\right). \label{eq:critical-probability-changes}
\end{equation} 
It follows that our algorithm kills $f$ within $O(k^3)$ recursive calls.  \Cref{fig:proof_slope_bound} on~\cpageref{fig:proof_slope_bound} illustrates our proof strategy. 

\paragraph{A slight optimization.}  The query complexity of this algorithm can be bounded by $O(k^8\log k\log n)$.  To shave off a factor of $\log k$, we consider an optimization where we estimate the critical probability of $f$ just once, at the very beginning of the algorithm, rather than in each recursive call.  Throughout the recursive process, we assume conservatively that each restriction only changes the critical probability by the minimum amount guaranteed by~\Cref{eq:critical-probability-changes}.  A simple adjustment of our analysis accounts for this modification (i.e.~for the possibility that the true critical probability drifts away from what we assume it to be as we recurse).

\section{Discussion and future work} 

Concrete directions for future work include closing the remaining gap between our upper and lower bounds of $O(k^8\log n)$ and $\Omega(k\log n)$, as well as  identifying other natural classes of functions that admit efficient certification algorithms.  

More broadly, a novel aspect of our techniques is the use of concepts and results from the study of threshold phenomena: $p$-biased analysis, the critical probability of monotone functions, the Russo--Margulis lemma, etc.  While the certification problem was the focus of this work, we speculate that there are further applications of this toolkit in learning theory, where monotonicity of the target function is a common assumption.  For example, while the variance of function is often used as progress measure in learning theory, our work suggests that for monotone target functions, its critical probability could be a more useful notion.  Can our idea of ``continually adapting" to the critical probability be used to design new learning algorithms?

Finally, circling back to the motivation for the certification problem, we mention that there is a growing flurry of work in explainable machine learning, the vast majority of which is empirical in nature; see slide 7 of~\cite{Kim-tutorial} for some staggering numbers.  Hallmarks of problems in this area---query access to a black box $f$ (``post-hoc explanations"); the focus on $f$'s values at and near a specific input $x^\star$ (``local explanations"); various notions of influence of variables (``feature attribution"); etc.---strongly suggest the potential for connections to areas of theoretical computer science such as query complexity, the analysis of boolean functions, learning theory, and sublinear algorithms.  Our work fleshes out a few of these connections, but we believe that there are more near at hand.

\section{Preliminaries}

We use {\bf boldface} often denote random variables (e.g. $\bx\sim\zo^n$) and we write ``w.h.p." to mean with probability $\ge 1-1/\poly(n)$.  We write $a = b\pm \eps$ as shorthand for $a \in [b-\eps,b+\eps]$.

\paragraph{Boolean function complexity.} 

In addition to certificate complexity (\Cref{def:cert-complexity}), we will need a few other standard notions and facts from boolean function complexity.  For an in-depth treatment (including proofs of the facts below), see~\cite{BdW02,Juk12}.

For a function $f : \zo^n \to \zo$ and an input $x\in \zo^n$, the {\sl sensitivity of $f$ at $x$} is the quantity 
\[ \Sens_f(x)=|\{i\in [n]:f(x)\neq f(x^{\oplus i})\}|, \] 
where $x^{\oplus i}$ denotes $f$ with its $i$-th coordinate flipped. 

\begin{proposition}[Sensitivity and certificate complexity]\label{prop:sens-at-most-cert} 
For all functions $f : \zo^n\to \zo$ and inputs $x\in\zo^n$, we have $\Sens_f(x) \le C_f(x)$. 
\end{proposition}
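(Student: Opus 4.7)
The plan is to show that every coordinate on which $f$ is sensitive at $x$ must lie in every certificate for $f$'s value at $x$; once this is established, the sensitivity at $x$ is bounded by the size of any certificate, and in particular by the minimum certificate size $C_f(x)$.

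First I would fix a minimum-size certificate $S \sse [n]$ for $f$'s value at $x$, so that $|S| = C_f(x)$ and $f(y) = f(x)$ for every $y$ with $y_S = x_S$. Then I would let $i \in [n]$ be any sensitive coordinate at $x$, meaning $f(x) \neq f(x^{\oplus i})$. The key observation is that if $i$ were not in $S$, then $x^{\oplus i}$ would agree with $x$ on all coordinates of $S$ (since flipping a coordinate outside $S$ does not affect $y_S = x_S$), and the certificate property would force $f(x^{\oplus i}) = f(x)$, contradicting sensitivity. Hence $i \in S$.

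This argument shows $\{i \in [n] : f(x) \neq f(x^{\oplus i})\} \sse S$, so $\Sens_f(x) \le |S| = C_f(x)$, which is the desired inequality.

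There is no real obstacle here: the whole proof is a one-line application of the definitions of sensitivity and certificate, using the contrapositive that a non-certificate coordinate cannot be sensitive. The only thing to be careful about is keeping the direction of the implication straight (sensitive coordinates are contained in the certificate, not the other way around).
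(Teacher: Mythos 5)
Your proof is correct and is the standard argument; the paper itself does not prove this proposition but defers to the references~\cite{BdW02,Juk12}, which use essentially this same observation that every sensitive coordinate must belong to every certificate.
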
 

For a function $f:\zo^n\to\zo$, we write $D(f)$ to denote its {\sl decision tree complexity}, the depth of the shallowest decision tree that computes $f$.

\begin{fact}[Decision tree complexity and certificate complexity]\label{fact:dt-at-most-cert2}
For all functions $f: \zo^n\to\zo$, we have $D(f) \le C(f)^2$. 
\end{fact}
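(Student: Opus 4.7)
The plan is to prove the slightly stronger statement $D(f) \le C_0(f) \cdot C_1(f)$, where $C_b(f)$ denotes the maximum certificate complexity over inputs $x$ with $f(x) = b$; since both are bounded by $C(f)$, the stated bound $D(f) \le C(f)^2$ follows. I would proceed by induction on $C_1(f)$ (base case: $C_1(f) = 0$ means $f$ has no 1-inputs or is constant 1, so $D(f) = 0$).

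For the inductive step, the construction of the depth-efficient decision tree rests on one key combinatorial observation: \emph{every 1-certificate of $f$ must share at least one coordinate with every 0-certificate of $f$}. Indeed, if a 1-certificate $T_1$ (pinning values $a$) and a 0-certificate $T_0$ (pinning values $b$) were disjoint, we could produce an input agreeing with $a$ on $T_1$ and $b$ on $T_0$, forcing $f$ to simultaneously output $1$ and $0$. I would pick any $x^\star$ with $f(x^\star) = 0$, let $S \subseteq [n]$ be a minimum 0-certificate for $x^\star$ with $|S| \le C_0(f)$, and design the decision tree so that its first $|S|$ queries read exactly the coordinates in $S$.

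After these $|S|$ queries, each leaf of this initial segment corresponds to one of the $2^{|S|}$ possible assignments to $S$. If the assignment matches $x^\star_S$ we output $0$ and halt. Otherwise we recurse on the restricted subfunction $g$ obtained by fixing $S$ to the queried values. Here is where the shared-coordinate observation pays off: for every 1-input $y$ of $g$, the combined input (the assignment to $S$ together with $y$) is a 1-input of $f$ with a certificate of size $\le C_1(f)$, and that certificate must contain at least one coordinate from $S$, so its projection onto the free coordinates (which is a 1-certificate for $g$ at $y$) has size $\le C_1(f) - 1$. Thus $C_1(g) \le C_1(f) - 1$, while trivially $C_0(g) \le C_0(f)$.

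By the inductive hypothesis, each such $g$ admits a decision tree of depth $\le C_0(f) \cdot (C_1(f) - 1)$, so the total depth is at most $|S| + C_0(f)(C_1(f) - 1) \le C_0(f) \cdot C_1(f)$, completing the induction. The main subtle point (and the only thing that needs care) is the projection argument in the previous paragraph: one must verify that restricting a minimum 1-certificate of the combined input to the free coordinates really does yield a valid 1-certificate for $g$, which follows directly from the definition of certificate once one checks that the restriction behaves correctly. Everything else is a clean bookkeeping of depths and an invocation of the disjointness obstruction between $0$- and $1$-certificates.
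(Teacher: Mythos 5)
Your proof is correct and is the standard argument for the sharper bound $D(f) \le C_0(f)\cdot C_1(f)$: the paper does not prove this fact itself but defers to \cite{BdW02,Juk12}, and your argument---repeatedly querying a $0$-certificate, noting that every $1$-certificate must intersect it (so $C_1$ of each resulting restriction drops by one, while $C_0$ cannot increase), and bounding the total depth by $C_0(f)\cdot C_1(f)$---is exactly the proof found there. The intersection observation you rely on appears in this paper separately as \Cref{fact:cert_intersections}.
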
 

We also will occasionally distinguish between $0$-certificates and $1$-certificates.
\begin{definition}[$0,1$-certificate complexity]
For a function $f:\zo^n\to\zo$ and input $x\in\zo^n$, a certificate $S\subseteq [n]$ of $x$ is a $0$-certificate if $f(x)=0$ and $1$-certificate if $f(x)=1$. The $0$-certificate complexity and $1$-certificate complexity of $f$ are defined as 
$$
C_0(f)\coloneqq \max_{x\in f^{-1}(0)}\{C_f(x)\}~\text{ and }~C_1(f)\coloneqq \max_{x\in f^{-1}(1)}\{C_f(x)\}
$$
respectively.
\end{definition}

\paragraph{$p$-biased analysis.} We write $\{0,1\}_{p}^n$ to denote the $p$-biased product distribution on $n$ bit strings (that is, each bit is $1$ with probability $p$) and $\Pr_p$ to denote the $p$-biased probability measure on strings. When sampling from $\{0,1\}_{p}^n$, we will often just write the subscript $p$. In particular, $\E_p[f]$ denotes the expectation of $f$ with respect to $\bx \sim \{0,1\}_{p}^n$ and similarly $\Var_p[f]=\E_p[f^2]-\E_p[f]^2=\E_p[f](1-\E_p[f])$ is the $p$-biased variance of $f$.

We'll use two common notions of influence. 

\begin{definition}[$p$-biased flip influence; generalization of~\Cref{def:influence}]
\label{def:p-biased-influence}
Let $f : \zo^n\to\zo$ be a function, $p\in [0,1]$, and $i\in [n]$.  The {\sl $p$-biased flip influence of $i$ on $f$} is the quantity: 
\[ \Inf_{i,p}^{\oplus}[f]\coloneqq \Pr_p[f(\bx)\ne f(\bx^{\oplus i})]. \] 
\end{definition} 

\begin{definition}[$p$-biased rerandomized influence]
\label{def:p-biased-random-influence}
Let $f : \zo^n\to\zo$ be a function, $p\in [0,1]$, and $i\in [n]$. The $p$-biased rerandomized influence of $i$ on $f$ is the quantity:
$$
\Inf_{i,p}^{\sim}[f]\coloneqq 2\Pr_p[f(\bx)\ne f(\bx^{\sim i})]
$$
where $\bx^{\sim i}$ is the string $\bx$ with its $i$-th coordinate \textit{rerandomized} according to $\zo_p$.
\end{definition}

For each notion of influence, the \textit{total} influence is the sum of the influences of all the coordinates. We write $\Inf_p^{\oplus}[f]$ and $\Inf_p^{\sim}[f]$ for the total flip and rerandomized influence, respectively.

We record a few basic properties of $p$-biased influence and defer their proofs to~\Cref{appendix:p_analysis}:  

\begin{proposition}
\label{prop:basic_props}
For any boolean function $f:\{0,1\}^n\to\{0,1\}$ and $i\in [n]$, 
\begin{enumprop}
    \item $\Inf_p^{\oplus}[f]={\E}_p[\Sens_f(\bx)]$. \label{prop:basic_props:exp_sens}
    \item $\Inf_{i,p}^{\oplus}[f]=\Pr_p[f_{x_i=1}(\bx)\neq f_{x_i=0}(\bx)]$. \label{prop:basic_props:infi_pr}
    \item $\Inf_{i,p}^{\sim}[f]=4p(1-p)\Inf_{i,p}^{\oplus}[f]$. \label{prop:basic_props:infr=inff}
    \item $\Inf_p^{\sim}[f]\ge \Var_p[f]$. \label{prop:basic_props:inf_var}
\end{enumprop}
If $f$ is monotone, 
\begin{enumprop}[resume,itemsep=0.5em]
    \item $\E_p[f]=\E_p[f_{x_i=0}]+p\cdot \Inf_{i,p}^{\oplus}[f]=\E_p[f_{x_i=1}]-(1-p)\cdot \Inf_{i,p}^{\oplus}[f]$. \label{prop:basic_props:exp_monotone}
\end{enumprop}
\vspace{0.5em}
\end{proposition}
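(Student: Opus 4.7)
My plan is to dispatch the five claims in order, since each builds naturally on the preceding definitions. For part~\ref{prop:basic_props:exp_sens}, I would unpack both sides: the total flip influence is $\sum_{i} \Pr_p[f(\bx)\ne f(\bx^{\oplus i})]$, and the expected sensitivity is $\E_p\big[\sum_i \mathbb{1}[f(\bx)\ne f(\bx^{\oplus i})]\big]$, so linearity of expectation gives the identity immediately. For part~\ref{prop:basic_props:infi_pr}, the key observation is that the event $\{f(\bx)\ne f(\bx^{\oplus i})\}$ is determined by the coordinates $\bx_{-i}$ and equals $\{f_{x_i=0}(\bx_{-i})\ne f_{x_i=1}(\bx_{-i})\}$; since the right-hand side does not depend on $\bx_i$, the $p$-biased probability over the full string matches the probability over $\bx_{-i}$, which is how the right-hand side is normally interpreted. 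Part~\ref{prop:basic_props:infr=inff} follows by conditioning on whether the rerandomization flips the $i$-th bit: $\bx_i\ne\bx^{\sim i}_i$ happens with probability $2p(1-p)$ (the two cases $(0,1)$ and $(1,0)$), and conditioned on that the probability of $f$ changing is exactly $\Inf^{\oplus}_{i,p}[f]$ by part~\ref{prop:basic_props:infi_pr}; multiplying by the factor of $2$ in the definition of $\Inf^{\sim}$ yields $4p(1-p)\Inf^{\oplus}_{i,p}[f]$.

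For part~\ref{prop:basic_props:inf_var} the natural tool is the Efron--Stein inequality, which states that for independent $\bx,\bx'\sim\zo^n_p$ one has $\Var_p[f]\le \tfrac12 \sum_i \E_p[(f(\bx)-f(\bx^{\sim i}))^2]$. Because $f$ is $\zo$-valued the squared difference is an indicator, so $\E_p[(f(\bx)-f(\bx^{\sim i}))^2]=\Pr_p[f(\bx)\ne f(\bx^{\sim i})]=\tfrac12\Inf^{\sim}_{i,p}[f]$ by definition. Summing and absorbing the constants gives $\Var_p[f]\le \tfrac14 \Inf^{\sim}_p[f]$, which is stronger than the claimed inequality. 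I would include a brief self-contained martingale proof of the Efron--Stein step in the appendix if the paper prefers not to cite it as a black box. This is the one step that is not purely definitional, so I would flag it as the main---though still standard---obstacle.

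For part~\ref{prop:basic_props:exp_monotone}, I would start from the tower rule:
\[ \E_p[f] = (1-p)\,\E_p[f_{x_i=0}] + p\,\E_p[f_{x_i=1}]. \]
Monotonicity of $f$ implies $f_{x_i=1}(y)\ge f_{x_i=0}(y)$ pointwise, so the event $\{f_{x_i=0}(\bx)\ne f_{x_i=1}(\bx)\}$ coincides with $\{f_{x_i=1}(\bx)=1, f_{x_i=0}(\bx)=0\}$. Hence
\[ \E_p[f_{x_i=1}] - \E_p[f_{x_i=0}] = \Pr_p[f_{x_i=0}(\bx)\ne f_{x_i=1}(\bx)] = \Inf^{\oplus}_{i,p}[f], \]
where the final equality is part~\ref{prop:basic_props:infi_pr}. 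Substituting back into the tower rule in the two possible ways (expressing the opposite restriction in terms of the difference) yields both forms stated in the proposition. Altogether, the proofs are short and mostly mechanical once parts~\ref{prop:basic_props:infi_pr} and~\ref{prop:basic_props:inf_var} are in place, so I expect the writeup to be compact.
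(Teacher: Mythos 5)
Your proofs of parts~\ref{prop:basic_props:exp_sens}, \ref{prop:basic_props:infi_pr}, \ref{prop:basic_props:infr=inff}, and \ref{prop:basic_props:exp_monotone} are correct and follow the same route as the paper (linearity of expectation; conditioning on $\bx_i$; conditioning on whether the rerandomized bit flips; and the tower rule combined with part~\ref{prop:basic_props:infi_pr} respectively).

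For part~\ref{prop:basic_props:inf_var} you diverge from the paper: you invoke the Efron--Stein inequality $\Var_p[f]\le \tfrac12 \sum_i \E_p[(f(\bx)-f(\bx^{\sim i}))^2]$, while the paper proves the bound by induction on $n$ via the one-step variance decomposition $\Var_p[f]=p\Var_p[f_{x_i=1}]+(1-p)\Var_p[f_{x_i=0}]+p(1-p)\bigl(\E_p[f_{x_i=1}-f_{x_i=0}]\bigr)^2$. Your Efron--Stein argument is correct, and since for Boolean $f$ each squared difference $\E_p[(f(\bx)-f(\bx^{\sim i}))^2]$ equals $\tfrac12\Inf^{\sim}_{i,p}[f]$, it actually yields the sharper bound $\Var_p[f]\le \tfrac14\Inf^{\sim}_p[f]$ (tight for dictators), whereas the paper's induction step bounds the cross term somewhat loosely and only delivers the stated $\Var_p[f]\le\Inf^{\sim}_p[f]$. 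The trade-off is that the paper's proof is fully self-contained and elementary, whereas yours imports Efron--Stein as a black box (or would need the appendix you mention); both are perfectly legitimate, and the constant improvement is harmless but unused elsewhere in the paper.
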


\section{First component of~\texorpdfstring{\Cref{thm:main}}{Theorem 1}: Finding an arbitrary certificate}

In this section, we show how to find an \textit{arbitrary} size-$\poly(k)$ certificate of a monotone function in $O(k^7\log n)$ queries where $k$ is the certificate complexity of the function. We first state the algorithm below then show each step can be implemented in a query efficient manner and with high probability of success. In particular, we'll give a $O(k^7\log k\log n)$ query upper bound and then we'll show how a simple modification of the algorithm can obtain a $O(k^7\log n)$ upper bound.

\begin{algorithm}
\begin{algorithmic}
\Require Query access to a monotone function $f:\{0,1\}^n\to \{0,1\}$ and parameter $k$.
\State Initialize $S\leftarrow \varnothing$
\While{$f$ is nonconstant}
\State Find an $\eps$-approximate critical probability $p$ of $f$, where $\eps = O(1/k^3)$
\State Estimate $\Inf_{i,p}^{\sim}[f]$ to additive accuracy $\pm O(1/k^2)$ for all $i$
\State Add coordinate $i$ to ${S}$ where $\Inf_{i,p}^{\sim}[f]$ is the largest influence estimate
\State $f\leftarrow f_{x_{i}=b}$ where $b=0$ if $p\ge 1/2$ and $1$ otherwise
\EndWhile
\State \Return the certificate $S$ \end{algorithmic}
\caption{Finding a certificate of a monotone function}
\label{alg:arbitrary_certificate}
\end{algorithm}

\begin{theorem}
\label{theorem:find_cert_algo_correctness}
Let $f:\zo^n\to\zo$ be a monotone function with $C(f)\le k$. There is an implementation of~\Cref{alg:arbitrary_certificate} that w.h.p.\ makes $O(k^7\log k\log n)$ queries to $f$ and returns a certificate of size $O(k^3)$.  
\end{theorem}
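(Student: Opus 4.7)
My plan is to decouple correctness (termination within $O(k^3)$ iterations and $|S|=O(k^3)$) from the per-iteration query budget. The Lyapunov function I will use is the critical probability $p_t \coloneqq p(f_t)$ of the function $f_t$ at the start of iteration $t$: I will show each iteration moves it by $\Omega(1/k^3)$ in a direction consistent with the sign of $p_0 - \tfrac{1}{2}$, so after $O(k^3)$ iterations $p_t$ must have exited $(0,1)$, forcing $f_t$ to be constant and the while loop to terminate. Inside each iteration, the dominant cost is the $O(k^4 \log k \log n)$ queries to estimate $p$ to additive $O(1/k^3)$; multiplying by $O(k^3)$ iterations gives $O(k^7\log k\log n)$ total.

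For the potential argument, \Cref{fact:dt-at-most-cert2} gives $D(f_t) \le C(f_t)^2 \le k^2$, so applying the OSSS inequality to an optimal depth-$k^2$ decision tree at $p_t$ (letting $\delta_i$ be the probability the tree queries coordinate $i$, with $\sum_i \delta_i \le k^2$),
\[ \tfrac{1}{4} = \Var_{p_t}[f_t] \le \sum_i \delta_i\,\Inf_{i,p_t}^{\oplus}[f_t] \le k^2 \cdot \max_i \Inf_{i,p_t}^{\oplus}[f_t]. \]
Thus some coordinate has $p_t$-biased flip influence $\ge 1/(4k^2)$; via \Cref{prop:basic_props:infr=inff} this is equivalent (up to a factor of $4p_t(1-p_t)$) to the rerandomized influence the algorithm estimates, so $\pm O(1/k^2)$-accurate estimates suffice to pick a coordinate $i_t$ with flip influence $\Omega(1/k^2)$. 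By \Cref{prop:basic_props:exp_monotone} and the algorithm's choice of $b$, restricting $x_{i_t} = b$ shifts $\E_{p_t}[(f_t)_{x_{i_t}=b}]$ away from $\E_{p_t}[f_t]=\tfrac{1}{2}$ by $\max(p_t,1-p_t)\cdot \Inf_{i_t,p_t}^{\oplus}[f_t] \ge \Omega(1/k^2)$ in the direction that drives $p$ toward the further endpoint of $[0,1]$. I convert this vertical shift into a horizontal shift in critical probability via the Russo--Margulis identity $\tfrac{d}{dp}\E_p[f_{t+1}] = \Inf_p^{\oplus}[f_{t+1}]$ combined with the uniform slope bound $\Inf_p^{\oplus}[f_{t+1}] = \E_p[\Sens_{f_{t+1}}(\bx)] \le C(f_{t+1}) \le k$ obtained from \Cref{prop:basic_props:exp_sens} and \Cref{prop:sens-at-most-cert}; together these yield $|p_{t+1} - p_t| \ge \Omega(1/k^2)/k = \Omega(1/k^3)$. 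Induction on $t$ shows $p_t$ stays on one side of $\tfrac{1}{2}$, so the drift direction is consistent and $p_t$ exits $(0,1)$ within $O(k^3)$ iterations, giving $|S| = O(k^3)$.

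For the queries per iteration, I would implement each step with Chernoff + union bounds calibrated to a global failure probability $\le 1/\poly(n)$ after union-bounding over $O(k^3)$ iterations. Estimating $p(f_t)$ to additive $O(1/k^3)$: binary search of depth $O(\log k)$ on $p \in [0,1]$, each step distinguishing $\E_p[f_t] \lessgtr \tfrac{1}{2}$ to precision $O(1/k^2)$ (enough because the slope is $\le k$) via $O(k^4 \log n)$ $p$-biased queries, for $O(k^4 \log k \log n)$ queries per iteration. Estimating all $n$ rerandomized influences to $\pm O(1/k^2)$: draw $O(k^4 \log n)$ base samples $\bx \sim \zo_p^n$, and for each use an Angluin-style binary search exploiting monotonicity and $\Sens_{f_t} \le k$ to identify the size-$\le k$ sensitivity set in $O(k \log n)$ queries; tallying incidence counts across samples yields all flip-influence estimates simultaneously. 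The main technical obstacle is carefully tracking the effect of the $O(1/k^3)$ slack in $p$ and the $O(1/k^2)$ slack in the influences on the two key $\Omega(\cdot)$ margins: $\Var_{\tilde p}[f_t] = \tfrac{1}{4} \pm O(1/k^4)$ so the OSSS bound $\max_i \Inf_{i,\tilde p}^{\oplus}[f_t] = \Omega(1/k^2)$ survives, and when $|p_t - \tfrac{1}{2}| = O(1/k^3)$ either choice of $b$ achieves the $\Omega(1/k^2)$ expectation shift (so sign-consistency of the drift is only lost in $O(1)$ initial iterations). Summing the sub-routine costs gives the claimed $O(k^7\log k \log n)$-query upper bound while returning a certificate of size $O(k^3)$ w.h.p.
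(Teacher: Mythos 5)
Your correctness analysis---critical probability as progress measure, OSSS combined with $D(f) \le C(f)^2$ for the $\Omega(1/k^2)$ influence lower bound, Russo--Margulis plus the Lipschitz bound to turn that into an $\Omega(1/k^3)$ shift in $p(f)$, and hence $O(k^3)$ iterations---mirrors the paper's and is sound. However, the query budget for estimating the critical probability has a gap. You claim that estimating $\E_p[f_t]$ to precision $O(1/k^2)$ suffices to locate $p(f_t)$ to within $O(1/k^3)$ ``because the slope is $\le k$,'' but this runs the Lipschitz bound backwards: an \emph{upper} bound on $\tfrac{d}{dp}\Phi_f(p)$ tells you that small errors in $p$ produce small errors in $\Phi_f(p)$, which is the opposite direction of what you need. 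To invert, you need a \emph{lower} bound on the slope near $p(f)$, and the paper's \Cref{corollary:p_progress_inverse} supplies this via $\Inf_p^{\oplus}[f]\ge\Var_p[f]$; the resulting conversion is only linear, so $\Phi_f(p)=1/2\pm\eta$ gives $p=p(f)\pm 4\eta/(1-4\eta^2)$. With $\eta=O(1/k^2)$ you therefore control $|p-p(f)|$ only to $O(1/k^2)$, and \Cref{corollary:lipschitz_influence} then introduces an error of $k\cdot O(1/k^2)=O(1/k)$ into the transferred influence $\Inf_{i,p(f)}^{\oplus}[f]$, swamping the $\Omega(1/k^2)$ OSSS signal (this is exactly the accounting in \Cref{lem:inf_lower_bound}, whose conclusion $1/(8k^2)-3k\eps$ requires $\eps=O(1/k^3)$ to be useful). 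So you genuinely need $\eta=O(1/k^3)$, and \Cref{lem:approx_crit_prob} then costs $O(k^6\log k\log n)$ queries per estimation---the same figure the paper itself uses in \Cref{thm:arbitrary_cert_best_upperbound}, where the critical probability is estimated \emph{once} and then advanced deterministically.

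Separately, your influence-estimation subroutine differs from the paper's and pays an extra factor. You draw $O(k^4\log n)$ base points and for each run a $\Theta(k\log n)$-query group-testing search for the full sensitivity set, giving $O(k^5\log^2 n)$ queries per iteration. The paper's \Cref{lem:estimate-influences} instead gets $\pm O(1/k^2)$ estimates for all $n$ rerandomized influences from a \emph{single} $O(k^4\log n)$-point sample (one query per point) by partitioning the sample on $x_i$ and applying the identity $\Inf_{i,p}^{\sim}[f]=4(1-p)\bigl(\E_p[f]-\E_p[f_{x_i=0}]\bigr)$, so no per-point search is needed. Combining both issues, the scheme as you describe it does not reach $O(k^7\log k\log n)$; the repair is to adopt the paper's sample-reuse trick for influences and to either accept the $O(k^6\log k\log n)$ critical-probability cost per iteration or, better, estimate $p(f)$ once and increment deterministically as in \Cref{thm:arbitrary_cert_best_upperbound}.
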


\subsection{Structural properties of \texorpdfstring{$\Phi_f$}{Phi}} 
As discussed in~\Cref{sec:overview}, the function $\Phi_f: [0,1]\to [0,1]$,
\[ \Phi_f(p) \coloneqq \E_p[f(\bx)] \] 
will be central to our analysis.  In this section we record and establish a few structural properties of $\Phi_f$ that will be useful for the proof of~\Cref{theorem:find_cert_algo_correctness}.  

The first is the Russo--Margulis lemma~\cite{Mar74,Rus78} which states that the derivative of $\Phi_f(p)$ is exactly the total flip influence of $f$ under the $p$-biased distribution. 

\begin{lemma}[Russo--Margulis]
    \label{lemma:russo_margulis}
    Let $f$ be a monotone function, then 
    $$
    \frac{d}{dp}\Phi_f(p)=\Inf_{p}^{\oplus}[f].
    $$
\end{lemma}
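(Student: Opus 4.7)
The plan is to prove the identity by differentiating the closed-form polynomial expression for $\Phi_f$ term-by-term and recognizing the grouped contributions as influences. Since each bit of $\bx \sim \{0,1\}_p^n$ is independent with $\Pr_p[\bx_i = 1] = p$, I would start by writing
\[
\Phi_f(p) = \sum_{x \in \{0,1\}^n} f(x)\, p^{|x|}(1-p)^{n-|x|},
\]
where $|x|$ denotes the Hamming weight of $x$, making $\Phi_f$ a polynomial in $p$ of degree at most $n$ and hence trivially differentiable.

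Next I would differentiate each monomial, obtaining $|x|\,p^{|x|-1}(1-p)^{n-|x|} - (n-|x|)\,p^{|x|}(1-p)^{n-|x|-1}$, and then expand $|x| = \sum_i x_i$ and $n - |x| = \sum_i (1 - x_i)$ to split the derivative into a sum over coordinates:
\[
\Phi_f'(p) = \sum_{i=1}^n \left(\sum_{x : x_i = 1} f(x)\, p^{|x|-1}(1-p)^{n-|x|} \;-\; \sum_{x : x_i = 0} f(x)\, p^{|x|}(1-p)^{n-|x|-1}\right).
\]
For fixed $i$, I would re-index each inner sum over the $n-1$ coordinates $x_{-i} \in \{0,1\}^{n-1}$; the two resulting expressions collapse to $\E_p[f_{x_i=1}]$ and $\E_p[f_{x_i=0}]$ respectively, yielding
\[
\Phi_f'(p) = \sum_{i=1}^n \bigl(\E_p[f_{x_i=1}] - \E_p[f_{x_i=0}]\bigr).
\]

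To finish, I would invoke monotonicity of $f$: pointwise $f_{x_i=1} \ge f_{x_i=0}$, so $f_{x_i=1}(\bx) - f_{x_i=0}(\bx) \in \{0,1\}$ and equals $1$ exactly on the event $\{f_{x_i=1}(\bx) \ne f_{x_i=0}(\bx)\}$. Hence
\[
\E_p[f_{x_i=1}] - \E_p[f_{x_i=0}] = \Pr_p[f_{x_i=1}(\bx) \ne f_{x_i=0}(\bx)] = \Inf_{i,p}^{\oplus}[f],
\]
where the second equality is~\Cref{prop:basic_props:infi_pr}. Summing over $i$ gives $\Phi_f'(p) = \Inf_p^{\oplus}[f]$, as desired. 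There is no substantive obstacle here; the only care needed is the bookkeeping in the re-indexing step, and the essential structural input beyond basic calculus is the monotonicity of $f$, which is precisely what converts a signed difference of restrictions into a probability of sensitivity. (An equally clean alternative would be a monotone coupling argument via uniform $[0,1]$ variables $U_i$ and the event $\{U_i \in (p, p+h]\}$ of a single coordinate flip, but the polynomial differentiation is already self-contained.)
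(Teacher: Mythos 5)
Your proof is correct, and it takes a somewhat different (and arguably cleaner) route than the paper's. The paper generalizes $\Phi_f$ to a multivariate function $H(\bp)$ over $\bp \in [0,1]^n$, computes each partial derivative $\partial H / \partial p_i$ by a combinatorial pairing argument (partitioning $f^{-1}(1)$ by the value of $x_i$ and using monotonicity to cancel terms whose Hamming neighbor also lies in $f^{-1}(1)$), and then applies the multivariate chain rule along the diagonal $\bp = (p,\ldots,p)$. You instead differentiate $\Phi_f(p)$ directly as a polynomial, split $|x|$ and $n-|x|$ into coordinate-wise sums, re-index each inner sum over $\{0,1\}^{n-1}$ to land on $\E_p[f_{x_i=1}] - \E_p[f_{x_i=0}]$, and only then invoke monotonicity to turn this signed difference into $\Pr_p[f_{x_i=1}(\bx)\ne f_{x_i=0}(\bx)] = \Inf_{i,p}^{\oplus}[f]$ via \Cref{prop:basic_props:infi_pr}. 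Both proofs are elementary and use monotonicity at exactly one step, but they use it differently: the paper uses it to cancel terms before identifying the derivative with a probability, while you first obtain the generally valid identity $\Phi_f'(p) = \sum_i (\E_p[f_{x_i=1}] - \E_p[f_{x_i=0}])$ and then specialize. What the paper's multivariate detour buys is the per-coordinate formula $\partial H/\partial p_i = \Inf_{i,\bp}^{\oplus}[f]$ at a general bias vector, which can be useful elsewhere; for the univariate statement at hand your direct differentiation is entirely sufficient and avoids that machinery. One tiny bookkeeping point worth making explicit in a writeup: the monomial derivative $|x|\,p^{|x|-1}(1-p)^{n-|x|}$ is still well-defined when $|x|=0$ because the leading factor $|x|$ kills the term (and symmetrically for $|x|=n$), so the re-indexing never produces a negative exponent.
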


For a Fourier-analytic proof of the Russo--Margulis lemma, see~\cite{ODBook}. For the sake of completeness, we give a self-contained combinatorial proof in~\Cref{appendix:rm_lemma}.

We leverage three important corollaries of the Russo--Margulis lemma in our analysis.  Applying the lemma twice, to $\Phi_f(p)$ and $\Inf^{\oplus}_{i,p}[f]$, we can upper bound the Lipschitz constants of these quantities by $k$ when viewed as functions of $p$.    We then apply it again to lower bound the derivative of $\Phi_f(p)$ near the critical probability $p(f)$ of $f$, to show that that any $p$ for which $\Phi_f(p)$ is close to $1/2$ must be close to $p(f)$.


\begin{corollary}[Lipschitz constant of $\Phi_f$]
    \label{corollary:tangent_certificate}
    Let $f:\zo^n\to\zo$ be a monotone function with $C(f)\le k$, then for all $q \neq r$ we have
    $$
    \frac{\Phi_f(q)-\Phi_f(r)}{q-r}\le k.
    $$
\end{corollary}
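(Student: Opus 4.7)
The plan is to bound the derivative of $\Phi_f$ uniformly in $p$ by $k$, and then deduce the Lipschitz bound either by the mean value theorem or by integration. The proof is essentially a three-line chain connecting the stated tools.

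First, I would invoke the Russo--Margulis lemma (\Cref{lemma:russo_margulis}) to rewrite the derivative of $\Phi_f$ as a total influence:
\[ \tfrac{d}{dp}\Phi_f(p) \;=\; \Inf_p^{\oplus}[f]. \]
Next, I would use \Cref{prop:basic_props:exp_sens} to identify this total flip influence with the expected sensitivity, $\Inf_p^{\oplus}[f] = \E_p[\Sens_f(\bx)]$. The pointwise inequality $\Sens_f(x) \le C_f(x) \le C(f) \le k$ from \Cref{prop:sens-at-most-cert} then yields the uniform bound $\Inf_p^{\oplus}[f] \le k$ for every $p \in [0,1]$.

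With a uniform bound on $\Phi_f'$ in hand, the Lipschitz estimate follows immediately. Assuming without loss of generality $r < q$, I would write
\[ \Phi_f(q) - \Phi_f(r) \;=\; \int_r^q \Inf_p^{\oplus}[f]\,dp \;\le\; k(q-r), \]
and divide by $q-r$; the case $q < r$ is symmetric and also gives a nonnegative quotient since $\Phi_f$ is nondecreasing by the monotonicity of $f$.

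There is no real obstacle here: the only thing to verify is that each ingredient is applicable in the claimed form, namely that Russo--Margulis applies to all monotone $f$ (it does), that $\Sens_f \le C_f$ holds pointwise (\Cref{prop:sens-at-most-cert}), and that $C(f) \le k$ is the right global bound. The main conceptual content of the corollary is simply that \emph{certificate complexity controls the Lipschitz constant of $\Phi_f$}, which is exactly the bridge that lets later arguments parlay changes in $\Phi_f$ into changes in the critical probability.
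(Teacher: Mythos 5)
Your proposal is correct and follows essentially the same route as the paper: bound $\Phi_f'(p) = \Inf_p^{\oplus}[f]$ uniformly by $k$ via Russo--Margulis and the sensitivity-vs-certificate-complexity inequality, then convert to a Lipschitz bound. The paper phrases the final step via the mean value theorem rather than integration, but these are interchangeable here.
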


\begin{proof}
    By the mean value theorem, the slope of the tangent line $(\Phi_f(q)-\Phi_f(r))/(q-r)$ is the derivative of $\Phi_f(p)$ at some point $\hat{p}$ in between $q$ and $r$. Applying the Russo--Margulis lemma, we have that 
    $$
    \frac{\Phi_f(q)-\Phi_f(r)}{q-r}=\left.\frac{d}{dp}\Phi_f(p)\right\rvert_{p=\hat p}=\Inf_{\hat p}^{\oplus}[f].
    $$
By~\Cref{prop:sens-at-most-cert,prop:basic_props:exp_sens}, 
    \[ \Inf_{\hat{p}}^{\oplus}[f] = \E_{\hat{p}}[\Sens_f(\bx)] \le \E_{\hat{p}}[C_f(\bx)] \le C(f)\] 
and the proof is complete. 
\end{proof}

\begin{corollary}[Lipschitz constant of $\Inf_{i,p}^{\oplus}$]
    \label{corollary:lipschitz_influence}
    Let $f:\zo^n\to\zo$ be a monotone function with $C(f)\le k$. Then for all $q\neq r$ and $i\in [n]$ we have
    $$
    \left\lvert\frac{\Inf_{i,q}^{\oplus}[f]-\Inf_{i,r}^{\oplus}[f]}{q-r}\right\rvert\le k.
    $$
\end{corollary}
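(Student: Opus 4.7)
The plan is to mirror the proof of the preceding corollary, but applied to $\Inf_{i,p}^{\oplus}[f]$ itself rather than $\Phi_f(p)$. First, I would rewrite the flip influence as a difference of $\Phi$-functions of restrictions. By~\Cref{prop:basic_props:infi_pr}, $\Inf_{i,p}^{\oplus}[f] = \Pr_p[f_{x_i=1}(\bx) \neq f_{x_i=0}(\bx)]$, and since $f$ is monotone we have $f_{x_i=1} \ge f_{x_i=0}$ pointwise, so this probability equals
\[
\Phi_{f_{x_i=1}}(p) - \Phi_{f_{x_i=0}}(p),
\]
where $\Phi_{f_{x_i=b}}$ is defined on the remaining $n-1$ coordinates. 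Both restrictions are themselves monotone.

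Next, I would apply the mean value theorem: for some $\hat p$ strictly between $q$ and $r$,
\[
\frac{\Inf_{i,q}^{\oplus}[f]-\Inf_{i,r}^{\oplus}[f]}{q-r} = \left.\frac{d}{dp}\bigl(\Phi_{f_{x_i=1}}(p)-\Phi_{f_{x_i=0}}(p)\bigr)\right|_{p=\hat p}.
\]
Applying the Russo--Margulis lemma to each term individually rewrites this as $\Inf_{\hat p}^{\oplus}[f_{x_i=1}] - \Inf_{\hat p}^{\oplus}[f_{x_i=0}]$.

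Finally, I would bound each total influence by $k$ exactly as in the preceding corollary: by~\Cref{prop:basic_props:exp_sens} and~\Cref{prop:sens-at-most-cert},
\[
\Inf_{\hat p}^{\oplus}[f_{x_i=b}] = \E_{\hat p}[\Sens_{f_{x_i=b}}(\bx)] \le \E_{\hat p}[C_{f_{x_i=b}}(\bx)] \le C(f_{x_i=b}) \le C(f) \le k,
\]
using the standard fact that restricting a function cannot increase its certificate complexity (any certificate of $f_{x_i=b}$ at $y$ extends to a certificate of $f$ at the corresponding input with at most the same coordinates in the free variables). Since both $\Inf_{\hat p}^{\oplus}[f_{x_i=1}]$ and $\Inf_{\hat p}^{\oplus}[f_{x_i=0}]$ lie in $[0,k]$, their difference is at most $k$ in absolute value, completing the proof.

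The only mildly subtle step is the observation that the two restrictions are both monotone (so Russo--Margulis applies) and have certificate complexity at most $k$; everything else is a direct analog of the previous corollary's argument.
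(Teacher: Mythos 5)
Your proof is correct and follows essentially the same route as the paper's: rewrite $\Inf_{i,p}^{\oplus}[f]$ as $\Phi_{f_{x_i=1}}(p)-\Phi_{f_{x_i=0}}(p)$ via \Cref{prop:basic_props:infi_pr} and monotonicity, differentiate using Russo--Margulis to get $\Inf_p^{\oplus}[f_{x_i=1}]-\Inf_p^{\oplus}[f_{x_i=0}]$, bound each term by $C(f)\le k$ via \Cref{prop:basic_props:exp_sens} and \Cref{prop:sens-at-most-cert}, and finish with the mean value theorem. You are slightly more explicit than the paper in noting that restrictions preserve monotonicity and cannot increase certificate complexity, which is a nice touch but not a different argument.
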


\begin{proof}
    When $f$ is monotone, \Cref{prop:basic_props:infi_pr} can be written as $\Pr_p[f_{x_i=1}(\bx)\neq f_{x_i=0}(\bx)]=\Phi_{f_{x_i=1}}(p)-\Phi_{f_{x_i=0}}(p)$. Hence, 
    $$
    \frac{d}{dp}\Inf_{\hat{p}}^{\oplus}[f]=\frac{d}{dp}\left[\Phi_{f_{x_i=1}}(p)-\Phi_{f_{x_i=0}}(p)\right]=\Inf_p^{\oplus}[f_{x_i=1}]-\Inf_p^{\oplus}[f_{x_i=0}]
    $$
    by the Russo--Margulis lemma. Since $0\le\Inf_p^{\oplus}[f_{x_i=b}]\le C(f_{x_i=b})\le C(f)$ for $b\in\zo$, the result then follows from the application of the mean value theorem as in the proof of \Cref{corollary:tangent_certificate}.
\end{proof}

\begin{corollary}
    \label{corollary:p_progress_inverse}
    Let $f:\zo^n\to\zo$ be a monotone function and let $p\in [0,1]$ be any point satisfying $\Phi_f(p)=1/2\pm \eps$. Then
    $$
    p=p(f)\pm  \frac{4\eps}{1-4\eps^2}.
    $$
\end{corollary}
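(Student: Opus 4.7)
The plan is to lower-bound the derivative $\frac{d}{dp}\Phi_f(p)$ at points where $\Phi_f$ takes values close to $1/2$, and then apply the mean value theorem on the interval between $p$ and $p(f)$ to convert this into the desired bound on $|p - p(f)|$.

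The key pointwise inequality I would establish is $\Phi_f'(q) \ge \Phi_f(q)\bigl(1-\Phi_f(q)\bigr)$ for every $q \in (0,1)$. To derive it I would chain together three facts already recorded in the preliminaries: by Russo--Margulis (\Cref{lemma:russo_margulis}), $\Phi_f'(q) = \Inf_q^{\oplus}[f]$; by \Cref{prop:basic_props:infr=inff} summed over coordinates, $\Inf_q^{\sim}[f] = 4q(1-q)\,\Inf_q^{\oplus}[f] \le \Inf_q^{\oplus}[f]$; and by \Cref{prop:basic_props:inf_var}, $\Inf_q^{\sim}[f] \ge \Var_q[f] = \Phi_f(q)(1-\Phi_f(q))$. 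Combining these three lines yields the claimed lower bound on $\Phi_f'$.

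Next, since $f$ is monotone, $\Phi_f$ is nondecreasing, so on the closed interval between $p$ and $p(f)$ the function $\Phi_f$ takes values in $[1/2 - \eps,\, 1/2 + \eps]$. Applying the mean value theorem to $\Phi_f$ on this interval, there exists a point $\hat q$ strictly between $p$ and $p(f)$ with $\Phi_f(p) - \Phi_f(p(f)) = \Phi_f'(\hat q)\,(p - p(f))$. At this $\hat q$ the previous inequality gives $\Phi_f'(\hat q) \ge (1/2 - \eps)(1/2 + \eps) = 1/4 - \eps^2$.

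Finally, using $\Phi_f(p(f)) = 1/2$ and $|\Phi_f(p) - 1/2| \le \eps$, I obtain $\eps \ge (1/4 - \eps^2)\,|p - p(f)|$, which rearranges to the corollary's conclusion $|p - p(f)| \le 4\eps/(1 - 4\eps^2)$. The argument is short and presents no real obstacle; the only step requiring any thought is recognizing that composing the variance--influence inequality with Russo--Margulis yields a lower bound on $\Phi_f'$ near the critical probability that is exactly strong enough to produce the clean $4\eps/(1-4\eps^2)$ form.
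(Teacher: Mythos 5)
Your proof is correct and follows essentially the same route as the paper's: both combine the mean value theorem with the chain of inequalities $\Inf^{\oplus}_q[f] \ge \Inf^{\sim}_q[f] \ge \Var_q[f]$ (Russo--Margulis plus \Cref{prop:basic_props:infr=inff} and \Cref{prop:basic_props:inf_var}) to lower-bound the slope of $\Phi_f$ by roughly $1/4$ near the critical probability. The only cosmetic difference is that you evaluate the variance directly at the intermediate point $\hat q$ while the paper evaluates it at $p$ and transfers via monotonicity of $\Phi_f$ and of $t\mapsto t(1-t)$ on $[0,1/2]$; both yield the same bound $1/4-\eps^2$.
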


\begin{proof}
    Suppose without loss of generality that $p\le p(f)$ (the case where $p>p(f)$ is symmetric). Again applying the mean value theorem, there is some $\hat p\in [p,p(f)]$ satisfying $\Inf_{\hat p}^{\oplus}[f]=(\Phi_f(p(f))-\Phi_f(p))/(p(f)-p)$. Then, we have
    \begin{align*}
        \frac{\eps}{p(f)-p}&\ge \frac{\Phi_f(p(f))-\Phi_f(p)}{p(f)-p}\\
        &= \Inf_{\hat p}^{\oplus}[f]\ge \Var_{\hat p}[f] \tag{\Cref{prop:basic_props:inf_var}}\\
        &\ge \Var_p[f]=\Phi_f(p)(1-\Phi_f(p)) \tag{monotonicity}\\
        &\ge \paren{\frac{1}{2}+\eps}\paren{\frac{1}{2}-\eps}=\frac{1}{4}-\eps^2
    \end{align*}
    which gives the desired inequality.
\end{proof}

The next lemma quantifies the change in the critical probability of $f$ when we restrict one of its coordinates. In particular, we use the Lipschitz constant for $\Phi_f(p)$ to show this change is large when the restricted coordinate is influential. 

\begin{lemma} 
    \label{lem:p_progress} 
    Let $f:\zo^n\to\zo$ be a monotone function with $C(f) \le k$. Then for all $i\in [n]$, we have
    \begin{align*}
        p(f_{x_i=0})-p(f)&\ge \frac{p(f)\cdot \Inf_{i,p(f)}^{\oplus}[f]}{k}\\
        \shortintertext{and analogously,} p(f)-p(f_{x_i=1})&\ge \frac{(1-p(f))\cdot  \Inf_{i,p}^{\oplus}[f]}{k}.
    \end{align*}
\end{lemma}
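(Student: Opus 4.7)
The plan is to exploit the identities in \Cref{prop:basic_props:exp_monotone} to relate $\Phi_f$ to $\Phi_{f_{x_i=0}}$ and $\Phi_{f_{x_i=1}}$ at $p=p(f)$, and then transport the resulting gap to a $p$-gap using the $k$-Lipschitz bound from \Cref{corollary:tangent_certificate} applied to the restricted functions. First I would note that restricting a coordinate can only decrease certificate complexity, so $C(f_{x_i=b}) \le C(f) \le k$ for $b \in \zo$, and thus \Cref{corollary:tangent_certificate} applies to $\Phi_{f_{x_i=b}}$ with Lipschitz constant $k$.

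Now write $p^\star \coloneqq p(f)$ and $\alpha \coloneqq \Inf_{i,p^\star}^{\oplus}[f]$. Plugging $p=p^\star$ into \Cref{prop:basic_props:exp_monotone} yields
\[
\Phi_{f_{x_i=0}}(p^\star) \;=\; \tfrac{1}{2} - p^\star \alpha \qquad\text{and}\qquad \Phi_{f_{x_i=1}}(p^\star) \;=\; \tfrac{1}{2} + (1-p^\star)\alpha.
\]
Since $f_{x_i=0}$ is monotone with $\Phi_{f_{x_i=0}}(p(f_{x_i=0})) = \tfrac{1}{2}$, and since $\Phi_{f_{x_i=0}}$ is increasing, the value of $\Phi_{f_{x_i=0}}$ has to increase by exactly $p^\star\alpha$ as we move from $p^\star$ to $p(f_{x_i=0})$. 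By the $k$-Lipschitz bound, moving by at most $\Delta$ in $p$ increases $\Phi_{f_{x_i=0}}$ by at most $k\Delta$, so
\[
p(f_{x_i=0}) - p^\star \;\ge\; \frac{p^\star \alpha}{k},
\]
which is precisely the first inequality. The second follows symmetrically: $\Phi_{f_{x_i=1}}$ must decrease by $(1-p^\star)\alpha$ as we move from $p^\star$ down to $p(f_{x_i=1})$, and the same $k$-Lipschitz bound applied to $\Phi_{f_{x_i=1}}$ gives $p^\star - p(f_{x_i=1}) \ge (1-p^\star)\alpha/k$.

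The one subtlety is the edge case where the restricted function is constant, so its critical probability is not defined. If $f_{x_i=0} \equiv 0$, adopt the convention $p(f_{x_i=0}) = 1$; if $f_{x_i=0} \equiv 1$, then by monotonicity and $\Phi_f(p^\star)=\tfrac12$ we'd need $p^\star \alpha = \tfrac12$, and $p(f_{x_i=0}) = 0$ while the RHS $p^\star\alpha/k \le 1/(2k) \le 1$, so the inequality continues to hold under the convention $p(f_{x_i=0})=0$. Symmetric conventions handle the $x_i=1$ restriction. With these in place, the argument above goes through verbatim. I do not expect a real obstacle here — the crux is simply recognizing that \Cref{prop:basic_props:exp_monotone} gives the exact vertical gap $p^\star\alpha$ (resp.\ $(1-p^\star)\alpha$) to close, and that \Cref{corollary:tangent_certificate} converts a vertical gap into a horizontal one via division by $k$.
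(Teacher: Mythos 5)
Your proof is correct and takes essentially the same route as the paper's: both evaluate the identity from \Cref{prop:basic_props:exp_monotone} at $p = p(f)$ to pin down the vertical gap $\Phi_{f_{x_i=0}}(p(f_{x_i=0})) - \Phi_{f_{x_i=0}}(p(f)) = p(f)\cdot\Inf_{i,p(f)}^{\oplus}[f]$, then apply the Lipschitz bound of \Cref{corollary:tangent_certificate} to the restricted function (valid since restrictions do not increase certificate complexity) to convert it into a lower bound on the horizontal gap. One small side note on your edge-case discussion, which the paper silently elides: if $f_{x_i=0} \equiv 1$, the identity $\Phi_{f_{x_i=0}}(p^\star) = \tfrac12 - p^\star\alpha$ would force $p^\star\alpha = -\tfrac12$ (not $+\tfrac12$ as you wrote), which is impossible for nonnegative quantities; this is consistent with your conclusion that the sub-case cannot arise, but the stated equation has the wrong sign.
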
 
\begin{proof} 
    We prove the lower bound on $p(f_{x_i=0})-p(f)$. The proof for $p(f)-p(f_{x_i=1})$ is symmetric. First, rewriting \Cref{prop:basic_props:exp_monotone} in the $\Phi_f$ notation we have
    \begin{equation} 
        \Phi_{f_{x_i=0}}(p) =\Phi_f(p)-p\cdot \Inf_{i,p}^{\oplus}[f]. \label{eq:expectation-influence} 
    \end{equation}
    Therefore,
    \begin{align*}
        k&\ge \frac{\Phi_{f_{x_i=0}}(p(f_{x_i=0}))-\Phi_{f_{x_i=0}}(p(f))}{p(f_{x_i=0})-p(f)} \tag{\Cref{corollary:tangent_certificate}} \\
        &=\frac{\Phi_{f_{x_i=0}}(p(f_{x_i=0}))-\left(\Phi_f(p(f)) -p(f)\cdot \Inf_{i,p(f)}^{\oplus}[f]\right)}{p(f_{x_i=0})-p(f)} \tag{\Cref{eq:expectation-influence}} \\
        &= \frac{p(f)\cdot \Inf_{i,p(f)}^{\oplus}[f]}{p(f_{x_i=0})-p(f)}
    \end{align*}
    which completes the proof.
\end{proof} 

Finally, we need an inequality of O'Donnell, Saks, Schramm, and Servedio~\cite{OSSS05} which says that $f$ has an influential $p$-biased coordinate when the $p$-biased variance of $f$ is large. 

\begin{theorem}[OSSS inequality]
\label{thm:OSSS} 
For all functions $f : \zo^n\to\zo$ and $p\in [0,1]$, \[ \max_{i\in [n]} \big\{ \Inf_{i,p}^{\sim}[f]\big\} \ge \frac{\Var_p[f]}{D(f)},  \] 
where $D(f)$ denotes the decision tree complexity of $f$. 
\end{theorem}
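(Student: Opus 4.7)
The plan is to prove the OSSS inequality by the coupling-based hybrid argument from the original paper of O'Donnell, Saks, Schramm, and Servedio. First, I would rewrite the variance of $f$ as a pairwise disagreement probability: for independent $\bx,\by\sim\zo_p^n$, the $p$-biased variance of a Boolean $f$ satisfies $\Var_p[f] = \tfrac{1}{2}\Pr[f(\bx)\neq f(\by)]$, so the task reduces to upper-bounding this disagreement probability in terms of $D(f)$ and the maximum rerandomized influence.

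The core of the argument is a hybrid chain between $\by$ and $\bx$ guided by a depth-$D(f)$ decision tree $T$ that computes $f$. Starting from $\bz^{(0)} = \by$, I would produce $\bz^{(1)},\ldots,\bz^{(d)}$ as follows: at step $j$, look at the coordinate $\bi_j$ that $T$ queries next when run on $\bz^{(j-1)}$, and form $\bz^{(j)}$ by overwriting position $\bi_j$ of $\bz^{(j-1)}$ with $\bx[\bi_j]$. By construction, $T$'s execution on $\bz^{(d)}$ traces a path along which $\bz^{(d)}$ agrees with $\bx$, so $f(\bz^{(d)}) = f(\bx)$, while $f(\bz^{(0)}) = f(\by)$. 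A telescoping union bound then gives
\[
\Pr[f(\bx)\neq f(\by)] \;\le\; \sum_{j=1}^{d} \Pr\!\left[f(\bz^{(j-1)})\neq f(\bz^{(j)})\right].
\]

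Next, I would bound each hybrid step by a rerandomized influence. The key observation is that at step $j$, the coordinate $\bi_j$ is queried for the first time along the current path, so $\bz^{(j-1)}[\bi_j] = \by[\bi_j]$ is still a fresh $\zo_p$ bit that is independent of $\bx[\bi_j]$. Swapping one independent $p$-biased bit for another is exactly rerandomization, and hence, conditioned on $\bi_j = i$, $\Pr[f(\bz^{(j-1)}) \neq f(\bz^{(j)})] = \tfrac{1}{2}\Inf_{i,p}^{\sim}[f]$. Summing over $j$ and letting $\delta_i \coloneqq \Pr[\,i \text{ appears on the path taken by } T\,]$, the depth bound gives $\sum_i \delta_i \le D(f)$, and combining with the variance identity yields
\[
\Var_p[f] \;\le\; \tfrac{1}{4}\sum_i \delta_i\, \Inf_{i,p}^{\sim}[f] \;\le\; \tfrac{D(f)}{4}\max_i \Inf_{i,p}^{\sim}[f],
\]
which even implies the stated inequality with room to spare.

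The main obstacle will be the third step: the event $\bi_j = i$ depends on earlier bits processed by the tree, and one must verify that conditioning on it does not bias the joint distribution of $(\bz^{(j-1)}[i],\bx[i])$ away from a pair of independent $\zo_p$ bits. I would handle this by exposing the random bits in the natural order dictated by the process (first revealing only the coordinates of $\by$ as they are queried, then revealing $\bx[\bi_j]$ at the moment of the swap), which makes the two bits independent $\zo_p$ variables by construction and lets the rerandomization identity be applied cleanly.
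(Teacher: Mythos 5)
The paper itself does not prove this inequality---it imports it directly from \cite{OSSS05}---so there is no in-paper argument to compare against; I am evaluating your attempt on its own. The setup (variance as a pairwise disagreement probability, the hybrid chain guided by the path of $T$, and the observation that $f(\bz^{(d)})=f(\bx)$) is all correct. The gap is in the step you yourself flag as the ``main obstacle,'' and your proposed fix does not address the actual issue. You are right that, conditioned on $\bi_j=i$, the pair $\bigl(\bz^{(j-1)}[i],\,\bx[i]\bigr)$ remains a pair of independent $\zo_p$ bits---that part is fine. What breaks is that conditioning on $\bi_j=i$ also conditions on $\bx[\bi_1],\ldots,\bx[\bi_{j-1}]$ taking the values along the path that leads $T$ to the node querying $i$, and these are exactly the values of $\bz^{(j-1)}$ at positions $\bi_1,\ldots,\bi_{j-1}$. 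So the distribution of $\bz^{(j-1)}$ given $\bi_j=i$ is not $\zo_p^n$; it is $\zo_p^n$ with the path coordinates fixed, and the conditional disagreement probability is (an average of) the rerandomized influence of $i$ on the \emph{restriction} of $f$ along the path, which is in general different from $\tfrac12\Inf_{i,p}^{\sim}[f]$.

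A concrete counterexample to the claimed identity: take $f(x)=x_1\wedge x_2$, $p=\tfrac12$, and let $T$ query $x_1$ and then (if $x_1=1$) $x_2$. Conditioned on $\bi_2=2$ one has $\bx_1=1$, so $\bz^{(1)}=(1,\by_2)$ and $\bz^{(2)}=(1,\bx_2)$, giving $\Pr\bigl[f(\bz^{(1)})\neq f(\bz^{(2)})\mid \bi_2=2\bigr]=\Pr[\by_2\neq\bx_2]=\tfrac12$, while $\tfrac12\Inf^{\sim}_{2,1/2}[f]=\Pr[f(\bw)\neq f(\bw^{\sim 2})]=\tfrac14$. Taking $f=x_1\wedge\cdots\wedge x_n$ shows the two quantities can differ by a factor $2^{n-1}$ at the deepest node, so this step cannot be repaired by absorbing a constant. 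Consequently the identity $\sum_j\Pr[f(\bz^{(j-1)})\neq f(\bz^{(j)})]=\tfrac12\sum_i\delta_i\Inf^{\sim}_{i,p}[f]$ is false (for $f=x_1\wedge x_2$ the two sides are $\tfrac12$ and $\tfrac38$). The actual OSSS argument does not bound the telescoping terms one at a time via a union bound; it keeps the multiplier $f(\bx)$ inside the telescoping, writing $\Var_p[f]=\sum_j\E\bigl[f(\bx)\bigl(f(\bz^{(j)})-f(\bz^{(j-1)})\bigr)\bigr]$, which collapses to a Doob-martingale decomposition of $\Var_p[f]$ along the tree's path, and then a Jensen/convexity step aggregates the per-node contributions into the global influence of each coordinate. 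That extra structure is what makes the revealment $\delta_i$ appear legitimately, and it is missing from your sketch.
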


\subsection{Algorithmic lemmas} 

We will need a few lemmas to bound the query complexity of~\Cref{alg:arbitrary_certificate}. First we show that we can find an approximation of the critical probability of $f$ by finding a value $p$ for which $\Phi_f(p)$ is close to $1/2$. Next we show that we can efficiently estimate \textit{rerandomized} influence to an additive accuracy. Finally, we show that if all of the influences are estimated under the $p$-biased distribution for $p$ close to $p(f)$, the critical probability of $f$, then the most influential coordinate under the $p$-biased distribution must also be influential under the $p(f)$-biased distribution. 

\begin{lemma}[Finding an approximate expectation of $f$]  
    \label{lem:estimate-crit-prob} 
    Given queries to a monotone $f:\zo^n\to\zo$ with $C(f)\le k$, for any $\eps >0$ we can find some $p\in [0,1]$ satisfying $\Phi_f(p)=1/2\pm \eps$ w.h.p.\ using $O(\log(k/\eps)\log(n)/\eps^2)$ many queries.  
\end{lemma}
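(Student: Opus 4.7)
My plan is to do a binary search over $p \in [0,1]$ with empirical estimation of $\Phi_f$ at each probed point. Assuming $f$ is nonconstant (as in the setting where this lemma is invoked, namely inside the while-loop of \Cref{alg:arbitrary_certificate}), I maintain a bracketing interval $[\ell,u] \subseteq [0,1]$ with the invariant that $\Phi_f(\ell) < 1/2 - \eps/3$ and $\Phi_f(u) > 1/2 + \eps/3$; this holds initially with $\ell=0, u=1$. At each iteration I set $m = (\ell+u)/2$, draw $N = \Theta(\log(n)/\eps^2)$ i.i.d.\ samples from $\zo^n_m$, query $f$ on them, and form the empirical mean $\widehat{\Phi}(m)$. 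By Hoeffding's inequality, $|\widehat{\Phi}(m) - \Phi_f(m)| \le \eps/3$ with probability $\ge 1 - 1/\poly(n)$. If $|\widehat{\Phi}(m) - 1/2| \le 2\eps/3$ I return $m$; otherwise I replace $u$ or $\ell$ by $m$ according to the sign of $\widehat{\Phi}(m) - 1/2$ and recurse.

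Correctness splits into two cases, conditional on the estimation event. When $m$ is returned, $|\Phi_f(m) - 1/2| \le \eps/3 + 2\eps/3 = \eps$, as required. When $u$ is updated to $m$ we have $\widehat{\Phi}(m) > 1/2 + 2\eps/3$, so $\Phi_f(m) > 1/2 + \eps/3$ and the invariant is preserved; the $\ell$ case is symmetric. For termination I invoke \Cref{corollary:tangent_certificate}, the $k$-Lipschitz bound on $\Phi_f$. Whenever the loop is still running, the invariant forces
\[ \tfrac{2\eps}{3} \;<\; \Phi_f(u) - \Phi_f(\ell) \;\le\; k(u-\ell), \]
so $u-\ell > 2\eps/(3k)$. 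Since each iteration halves $u-\ell$, the loop must exit within $T = \lceil \log_2(3k/(2\eps)) \rceil = O(\log(k/\eps))$ iterations.

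Setting the per-iteration Hoeffding failure probability to $1/\poly(n)$ and union-bounding over the $T = O(\log(k/\eps))$ iterations (which, in the relevant regime $k, 1/\eps \le \poly(n)$, is absorbed into the $\log n$ factor inside $N$) yields a total query complexity of $T \cdot N = O(\log(k/\eps)\cdot\log(n)/\eps^2)$. I do not anticipate any real obstacle; the main point of care is to align the three thresholds — the Hoeffding accuracy $\eps/3$, the empirical acceptance window $2\eps/3$, and the invariant slack $\eps/3$ — so that returned values satisfy the stated guarantee while non-returns never shrink the bracket past the true crossing point.
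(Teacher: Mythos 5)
Your proof is correct and takes essentially the same approach as the paper: a noisy binary search for a point $p$ with $\Phi_f(p)\approx 1/2$, using the $k$-Lipschitz bound on $\Phi_f$ (\Cref{corollary:tangent_certificate}) to bound the number of bisection steps by $O(\log(k/\eps))$, with each step estimating $\Phi_f$ to accuracy $\eps/3$ via a Chernoff/Hoeffding bound and a union bound over all steps. The only cosmetic difference is that the paper first discretizes $[0,1]$ into $3k/\eps$ intervals and binary-searches over the endpoints, whereas you run a continuous bisection with an explicit bracketing invariant on the true values $\Phi_f(\ell),\Phi_f(u)$; both give the same $O(\log(k/\eps)\log(n)/\eps^2)$ query complexity.
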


\begin{proof} 
    Since $\Phi_f$ has Lipschitz constant $\le k$ (\Cref{corollary:tangent_certificate}), any value $\hat{p}$ that is within an additive $\pm \eps/3k$ of the true critical probability $p(f)$ of $f$ is an $\eps/3$-critical probability of $f$.  That is, 
    \[ \hat{p} = p(f)\pm \frac{\eps}{3k} \quad \Longrightarrow \quad \Phi_f(\hat{p}) = \frac1{2} \pm \frac{\eps}{3}. \] 
    
    We split the $[0,1]$ into $3k/\eps$ intervals each of length $\eps/3k$.  As observed above, the interval containing the critical probability will satisfy $\Phi_f(\hat{p}) = \frac1{2}\pm \frac{\eps}{3}$  for all $\hat{p}$ in that interval.  By the Chernoff bound, for any value $p\in [0,1]$ we can estimate $\Phi_f(p) = \E_p[f]$ to accuracy $\pm \eps/3$ and confidence $1-\delta$ using $O(\log(1/\delta)/\eps^2)$ queries.

    Performing binary search over the $3k/\eps$ intervals, with $O(\log(k/\eps))$ estimations of $\Phi_f(p)$ we are guaranteed to find a $\hat{p}$ such that our estimate of $\Phi_f(\hat{p})$ is $\frac1{2}\pm \frac{\eps}{3}\pm \frac{\eps}{3} = \frac1{2} \pm \frac{2\eps}{3}$; this implies that its true value is $\Phi_f(\hat{p}) = \frac1{2}\pm \frac{2\eps}{3} \pm \frac{\eps}{3} = \frac1{2} \pm \eps$, i.e.~$\hat{p}$ is indeed an $\eps$-approximate critical probability.   Choosing $\delta = 1/\poly(n)$ and noting that this is small enough to union bound over the $O(\log(k/\eps))$ many estimations (with much room to spare), we get that the overall query complexity is 
    \[ O(\log(k/\eps)) \cdot O(\log(n)/\eps^2) = O(\log(k/\eps)\log(n)/\eps^2). \qedhere \] 
\end{proof} 

\begin{lemma}[Finding an approximate critical probability]
    \label{lem:approx_crit_prob}
    Given queries to a monotone $f:\zo^n\to \zo$ with $C(f)\le k$ for any $0<\eps<1$, we can find $p\in [0,1]$ satisfying $p=p(f)\pm \eps$ w.h.p.\ using $O(\log (k/\eps)\log (n)/\eps^2)$ queries.  
\end{lemma}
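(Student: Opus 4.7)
The plan is to reduce this lemma to the preceding \Cref{lem:estimate-crit-prob} via \Cref{corollary:p_progress_inverse}. \Cref{lem:estimate-crit-prob} lets us find a value $p$ for which $\Phi_f(p)$ is close to $1/2$, and \Cref{corollary:p_progress_inverse} translates this ``vertical'' closeness of $\Phi_f(p)$ to $1/2$ into ``horizontal'' closeness of $p$ to the critical probability $p(f)$. This translation is where we use monotonicity of $f$ (and indirectly the OSSS/Russo--Margulis machinery embedded in the corollary).

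Concretely, I would set $\eps' \coloneqq \eps/8$ and invoke \Cref{lem:estimate-crit-prob} with accuracy parameter $\eps'$ to obtain, with high probability, a value $p \in [0,1]$ satisfying $\Phi_f(p) = 1/2 \pm \eps'$ using
\[ O\!\left(\frac{\log(k/\eps')\log n}{\eps'^2}\right) = O\!\left(\frac{\log(k/\eps)\log n}{\eps^2}\right) \]
queries to $f$. By \Cref{corollary:p_progress_inverse} applied with this $p$ and $\eps'$, we have
\[ p = p(f) \pm \frac{4\eps'}{1-4\eps'^2}. \]
A direct calculation shows that for $\eps \le 1$ and $\eps' = \eps/8$, the quantity $4\eps'/(1-4\eps'^2)$ is bounded by $\eps$ (since $1 - 4\eps'^2 \ge 15/16$, the expression is at most $(64/15)\eps' \le \eps$). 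Thus $p = p(f) \pm \eps$ as desired, and the query complexity matches the claimed bound.

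There is essentially no substantive obstacle here: the lemma is a straightforward reformulation of \Cref{lem:estimate-crit-prob} via \Cref{corollary:p_progress_inverse}, and the only thing to check is that the constants line up so that an $O(\eps)$-approximation of $\Phi_f(p) \approx 1/2$ yields an $\eps$-approximation of $p(f)$. The real work was already done in establishing \Cref{corollary:p_progress_inverse} (which uses \Cref{prop:basic_props:inf_var} and monotonicity to lower bound the slope of $\Phi_f$ near its midpoint by a constant) and in \Cref{lem:estimate-crit-prob} (which uses the Lipschitz bound of \Cref{corollary:tangent_certificate} together with binary search and Chernoff estimation).
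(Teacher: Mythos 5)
Your proof is correct and follows essentially the same route as the paper: invoke \Cref{lem:estimate-crit-prob} with accuracy $\eps/8$ to find $p$ with $\Phi_f(p) = 1/2 \pm \eps/8$, then apply \Cref{corollary:p_progress_inverse} and check that $4(\eps/8)/(1-4(\eps/8)^2) \le \eps$ for $\eps < 1$. The only difference is cosmetic (you bound $\tfrac{64}{15}\eps' \le \eps$ whereas the paper simplifies to $\eps/(2-\eps^2/8) \le \eps$); both calculations are valid.
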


\begin{proof}
    We show that any $p\in [0,1]$ satisfying $\Phi_f(p)=1/2\pm \eps/8$ satisfies the constraints of the lemma statement. The result then follows from \Cref{lem:estimate-crit-prob} which says that we can compute such a $p$ w.h.p.\ using $O(\log (k/\eps)\log n/\eps^2)$ queries.
    
    Let $p\in [0,1]$ satisfy $\E_p[f]=1/2\pm \eps/8$. Then we have
    \begin{align*}
        p&=p(f)\pm\frac{4(\eps/8)}{1-4(\eps/8)^2} \tag*{(\Cref{corollary:p_progress_inverse})}\\
        &=p(f)\pm\frac{\eps}{2-\eps^2/8}\\
        &=p(f)\pm \eps. \tag*{($\eps^2/8<1$)}
    \end{align*}
\end{proof}

\begin{lemma}[Estimating influences] 
\label{lem:estimate-influences}
Given queries to a monotone $f: \zo^n\to\zo$, some $p\in [0,1]$, and $\eps>0$, we can approximate $\Inf_{i,p}^{\sim}[f]$ to accuracy $\pm \eps$ for all $i\in [n]$ w.h.p.\ using $O(\log n/\eps^2)$ many queries. 
\end{lemma}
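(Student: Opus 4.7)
The plan is to reuse a single batch of samples to estimate all $n$ influences simultaneously, rather than producing a separate estimator for each coordinate. I draw $m = \Theta(\log n / \eps^2)$ independent samples $\bx_1, \ldots, \bx_m \sim \zo_p^n$ and query $f$ at each, giving $m$ queries in total. The key identity, which follows by combining items \ref{prop:basic_props:infr=inff} and \ref{prop:basic_props:exp_monotone} of \Cref{prop:basic_props} (using monotonicity), is
\[ \Inf_{i,p}^{\sim}[f] = 4p(1-p)\bigl(\E_p[f_{x_i=1}] - \E_p[f_{x_i=0}]\bigr). \]
Because the coordinates of $\zo_p^n$ are independent, $\E_p[f_{x_i=b}] = \E_p[f(\bx) \mid \bx_i = b]$, so I can estimate each of these conditional expectations by the empirical average $\hat\mu_{i,b} := |S_i^b|^{-1}\sum_{j\in S_i^b} f(\bx_j)$ over the bucket $S_i^b := \{j : (\bx_j)_i = b\}$, and output $\hat{I}_i := 4p(1-p)(\hat\mu_{i,1} - \hat\mu_{i,0})$ as the estimate of $\Inf_{i,p}^{\sim}[f]$.

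The error analysis is a two-step concentration argument. First, a Chernoff bound on the binomial $|S_i^b|$ shows that each of the $2n$ buckets has size $\Omega(m \cdot \min(p,1-p))$ with probability $\ge 1 - 1/\poly(n)$, provided $m \cdot \min(p,1-p) \ge \Omega(\log n)$. Second, Hoeffding's inequality applied to $\hat\mu_{i,b}$ conditional on its bucket size gives $|\hat\mu_{i,b} - \E_p[f_{x_i=b}]| \le \eps/(8p(1-p))$ with probability $\ge 1 - 1/\poly(n)$, provided the bucket size is at least $\Omega(\log(n) \cdot p^2(1-p)^2/\eps^2)$. Multiplying through by the $4p(1-p)$ prefactor and union-bounding over all $2n$ estimators then yields $|\hat{I}_i - \Inf_{i,p}^{\sim}[f]| \le \eps$ for every $i$ w.h.p.

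The one subtlety, which I expect to be the main obstacle in the write-up, is the edge case when $p$ is very close to $0$ or $1$: then one of the two buckets $S_i^b$ is essentially empty, and the corresponding estimator is useless. The escape hatch is that in this regime the true influences are themselves tiny: if $4p(1-p) \le \eps$, then $\Inf_{i,p}^{\sim}[f] \le 4p(1-p) \le \eps$ trivially for every $i$, so the algorithm can simply output $0$ for every coordinate and make zero queries. In the complementary regime $4p(1-p) > \eps$, one checks that $\min(p,1-p) \ge \eps/4$, at which point both conditions on $m$ above collapse to $m = \Omega(\log n / \eps^2)$, delivering the advertised query complexity.
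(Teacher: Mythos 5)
Your proof is correct, but it takes a different route from the paper's, and the comparison is instructive. You use the symmetric identity $\Inf_{i,p}^{\sim}[f] = 4p(1-p)\big(\E_p[f_{x_i=1}] - \E_p[f_{x_i=0}]\big)$ and therefore estimate \emph{both} conditional expectations $\hat\mu_{i,0}$ and $\hat\mu_{i,1}$, which forces you to confront the regime where $p$ is near $0$ or $1$: one bucket becomes empty and the corresponding estimator is useless. Your fix---observe that $\Inf_{i,p}^{\sim}[f] \le 4p(1-p)$, so if $4p(1-p) \le \eps$ just output $0$---is valid, and your bucket-size vs.\ required-accuracy bookkeeping in the complementary regime $4p(1-p) > \eps$ checks out (one small imprecision: the Chernoff condition you state only collapses to $m = \Omega(\log n/\eps)$, not $\Omega(\log n/\eps^2)$, but since the Hoeffding condition dominates this does not affect the conclusion). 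The paper avoids this edge case entirely by using the \emph{asymmetric} identity $\Inf_{i,p}^{\sim}[f]=4(1-p)\big(\E_p[f]-\E_p[f_{x_i=0}]\big)=4p\big(\E_p[f_{x_i=1}]-\E_p[f]\big)$: it estimates $\E_p[f]$ from the whole sample, then for each $i$ estimates only $\E_p[f_{x_i=b}]$ for whichever $b$ gives the larger bucket, which is guaranteed to contain $\ge |\bS|/2$ samples regardless of $p$. The prefactor is then $4(1-p)$ or $4p \le 4$ rather than $4p(1-p)$, so accuracy $\eps/8$ on each estimate suffices uniformly in $p$. Both routes deliver $O(\log n/\eps^2)$; the paper's is tighter in the sense that it needs no case split, while yours has the aesthetic appeal of a single symmetric formula.
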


\begin{proof} 
Rewriting \Cref{prop:basic_props:exp_monotone} using $\Inf_{i,p}^{\sim}[f]=4p(1-p)\Inf_{i,p}^{\oplus}[f]$ we have
\begin{equation}
\Inf_{i,p}^{\sim}[f]=4(1-p)\paren{\E_p[f]-\E_p[f_{x_i=0}]}=4p\paren{\E_p[f_{x_i=1}]-\E_p[f]}.\label{eq:influence_exp_id}
\end{equation}
We show with a single random sample $\bS\subseteq \zo^n$ of size $O(\log n/\varepsilon^2)$ we can estimate $\Inf_{i,p}^{\sim}[f]$ to accuracy $\varepsilon$ for all $i\in[n]$ by estimating $\E_p[f]$ and either $\E_p[f_{x_i=1}]$ or $\E_p[f_{x_i=0}]$. We write $\overline{\E}_{\bS}[f]$ for the $p$-biased expectation of $f$ estimated from the set $\bS$. For each $i\in [n]$ and $b\in\zo$, we define $\bS_b=\{x^{-i}\in\zo^{n-1}:x\in\bS \text{ and }x_i=b\}$ where $x^{-i}$ denotes the string $x$ with the $i$th coordinate removed. Since $|\bS|=|\bS_1|+|\bS_0|$ we must have $|\bS_b|\ge |\bS|/2$ for some $b\in\zo$. We then estimate $\E_p[f_{x_i=b}]$ for this value of $b$ and use the appropriate identity from \cref{eq:influence_exp_id} to estimate the $i$th influence. Note that we can perform this estimate of $\E_p[f_{x_i=b}]$ because the strings in $\bS_b$ are distributed according to $\{0,1\}^{n-1}_p$ and we already know the values of $f_{x_i=b}$ for all strings in $\bS_b$ (since the query values of $f$ on $\bS$ are known). Thus by a Chernoff bound we can estimate both $\E_p[f_{x_i=b}]$ and $\E_p[f]$ to accuracy $\pm\varepsilon/8$ and confidence $1-\delta$ using $O(\log(1/\delta)/\varepsilon^2)$ random samples. These estimates then ensure that our estimate of $\Inf_{i,p}^{\sim}[f]$ has accuracy $\pm\varepsilon$. For example, if $b=0$, our estimates $\overline{\E}_{\bS}[f]$ and $\overline{\E}_{\bS_0}[f_{x_i=0}]$ satisfy
\begin{align*}
    \overline{\Inf}_{i,p}^{\sim}[f]&=4(1-p)\paren{\overline{\E}_{\bS}[f]-\overline{\E}_{\bS_0}[f_{x_i=0}]}\\
    &=4(1-p)\Bigl((\E_p[f]\pm \varepsilon/8)-(\E_p[f_{x_i=0}]\pm\varepsilon/8)\Bigr)\\
    &=\Inf_{i,p}^{\sim}[f]\pm(1-p)\varepsilon=\Inf_{i,p}^{\sim}[f]\pm\varepsilon
\end{align*}
where $\overline{\Inf}_{i,p}^{\sim}[f]$ denotes the influence estimate. We choose $\delta=1/\poly(n)$ small enough to union bound over all $i\in [n]$ which makes the total number of random samples/queries $O(\log n/\eps^2)$ as desired. 
\end{proof}

\begin{lemma}
\label{lem:inf_lower_bound}
Let $f:\zo^n\to\zo$ be a monotone function with $C(f)\le k$. Let $\overline{p}=p(f)\pm \eps$ for some $0<\eps<1/k^2$ and suppose $\overline{\Inf}_{i,\overline{p}}^{\sim}[f]=\Inf_{i,\overline{p}}^{\sim}[f]\pm k\eps$ for all $i\in [n]$. Then 
$$
\Inf_{i,p(f)}^{\oplus}[f]\ge \frac{1}{8k^2}-3k\eps
$$
where $i=\argmax_{i\in [n]} \overline{\Inf}_{i,\overline{p}}^{\sim}[f]$.
\end{lemma}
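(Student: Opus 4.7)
The plan is to chain together the OSSS inequality with the two Lipschitz-type bounds from \Cref{corollary:tangent_certificate,corollary:lipschitz_influence}, absorbing error terms along the way. At a high level: OSSS at $\overline p$ guarantees a coordinate whose true rerandomized influence at $\overline p$ is large; then I need to transfer this (a) from the ``ideal'' argmax to the estimate-based argmax $i$, and (b) from $\overline p$ to $p(f)$, each step costing only $O(k\eps)$.

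First I would apply \Cref{thm:OSSS} at the biased distribution $\overline p$: combined with $D(f) \le C(f)^2 \le k^2$ from \Cref{fact:dt-at-most-cert2}, this gives
$$\max_j \Inf_{j,\overline p}^{\sim}[f] \;\ge\; \frac{\Var_{\overline p}[f]}{D(f)} \;\ge\; \frac{\Var_{\overline p}[f]}{k^2}.$$
To lower bound $\Var_{\overline p}[f]$, I would use \Cref{corollary:tangent_certificate} together with the hypothesis $|\overline p - p(f)| \le \eps$ to get $\Phi_f(\overline p) = \tfrac12 \pm k\eps$, whence $\Var_{\overline p}[f] = \Phi_f(\overline p)(1-\Phi_f(\overline p)) \ge \tfrac14 - (k\eps)^2$. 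So $\max_j \Inf_{j,\overline p}^{\sim}[f] \ge \tfrac{1}{4k^2} - \eps^2$.

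Next, I would transfer from the true argmax $j^\star = \argmax_j \Inf_{j,\overline p}^{\sim}[f]$ to the estimate-based choice $i$. By the definition of $i$, $\overline{\Inf}_{i,\overline p}^{\sim}[f] \ge \overline{\Inf}_{j^\star,\overline p}^{\sim}[f]$, and the hypothesis that every estimate is accurate to $\pm k\eps$ then composes to give $\Inf_{i,\overline p}^{\sim}[f] \ge \Inf_{j^\star,\overline p}^{\sim}[f] - 2k\eps$. I would then use \Cref{prop:basic_props:infr=inff} with $4\overline p(1-\overline p) \le 1$ to pass from rerandomized to flip influence, obtaining $\Inf_{i,\overline p}^{\oplus}[f] \ge \Inf_{i,\overline p}^{\sim}[f]$, and finally apply \Cref{corollary:lipschitz_influence} to shift from $\overline p$ back to $p(f)$, incurring one more $k\eps$ term: $\Inf_{i,p(f)}^{\oplus}[f] \ge \Inf_{i,\overline p}^{\oplus}[f] - k\eps$.

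Chaining these together yields $\Inf_{i,p(f)}^{\oplus}[f] \ge \tfrac{1}{4k^2} - \eps^2 - 3k\eps$. The hypothesis $\eps < 1/k^2$ bounds $\eps^2 \le \tfrac{1}{8k^2}$ in the regime where the claimed lower bound is non-negative (otherwise the lemma is vacuous), giving the desired $\tfrac{1}{8k^2} - 3k\eps$. The main obstacle is purely bookkeeping: there are three independent sources of $k\eps$-order slack (two from influence estimation, one from the Lipschitz transfer in $p$) plus a quadratic slack from the variance bound, and making them line up with the stated constants requires care but no new ideas.
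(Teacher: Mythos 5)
Your proof is correct and takes essentially the same route as the paper's: OSSS at $\overline{p}$ combined with $D(f)\le C(f)^2\le k^2$, a variance lower bound at $\overline{p}$ via the Lipschitz constant of $\Phi_f$, transferring from the true argmax to the estimate-based argmax at a cost of $2k\eps$, passing from rerandomized to flip influence via $4p(1-p)\le 1$, and finally the Lipschitz bound on $\Inf_{i,p}^{\oplus}$ to move from $\overline{p}$ to $p(f)$ at a cost of $k\eps$. The only cosmetic difference is bookkeeping: the paper absorbs the $(k\eps)^2$ variance slack immediately into the constant (writing $\Var_{\overline{p}}[f]\ge \tfrac14-(k\eps)^2\ge \tfrac18$), whereas you carry the $\eps^2$ term to the end and discharge it via non-vacuity; both routes encounter the same mild wrinkle for $k\in\{1,2\}$ and both are fine for the purposes the lemma is used.
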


\begin{proof}
    Recall that $\Var_p[f]=\Phi_f(p)(1-\Phi_f(p))$ and for our estimate $\overline{p}=p(f)\pm \eps$ we have $\Phi_f(\overline{p})=1/2\pm k\eps$ since $\Phi_f$ has Lipschitz constant $\le k$ (\Cref{corollary:tangent_certificate}). Thus by monotonicity $\Var_{\overline{p}}[f]\ge (1/2-k\eps)(1/2+k\eps)\ge 1/8$ (using the assumption that $\eps<1/k^2$). The OSSS inequality, \Cref{thm:OSSS}, then states 
    $$
    \max_{i\in [n]} \big\{ \Inf_{i,\overline{p}}^{\sim}[f]\big\} \ge \frac{\Var_{\overline{p}}[f]}{D(f)}\ge \frac{1}{8D(f)}.
    $$
    Furthermore, we can lower bound $1/8D(f)\ge 1/8k^2$ using \Cref{fact:dt-at-most-cert2}. Since our estimate $\overline{\Inf}_{i,\overline{p}}^{\sim}[f]$ has accuracy $\pm k\eps$ the maximum influence estimate satisfies
    $$
    \max_i \overline{\Inf}_{i,\overline{p}}^{\sim}[f]\ge \max_i\Inf_{i,\overline{p}}^{\sim}[f]-k\eps\ge \frac{1}{8k^2}-k\eps. 
    $$
    Hence, the true influence at this maximal $i$ satisfies $\Inf_{i,\overline{p}}^{\sim}[f]\ge (1/8k^2-k\eps)-k\eps=1/8k^2-2k\eps$. Finally, to translate this bound to a lower bound on $\Inf_{i,p(f)}^{\oplus}[f]$ we switch to flip influence and apply our Lipschitz bound on $\Inf_{i,p}^{\oplus}$. In other words,
    \begin{align*}
        \Inf_{i,p(f)}^{\oplus}[f]&\ge \Inf_{i,\overline{p}}^{\oplus}[f]-|p(f)-\overline{p}|\cdot k \tag{\Cref{corollary:lipschitz_influence}}\\
        &\ge \Inf_{i,\overline{p}}^{\oplus}[f]-k\eps\\
        &\ge \Inf_{i,\overline{p}}^{\sim}[f]-k\eps\\
        &\ge \frac{1}{8k^2}-3k\eps.\qedhere
    \end{align*}
\end{proof}

\subsection{Proof of \texorpdfstring{\Cref{theorem:find_cert_algo_correctness}}{Theorem 3}}

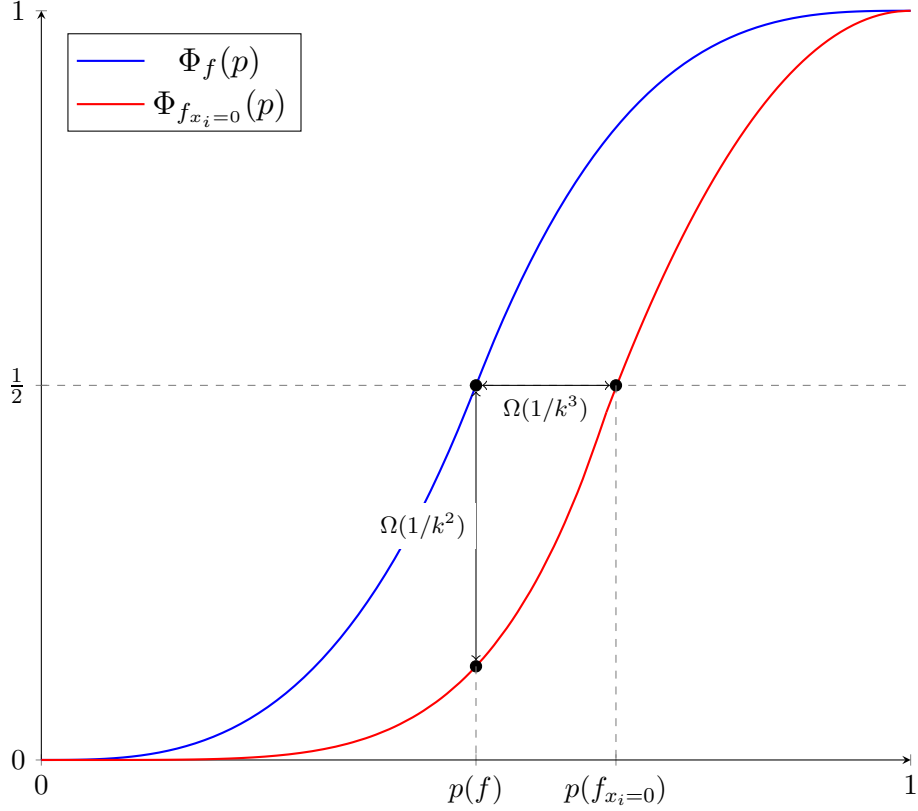
\begin{figure}
    \begin{center}
    \begin{tikzpicture}
    \begin{axis}[
            scale only axis,
            width=0.7\linewidth,
            xtick={0,0.5, 0.661, 1},
            xticklabels={0,$p(f)$,$p(f_{x_i=0})$,1},
            ytick={0,0.5,1},
            yticklabels={0,$\frac{1}{2}$,1},
            xmin=0, xmax=1,
            ymin=0, ymax=1,
            ylabel = {},
            y label style = {rotate=-90},
            axis lines=left,
            clip=false,
            legend pos=north west,
            legend style={font=\small, nodes={scale=1.25, transform shape}},
            legend image post style={scale=1.5},
        ]
        \addplot[gray, dashed, forget plot] coordinates {(0,0.5) (1,0.5)};

        \addplot[color=blue,smooth,thick,-,domain=0:0.5, forget plot] {4*x^3};
        \addplot[color=blue,smooth,thick,-,domain=0.5:1] {4*(x-1)^(3)+1};
        \addlegendentry{ $\Phi_f(p)$}
        
        \addplot[color=red,smooth,thick,-,domain=0:0.65, forget plot] {4*x^5};
        \addplot[color=red,smooth,thick,-,domain=0.65:1] {-4.3746*(x-1)^2+1};
        \addlegendentry{ $\Phi_{f_{x_i=0}}(p)$}
        
        
        
        
        \node[draw,circle,fill=black,inner sep=1.5pt] (epf0) at (0.5,0.125) {};
       
        
        \node[draw,circle,fill=black,inner sep=1.5pt] (p) at (0.5,0.5) {};
        
        \node[draw,circle,fill=black,inner sep=1.5pt,] (p0) at (0.661,0.5) {};
       
        \draw [<->] (p) -- (p0) node[midway,below,fill=white!30] {\footnotesize $\Omega(1/k^3)$};
        
        \draw[<->] (p) -- (epf0) node[midway,left,fill=white!30] {\footnotesize $\Omega(1/k^2)$};
        
        
        \addplot[gray, dashed, forget plot] coordinates {(0.5,0.125) (0.5,0)};
        
        \addplot[gray, dashed, forget plot] coordinates {(0.661,0.5) (0.661,0)};
        
    \end{axis}
\end{tikzpicture}
\end{center}
\caption{Illustration of the key atomic step in the proof of \Cref{theorem:find_cert_algo_correctness}. Let $p(f)$ denote the critical probability of $f$.  The OSSS inequality implies the existence of a coordinate $i\in [n]$ such that 
$\Phi_f(p(f)) - \Phi_{f_{x_i=0}}(p(f)) \ge  \Omega(k^{-2}),$
and we bound, using the Russo--Margulis lemma, the Lipschitz constant of $\Phi_{f_{x_i=0}}$ by $\le k$. We therefore conclude that the critical probabilities of $f$ and $f_{x_i=0}$ differ by $\Omega(k^{-3})$.}
\label{fig:proof_slope_bound}
\end{figure}

For our proof, we first show that accurate estimates of the critical probability of $f$ and the influences will ensure quick progress towards termination. Then we analyze the query complexity required to estimate these quantities to the specified accuracy with high confidence. This proof can be read in conjunction with \Cref{fig:proof_slope_bound} which illustrates the main idea.  

\paragraph{Proof of correctness.}
Our measure of progress is the critical probability of $f$. At a high level we show that if we find an $O(1/k^3)$-approximate critical probability and estimate influences to accuracy $O(1/k^2)$ at each step of the algorithm, then the critical probability of $f$ is guaranteed to increase or decrease by $\Omega(1/k^3)$. Since the function is constant when the critical probability is $0$ or $1$, we know that the algorithm must terminate after $O(k^3)$ steps. 

To be more specific, let $f$ be a nonconstant function obtained at some point in the algorithm with $C(f)\le k$. Let $0<\eps<1/k^2$ be arbitrary and let $\overline{p}=p(f)\pm \eps$ be an approximate critical probability and suppose each $\Inf_{i,p}^{\sim}[f]$ is estimated to accuracy $\pm k\eps$. Then, we can write
\begin{align*}
    p(f_{x_i=0})-p(f)&\ge \frac{p(f)\cdot \Inf_{i,p(f)}^{\oplus}[f]}{k}\tag{\Cref{lem:p_progress}}\\
    &\ge p(f)\cdot \paren{\frac{1}{8k^3}-3\eps}\tag{\Cref{lem:inf_lower_bound}}
\end{align*}
and likewise
$$
p(f)-p(f_{x_i=1})\ge (1-p(f))\cdot\paren{\frac{1}{8k^3}-3\eps}.
$$
In the final step of the algorithm's loop $f$ is restricted to $x_i=0$ if $\overline{p}\ge 1/2$ in which case we have $p(f)\ge 1/2-\eps$ and thus $p(f_{x_i=0})-p(f)\ge (1/2-\eps)(1/8k^3-3\eps)$. Note importantly that if $\overline{p}\ge 1/2$ the next estimate will also be greater than $1/2$ and so on, ensuring that the final certificate will be a $0$-certificate. We can then choose $\eps=O(1/k^3)$ small enough to ensure $p(f_{x_i=0})-p(f)\ge \Omega(1/k^3)$ and likewise in the case that $\overline{p}<1/2$. 

In both cases, one step of the main loop makes at least $\Omega(1/k^3)$ progress towards termination and so the loop iterates $O(k^3)$ times. Hence, the final certificate has at most $O(k^3)$ coordinates since each iteration of the loop adds one coordinate. 

\paragraph{Query complexity.}
\Cref{lem:estimate-crit-prob} shows we can compute a $O(1/k^3)$-approximate critical probability using $O(k^4\log k\log n)$ queries. Moreover, computing a $O(1/k^2)$-approximation of influence requires $O(k^4\log n)$ queries by \Cref{lem:estimate-influences}. Note also that we can test whether $f$ is constant with $\le 2$ queries using monotonicity ($f$ is constant if and only if $f(0^n)= f(1^n)$). Thus, one iteration of the main loop makes $O(k^4\log k\log n )$ queries to $f$. Since the main loop executes $O(k^3)$ times, the total number of queries is at most $O(k^7\log k\log n)$.

\subsection{Reducing the query complexity via fewer critical probability estimates}

We can reduce the query complexity of \Cref{alg:arbitrary_certificate} by a $\log k$ factor if we instead estimate the critical probability of $f$ once at the beginning of the algorithm then deterministically update it by the error term we calculated as $\eps$ in the proof above. At a high level, the idea is that the analysis for \Cref{theorem:find_cert_algo_correctness} shows that restricting $f$ by an influential coordinate will shift its critical probability by at least $\Omega(1/k^3)$. Hence, in the worst case, the algorithm makes the smallest amount of progress, approximately $1/k^3$, in each step. We can thus manually shift our critical probability estimate after each iteration by the minimal amount of progress we expect instead of using additional queries to $f$ to determine the new critical probability. 

In the lemma below we assume that the critical probability of $f$ is initially $\ge 1/2-\eps$, and hence $f$ is simplified by repeatedly restricting $0$-coordinates. The proof shows these restrictions force its critical probability to approach $1$. The alternate case where the initial critical probability is less than $1/2+\eps$ is analogous. In this case, one can show via symmetric arguments that the estimate $\overline{p}_t=1/2-(t-1)\eps$ satisfies $p(f_t)\le \overline{p}_t$ for all $t$.


\begin{lemma}
    \label{lem:p_progress_deterministic}
    Fix an error term $0<\eps\le 1/40k^3$ and suppose $p(f)\ge 1/2-\eps$. Consider a variant of \Cref{alg:arbitrary_certificate} where at the $t^{\text{th}}$ step we estimate the critical probability as $\overline{p}_t=1/2+(t-1)\eps$ and we always set $f\leftarrow f_{x_i=0}$. Let $f_t:\zo^{n-t}\to\zo$ denote the function at the $t^{\text{th}}$ step. Then $p(f_t)\ge \overline{p}_t$ for all $t$ for which $f_t$ is nonconstant.
\end{lemma}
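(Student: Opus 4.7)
The plan is to prove the claim by induction on $t$, showing $p(f_t) \ge \overline{p}_t$ for all $t$ for which $f_t$ is nonconstant. The inductive step splits into two cases depending on whether the true critical probability $p(f_{t-1})$ has already advanced past the new target $\overline{p}_t$.

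In detail, suppose $p(f_{t-1}) \ge \overline{p}_{t-1}$. Because restricting a monotone function by $x_i = 0$ weakly increases its critical probability (as follows from the identity $\Phi_{f_{x_i=0}}(p) = \Phi_f(p) - p\cdot\Inf_{i,p}^{\oplus}[f] \le \Phi_f(p)$), we get the free inequality $p(f_t) \ge p(f_{t-1})$. So in Case~A, where $p(f_{t-1}) \ge \overline{p}_t$, we are immediately done. In Case~B, $\overline{p}_{t-1} \le p(f_{t-1}) < \overline{p}_t = \overline{p}_{t-1} + \eps$, so $|p(f_{t-1}) - \overline{p}_t| \le \eps < 1/k^2$. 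This places us precisely in the hypothesis of \Cref{lem:inf_lower_bound} applied to $f_{t-1}$ with approximate critical probability $\overline{p}_t$ (the influence-estimate accuracy of $\pm k\eps$ is exactly what the algorithm maintains), and the lemma gives
\[ \Inf_{i,\,p(f_{t-1})}^{\oplus}[f_{t-1}] \ge \frac{1}{8k^2} - 3k\eps \ge \frac{1}{20k^2} \]
for the coordinate $i$ chosen by the algorithm, using $\eps \le 1/(40k^3)$. Plugging this into \Cref{lem:p_progress} yields the progress bound
\[ p(f_t) - p(f_{t-1}) \ge \frac{p(f_{t-1})\cdot \Inf_{i,p(f_{t-1})}^{\oplus}[f_{t-1}]}{k} \ge \frac{p(f_{t-1})}{20k^3}, \]
and since $p(f_{t-1}) \ge \overline{p}_{t-1} \ge 1/2 - \eps$ is bounded below by a constant, the choice of $1/40$ in the hypothesis makes this per-step progress at least $\eps$. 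Combined with the inductive hypothesis, $p(f_t) \ge p(f_{t-1}) + \eps \ge \overline{p}_{t-1} + \eps = \overline{p}_t$, which completes the inductive step.

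The base case $t = 1$ reduces to the same two-case analysis applied to $f_0 = f$, with the hypothesis $p(f_0) \ge 1/2 - \eps$ playing the role of ``$p(f_0) \ge \overline{p}_0$'' for a virtual $\overline{p}_0 = 1/2 - \eps$: either $p(f_0) \ge 1/2$ already, or $|p(f_0) - \overline{p}_1| \le \eps$ and the Case~B argument gives $p(f_1) \ge p(f_0) + \eps \ge 1/2 = \overline{p}_1$. The main obstacle I expect is the tightness of the constants: the chain OSSS~$\to$~$\Phi$-shift via monotonicity~$\to$~critical-probability shift via Russo--Margulis loses a factor on the order of $k^3$ overall, so the inequality $p(f_{t-1})/(20k^3) \ge \eps$ nearly saturates when $p(f_{t-1})$ sits at its worst-case lower bound $1/2 - \eps$ and $\eps$ saturates $1/(40k^3)$, and I will need to verify that every constant in the chain is chosen so that the per-step progress strictly dominates $\eps$.
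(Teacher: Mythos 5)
Your proof is essentially the paper's proof: induction on $t$, split into the ``already past the target'' case (handled immediately by the monotonicity fact $\Phi_{f_{x_i=0}}(p)\le\Phi_f(p)$) and the ``within $\eps$ of the estimate'' case (handled by chaining \Cref{lem:inf_lower_bound} and \Cref{lem:p_progress}). The one-step index shift in your bookkeeping (you pair $f_{t-1}$ with the estimate $\overline{p}_t$, while the paper pairs $f_t$ with $\overline{p}_t$) is immaterial, since in either convention the estimate you apply \Cref{lem:inf_lower_bound} to lies within $\eps$ of the true critical probability.

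Your flagged concern about the constant is a legitimate catch, and in fact applies to the paper's own proof. The per-step progress bound is $p(f_t)\bigl(\tfrac{1}{8k^3}-3\eps\bigr)$, and the only step for which the algorithm cannot use the lower bound $p(f_t)\ge 1/2$ is the very first one, where we only know $p(f_0)\ge 1/2-\eps$. There, at the boundary $\eps=1/(40k^3)$ one gets
\[
\Bigl(\tfrac{1}{2}-\eps\Bigr)\Bigl(\tfrac{1}{8k^3}-3\eps\Bigr)=\eps-2\eps^2<\eps,
\]
so the claimed bound ``$p(f_1)-p(f_0)\ge\eps$'' fails by a lower-order term. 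This is a minor arithmetic slip in the choice of constant, not in the argument: shrinking the cap slightly (say to $\eps\le 1/(50k^3)$), or invoking $p(f_t)\ge 1/2$ for $t\ge 1$ and handling the $t=0$ step with a marginally smaller constant, closes the gap without affecting the asymptotics. Your instinct to double-check that the per-step progress strictly dominates $\eps$ is exactly the right thing to verify here.
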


\begin{proof}
    The proof is by induction on $t$. The statement holds for $t=0$ by assumption. Otherwise assume that $p(f_t)\ge \overline{p}_t$. Then we show $p(f_{t+1})\ge \overline{p}_{t+1}$. If $p(f_t)> \overline{p}_{t+1}$ then there's nothing left to show since $\Phi_{f_t}(p(f_t))\le \Phi_{f_{t+1}}(p(f_{t+1}))$ always holds by \Cref{prop:basic_props:exp_monotone} and hence $p(f_t)\le p(f_{t+1})$. Otherwise, assume $p(f_t)\le \overline{p}_{t+1}$. In particular, $\overline{p}_t\le p(f_t)\le \overline{p}_{t+1}=\overline{p}_{t}+\eps$ which shows that $\overline{p}_{t}=p(f_t)\pm \eps$. The influence estimates have accuracy $\pm k\eps$ which then allows us to apply \Cref{lem:inf_lower_bound,lem:p_progress} as in the proof of \Cref{theorem:find_cert_algo_correctness} above, to conclude 
    $$
     p(f_{t+1})-p(f_t)\ge p(f_t)\cdot \paren{\frac{1}{8k^3}-3\eps}\ge\paren{\frac{1}{2}-\eps}\paren{\frac{1}{8k^3}-3\eps}.
    $$
    Choosing $\eps\le 1/40k^3$ then ensures $p(f_{t+1})-p(f_t)\ge \eps$ which completes the induction since $p(f_t)+\eps\ge \overline{p}_t+\eps=\overline{p}_{t+1}$.
\end{proof}

Equipped with \Cref{lem:p_progress_deterministic}, we can give a slight improvement on the query complexity of \Cref{theorem:find_cert_algo_correctness}. 

\begin{theorem}
\label{thm:arbitrary_cert_best_upperbound}
Given a monotone function $f:\zo^n\to\zo$ with $C(f)\le k$, there is an algorithm which w.h.p.\ returns a certificate of size $O(k^3)$ and makes $O(k^7\log n)$ queries to $f$. 
\end{theorem}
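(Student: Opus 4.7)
The plan is to reuse the structure of the proof of \Cref{theorem:find_cert_algo_correctness}, replacing the per-iteration estimation of the critical probability with a single estimation at the start, followed by deterministic updates as justified by \Cref{lem:p_progress_deterministic}. This removes the only source of the extra $\log k$ factor in the previous bound.

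First I would invoke \Cref{lem:approx_crit_prob} once on the input function to compute a single estimate $\overline{p}_0 = p(f) \pm \eps$ for $\eps = \Theta(1/k^3)$ (specifically chosen so $\eps \le 1/40k^3$ as required by \Cref{lem:p_progress_deterministic}). This initial estimate costs $O(\log(k/\eps)\log(n)/\eps^2) = O(k^6 \log k \log n)$ queries. Based on whether $\overline{p}_0 \ge 1/2$ or $\overline{p}_0 < 1/2$, I fix a direction: in the first case the algorithm will always restrict $f \leftarrow f_{x_i=0}$ and use estimate $\overline{p}_t = 1/2 + (t-1)\eps$ at iteration $t$; in the second case it will always restrict $f \leftarrow f_{x_i = 1}$ and use $\overline{p}_t = 1/2 - (t-1)\eps$ by the symmetric version of the lemma. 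In both cases, no additional queries are spent on critical-probability estimation after the initial call.

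Next, inside each iteration I use \Cref{lem:estimate-influences} to estimate every $\Inf_{i,\overline{p}_t}^{\sim}[f_t]$ to accuracy $\pm k\eps = \pm \Theta(1/k^2)$, which costs $O(\log n / (k\eps)^2) = O(k^4 \log n)$ queries per iteration, and then add the most influential coordinate to $S$. \Cref{lem:p_progress_deterministic} guarantees that the true critical probability $p(f_t)$ tracks $\overline{p}_t$ in the sense $p(f_t) \ge \overline{p}_t$ throughout the run, so at every step $\overline{p}_t = p(f_t) \pm \eps$, which is precisely the hypothesis needed to invoke \Cref{lem:inf_lower_bound} and then \Cref{lem:p_progress}. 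This chain gives $|p(f_{t+1}) - p(f_t)| \ge \Omega(1/k^3)$, forcing $f$ to become constant after $O(k^3)$ iterations and keeping $|S| = O(k^3)$. Testing constancy at each iteration uses only $2$ queries via monotonicity ($f$ is constant iff $f(0^n) = f(1^n)$).

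Summing, the total query cost is the one-time initial estimate plus $O(k^3)$ iterations of influence estimation:
\[
O(k^6 \log k \log n) \;+\; O(k^3) \cdot O(k^4 \log n) \;=\; O(k^7 \log n),
\]
since $\log k = o(k)$ absorbs the first term into the second. A union bound over the $O(k^3)$ iterations (each succeeding with probability $1 - 1/\poly(n)$) gives the overall w.h.p.\ guarantee. The main subtlety is already handled by \Cref{lem:p_progress_deterministic}: one must verify that the deterministic drift $\overline{p}_{t+1} = \overline{p}_t + \eps$ is never too optimistic compared to the true progress, and the inductive argument in that lemma shows the minimum guaranteed shift of $\Omega(1/k^3)$ from \Cref{lem:inf_lower_bound} suffices provided $\eps \le 1/40k^3$. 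Beyond checking this parameter choice is consistent, the rest of the argument is essentially a re-accounting of the proof of \Cref{theorem:find_cert_algo_correctness}.
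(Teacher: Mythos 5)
Your proposal is correct and follows essentially the same approach as the paper: estimate $p(f)$ once via \Cref{lem:approx_crit_prob} with $\eps=1/(40k^3)$, update $\overline{p}_t$ deterministically by $\eps$ per iteration, invoke \Cref{lem:p_progress_deterministic} to bound the iteration count by $O(k^3)$, and total the cost as $O(k^6\log k\log n) + O(k^3)\cdot O(k^4\log n)=O(k^7\log n)$. One minor imprecision worth noting: the invariant $p(f_t)\ge\overline{p}_t$ from \Cref{lem:p_progress_deterministic} does \emph{not} imply $\overline{p}_t = p(f_t)\pm\eps$ at every step as you assert; the lemma's inductive proof handles the case $p(f_t)>\overline{p}_{t+1}$ separately (there a $0$-restriction can only increase the critical probability, so $p(f_{t+1})\ge\overline{p}_{t+1}$ trivially, and no $\Omega(1/k^3)$ per-step shift is needed or guaranteed). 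Termination after $O(k^3)$ iterations then follows just from $p(f_t)\ge\overline{p}_t=1/2+(t-1)\eps$ and $p(f_t)<1$ for nonconstant $f_t$, not from a uniform lower bound on $|p(f_{t+1})-p(f_t)|$.
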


\begin{proof}
    We modify \Cref{alg:arbitrary_certificate} to estimate the critical probability of $f$ once at the start and then increment/decrement it by $\eps=1/(40k^3)$ after each iteration. Then the algorithm terminates after at most $O(k^3)$ iterations of the main loop by \Cref{lem:p_progress_deterministic}. We use \Cref{lem:approx_crit_prob} to estimate the critical probability of $f$ initially which requires $O(k^6\log k\log n)$ queries for our choice of $\eps$. Since this estimate $\overline{p}$ satisfies $\overline{p}=p(f)\pm \eps$ if $\overline{p}\ge 1/2$ then $p(f)\ge 1/2-\eps$ ensures the desired precondition for \Cref{lem:p_progress_deterministic} and otherwise $p(f)\le 1/2+\eps$ and the symmetric case applies.
    
    Each step of the algorithm's loop requires $O(k^4\log n)$ queries to estimate the influences to accuracy $k\eps=1/40k^2$ by \Cref{lem:estimate-influences}. Hence the algorithm makes $O(k^7\log n)$ queries overall.
\end{proof}

\section{Completing the proof of~\texorpdfstring{\Cref{thm:main}}{Theorem 1}}

In this section we show how to find a certificate for a given input using \Cref{alg:arbitrary_certificate} as a subroutine. The algorithm itself is fairly straightforward. For a monotone function $f$ and an input $x^\star$, we find an arbitrary certificate of $f$ using \Cref{alg:arbitrary_certificate} and then restrict $f$ on the coordinates in the certificate to the values specified by $x^\star$. Then we recurse on the subfunction and repeat until the function is constant. 

\begin{algorithm}
    \begin{algorithmic}[1]
    \Require A monotone function $f:\{0,1\}^n\to \{0,1\}$ and input $x^\star$.
    \State Initialize $S\leftarrow \varnothing$
    \While{$f$ is nonconstant}
    \State $s\leftarrow$ the output of \Cref{alg:arbitrary_certificate} on $f$
    \State $S\leftarrow S\cup s$ \Comment{Update certificate $S$ with coordinates from $s$}
    \State $f\leftarrow f_{x_i={x}^\star_i,i\in s}$ \Comment{restrict $f$ according to $s$}
    \EndWhile
    \State \Return the certificate $S$.
    \end{algorithmic}
    \caption{Finding a certificate for a given input}
    \label{alg:particular_certificate}
\end{algorithm}

We prove the following guarantee on \Cref{alg:particular_certificate}.

\begin{theorem}
    \label{theorem:particular_cert_algo_correctness}
    Let $f:\{0,1\}^n\to \{0,1\}$ be a monotone function with $C(f)\le k$, then \Cref{alg:particular_certificate} iterates $O(k)$ times and w.h.p. outputs a certificate of size $O(k^4)$. 
\end{theorem}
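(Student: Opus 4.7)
The plan is to track a minimum $1$-certificate of $f^{(t)}$ at $x^\star$ (where $f^{(t)}$ denotes the restricted function at the start of iteration $t$) and argue that each iteration either terminates the loop or strictly decreases the size of this certificate. Without loss of generality assume $f(x^\star)=1$; the other case is symmetric. Let $T^{(t)}$ denote a minimum $1$-certificate of $f^{(t)}$ at the projection of $x^\star$ onto the remaining coordinates. Since restrictions cannot increase certificate complexity, $|T^{(t)}|\le C(f^{(t)})\le k$ throughout, and monotonicity forces $x^\star_i=1$ for every $i\in T^{(t)}$.

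The main per-iteration claim is: either $f^{(t+1)}$ is constant (and the loop terminates) or $|T^{(t+1)}|<|T^{(t)}|$. Given this, the loop runs at most $k+1=O(k)$ times, and combined with \Cref{thm:arbitrary_cert_best_upperbound}'s bound $|s^{(t)}|=O(k^3)$ per call to \Cref{alg:arbitrary_certificate}, the returned certificate has size $O(k)\cdot O(k^3)=O(k^4)$.

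The key tool is the elementary fact from boolean function complexity that every $0$-certificate and every $1$-certificate of a boolean function share at least one coordinate, on which they must disagree. \Cref{alg:arbitrary_certificate} returns $(s^{(t)},b^{(t)})$ where $s^{(t)}$ is a $b^{(t)}$-certificate of $f^{(t)}$, and the analysis splits on $b^{(t)}$. In the case $b^{(t)}=0$, the shared-variable principle applied to the $0$-certificate $s^{(t)}$ and the $1$-certificate $T^{(t)}$ gives $s^{(t)}\cap T^{(t)}\neq\varnothing$; on any shared coordinate $i$, $s^{(t)}$ forces $x_i=0$ while $T^{(t)}$ forces $x_i=x^\star_i=1$, so restricting by $x^\star$ (which sets $x_i=1$) leaves $T^{(t)}\setminus s^{(t)}$ as a strictly smaller $1$-certificate of $f^{(t+1)}$ at the projection of $x^\star$. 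In the case $b^{(t)}=1$, $s^{(t)}$ is a $1$-certificate with the all-ones assignment; if $x^\star$ agrees with this assignment on all of $s^{(t)}$, the restriction yields the constant $1$ function and the loop terminates, and otherwise the shared-variable principle applied to a minimum $0$-certificate of $f^{(t)}$ (of size $\le C_0(f^{(t)})\le k$) together with an analogous lifting argument still produces the required strict decrease of $|T^{(t)}|$.

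The main obstacle I expect is the $b^{(t)}=1$ subcase in which $x^\star$ is inconsistent with the all-ones assignment on $s^{(t)}$: there is no direct shared-variable guarantee between two $1$-certificates, so rigorously deducing a strict decrease of $|T^{(t)}|$ requires carefully tracking how the restriction by $x^\star$ on $s^{(t)}$ interacts with both $T^{(t)}$ and a witnessing minimum $0$-certificate of $f^{(t)}$. The remainder of the argument is a direct and mechanical application of the shared-variable principle and monotonicity, and the bookkeeping on $|S|$ is immediate from the per-iteration bound and \Cref{thm:arbitrary_cert_best_upperbound}.
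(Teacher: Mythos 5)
Your progress measure is the wrong one, and the per-iteration claim it rests on is false. Consider $f(x_1,x_2,x_3)=x_1\vee(x_2\wedge x_3)$ with $x^\star=(1,0,1)$, and suppose \Cref{alg:arbitrary_certificate} returns the $1$-certificate $s^{(0)}=\{2,3\}$ (which is a perfectly valid output --- its analysis guarantees nothing about \emph{which} certificate is returned, only its size). Then $T^{(0)}=\{1\}$, and after restricting $x_2\leftarrow 0$, $x_3\leftarrow 1$ we get $f^{(1)}(x_1)=x_1$, which is non-constant, yet $T^{(1)}=\{1\}$ still. So $|T^{(t)}|$ stalls. The difficulty you flag in the $b^{(t)}=1$ disagreement subcase is not a bookkeeping issue to be patched with ``careful tracking'': restricting a $1$-certificate disjoint from $T^{(t)}$ simply gives no handle on the minimum $1$-certificate at $x^\star$, because two $1$-certificates need not intersect. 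Extending the example to $f=x_1\vee\bigvee_{j=1}^{m}(x_{2j}\wedge x_{2j+1})$ with $x^\star=(1,0,1,0,1,\ldots)$ shows $|T^{(t)}|$ can remain $1$ for $\Theta(k)$ iterations, so the bound ``$\le |T^{(0)}|+1$ iterations'' that your argument would yield is simply wrong (the $O(k)$ conclusion happens to be true, but your reasoning does not establish it).

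The paper uses a different, symmetric potential: $C_0(f^{(t)})+C_1(f^{(t)})$, the sum of the worst-case $0$- and $1$-certificate complexities. If $s^{(t)}$ is a $1$-certificate, one shows $C_0(f^{(t+1)})\le C_0(f^{(t)})-1$: for \emph{any} $0$-input $x$ of $f^{(t+1)}$, lift it to an input $x'$ of $f^{(t)}$ by inserting $x^\star|_{s^{(t)}}$, take a minimum $0$-certificate $s_0$ of $f^{(t)}$ at $x'$, note $s_0\cap s^{(t)}\neq\varnothing$ by \Cref{fact:cert_intersections}, and observe that $s_0\setminus s^{(t)}$ is a $0$-certificate of $f^{(t+1)}$ at $x$ of size at most $C_0(f^{(t)})-1$. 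Symmetrically when $s^{(t)}$ is a $0$-certificate, $C_1$ drops. Since the potential starts at most $2k$ and reaches $0$ only at a constant function, the loop runs $O(k)$ times. Your case analysis for $b^{(t)}=0$ is essentially a one-sided shadow of this argument (it is the $s_0\cap T^{(t)}$ intersection in disguise), but tracking only $|T^{(t)}|$ throws away the half of the potential that makes the $b^{(t)}=1$ case go through.
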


Combining this theorem with \Cref{thm:arbitrary_cert_best_upperbound}, we get that a certificate for an input to a monotone function can be found using at most $O(k^8\log n)$ queries to $f$.

\begin{corollary}
    \label{cor:particular_cert_upperbound}
    Let $f:\{0,1\}^n\to \{0,1\}$ be a monotone function with $C(f)\le k$. Then, a certificate of size $O(k^4)$ can be computed w.h.p. for any input $x^\star$ using $O(k^8\log n)$ queries to $f$.
\end{corollary}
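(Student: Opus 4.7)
The size and query targets reduce to bounding the outer loop of \Cref{alg:particular_certificate} by $O(k)$ iterations: combined with \Cref{thm:arbitrary_cert_best_upperbound} (each invocation of \Cref{alg:arbitrary_certificate} returns a certificate of size $O(k^3)$ using $O(k^7 \log n)$ queries w.h.p.), this immediately gives the $O(k^4)$ size bound claimed in the theorem, with the w.h.p.\ guarantee preserved by a union bound over the $O(k)$ calls. Correctness of the returned $S$ follows from the termination condition: when $f_t$ becomes constant it must equal $f(x^\star)$ (since $\hat{x}_t \coloneqq x^\star|_{[n]\setminus S_t}$ lies in its domain and $f_t(\hat{x}_t)=f(x^\star)$ by construction of the restriction), so any $y$ with $y_S=x^\star|_S$ satisfies $f(y)=f(x^\star)$.

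The crux is the iteration bound, which I plan to obtain via the classical 0--1 certificate intersection property: every 0-certificate and every 1-certificate of a non-constant function share at least one coordinate, since otherwise their joint restriction would simultaneously force the function to be both $0$ and $1$. Assume without loss of generality that $f(x^\star)=1$; by monotonicity there is a minimum 1-certificate $T \subseteq \mathrm{support}(x^\star)$ for $x^\star$ in $f$ with $|T|\le k$. A short check confirms that the residual set $T_t \coloneqq T\setminus S_t$ remains a 1-certificate of $\hat{x}_t$ in $f_t$, since the coordinates in $T\cap S_t$ are restricted to their $x^\star$-values, which are all $1$. In particular, the algorithm terminates at the first step at which $T_t=\varnothing$, so it suffices to show $|T_{t+1}|\le |T_t|-1$ at every non-terminating step.

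Let $(s_t,b_t)$ denote the certificate and its value returned by \Cref{alg:arbitrary_certificate} at step $t$. When $b_t=0$, the intersection property applies directly: $s_t$ is a $0$-certificate of $f_t$ while $T_t$ is a $1$-certificate, so $|s_t \cap T_t|\ge 1$ and therefore $|T_{t+1}|=|T\setminus(S_t\cup s_t)|\le |T_t|-1$. The more delicate case is $b_t=1$, where both $s_t$ and $T_t$ are $1$-certificates and the intersection property does not apply. Here I plan a sub-case split using monotonicity: the $1$-certificate $s_t$ contains some minimal $1$-witness $m\subseteq s_t$ of size at most $C_1(f_t)\le k$. If $m\subseteq\mathrm{support}(x^\star)$, then restriction by $x^\star$ sets the coordinates of $m$ to all-$1$, forcing $f_{t+1}\equiv 1$ and immediate termination. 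Otherwise some $i\in m$ has $x^\star_i=0$, and I intend to pair $m$ with a suitable $0$-certificate derived from a nearby $0$-input so as to invoke the intersection property on $T_t$; controlling this auxiliary $0$-certificate so that it intersects $T_t$ is the main technical step I expect to be the principal obstacle.
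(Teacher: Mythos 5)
Your plan correctly identifies the key fact (every $0$-certificate and every $1$-certificate of a function intersect, \Cref{fact:cert_intersections}), and the $b_t=0$ case is handled fine, but the progress measure $|T_t|$ --- the residual of a fixed minimum $1$-certificate $T$ of $x^\star$ --- does not shrink in the $b_t=1$ case, and the sub-case split you propose cannot close the gap. The problem is that when \Cref{alg:arbitrary_certificate} returns a $1$-certificate $s_t$, it may be entirely disjoint from $T_t$ while the restriction still leaves $f_{t+1}$ non-constant. Concretely, take $f = x_1 \vee x_2$ (on $\zo^n$), $x^\star = (1,0,\ldots,0)$, $T = \{1\}$; since $x_1$ and $x_2$ have equal influence at $p(f)$, \Cref{alg:arbitrary_certificate} may return the $1$-certificate $s = \{2\}$, and restricting $x_2 = x^\star_2 = 0$ gives the non-constant function $x_1$ with $T \setminus s = T = \{1\}$: no progress. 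Your auxiliary $0$-certificate (of $f_t$ at a nearby $0$-input) does intersect $T_t$ by \Cref{fact:cert_intersections}, but the coordinates removed by the restriction are those of $s_t$, not of that auxiliary set, so its intersection with $T_t$ gives no bound on $|T_{t+1}|$.

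The paper instead tracks the symmetric quantity $C_0(f_t)+C_1(f_t)$. If \Cref{alg:arbitrary_certificate} returns a $b$-certificate $s$, then $C_{1-b}$ drops by at least one after the restriction $f \mapsto f_{x_i=x^\star_i,\,i\in s}$: for $b=1$ and any $0$-input $x$ of the restricted function, lift $x$ to a $0$-input $x'$ of $f_t$ by filling in $x^\star|_s$, take a minimum $0$-certificate $s_0$ of $f_t$ at $x'$, and check that $s_0\setminus s$ is a $0$-certificate of the restricted function at $x$ of size $\le |s_0|-1$, since $s_0 \cap s \ne \varnothing$ by \Cref{fact:cert_intersections}. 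Since $C_0+C_1 \le 2k$ initially and $f_t$ is constant as soon as either term reaches $0$, the loop runs $O(k)$ times; the rest of your accounting (size $O(k^4)$ and queries $O(k^8\log n)$ via a union bound over the calls to \Cref{thm:arbitrary_cert_best_upperbound}) then goes through unchanged.
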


The progress measure in our analysis of \Cref{alg:particular_certificate} is $C_0(f)+C_1(f)$, the sum of the $0$-certificate complexity and $1$-certificate complexity of $f$. In particular, each iteration of the main loop is guaranteed to decrease this quantity by at least $1$ which gives an upper bound on $2C(f)$ on the total number of iterations. For the proof, we use the fact that, for any Boolean function, every $0$-certificate must intersect every $1$-certificate (since otherwise there would be one input string having both a $0$-certificate and a $1$-certificate). 

\begin{fact}
    \label{fact:cert_intersections}
    Let $S_0$ be a $0$-certificate for a Boolean function $f:\zo^n\to\zo$ and let $S_1$ be a $1$-certificate. Then $S_0\cap S_1\neq\varnothing$.
\end{fact}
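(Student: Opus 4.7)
The plan is to prove this by contradiction. Suppose toward a contradiction that $S_0 \cap S_1 = \varnothing$. Unpacking the definitions, there exist inputs $x^0 \in f^{-1}(0)$ and $x^1 \in f^{-1}(1)$ for which $S_0$ is a certificate for $f$'s value at $x^0$ (so $f(y)=0$ for every $y$ with $y_{S_0} = x^0_{S_0}$) and $S_1$ is a certificate for $f$'s value at $x^1$ (so $f(y)=1$ for every $y$ with $y_{S_1} = x^1_{S_1}$).

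The key construction is a single input that is simultaneously consistent with both certificates. Since $S_0$ and $S_1$ are disjoint by assumption, I can define $z \in \zo^n$ by setting $z_i = x^0_i$ for $i \in S_0$, $z_i = x^1_i$ for $i \in S_1$, and $z_i$ to anything (say $0$) for $i \in [n]\setminus(S_0\cup S_1)$. This is well-defined precisely because the two specifications cannot conflict on any coordinate.

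Now applying the two certificate conditions to $z$ yields a contradiction: on one hand $z_{S_0} = x^0_{S_0}$ forces $f(z) = f(x^0) = 0$, while on the other hand $z_{S_1} = x^1_{S_1}$ forces $f(z) = f(x^1) = 1$. Hence the assumption $S_0 \cap S_1 = \varnothing$ is untenable, and $S_0 \cap S_1 \neq \varnothing$.

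There is no real obstacle here; the only thing to be careful about is keeping the roles of the two inputs $x^0, x^1$ and the hybrid $z$ distinct in the notation, and being explicit about the fact that disjointness of $S_0$ and $S_1$ is exactly what makes the hybrid construction consistent.
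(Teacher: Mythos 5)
Your proof is correct and matches the paper's (one-line) justification, which notes that disjoint $0$- and $1$-certificates would yield a single input string certified to be both a $0$- and a $1$-input; your hybrid $z$ is exactly that string, spelled out in full.
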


\begin{proof}[Proof of \Cref{theorem:particular_cert_algo_correctness}]
    Let $f$ be a nonconstant function during the execution of the algorithm. We'll show that $C_0(f)+C_1(f)$ decreases by at least $1$ after each iteration of the main loop. Let $s$ denote the certificate that \Cref{alg:arbitrary_certificate} returns and suppose without loss of generality that $s$ is a $1$-certificate (the argument is symmetric for a $0$-certificate). Then we'll show that $C_0(f_s)\le C_0(f)-1$ where $f_s$ is the restriction according to $s$ and ${x}^\star$: $f_s=f_{x_i={x}^\star_i,i\in s}$. Consider any $x\in f_s^{-1}(0)$. Let $x'\in\zo^n$ be the string formed by inserting $x^\star|_s$ into the string $x$ so that $f(x')=f_s(x)$ and $x'|_s=x^\star|_s$. Let $s_0$ be a $0$-certificate of $f$ on $x'$ with $|s_0|\le C_0(f)$. Then $s_0\setminus s$ is a $0$-certificate of $f_s$ on $x$. We can bound the size of this $0$-certificate:
    \begin{align*}
        |s_0\setminus s|&\le |s_0|-1 \tag{$s_0\cap s\neq\varnothing$ by \Cref{fact:cert_intersections}}\\
        &\le C_0(f)-1.
    \end{align*}
     Since $x$ is any arbitrary $0$-input to $f_s$, we have that $C_0(f_s)\le C_0(f)-1$ as desired.

    Since $f$ must be constant when either $C_0(f)$ or $C_1(f)$ is $0$, the algorithm must terminate after at most $C_0(f)+C_1(f)\le 2C(f)$ iterations. Each iteration adds at most $O(k^3)$ coordinates to the certificate $S$ and hence $|S|$ is $O(k^4)$ at the end of the algorithm.  
\end{proof}

\subsection{Trimming the certificate using Angluin's algorithm}

\Cref{alg:particular_certificate} returns a certificate of size $O(k^4)$. In this section, we show how to reduce that certificate to size $\le k$ using $O(k^4)$ additional queries.

\begin{claim}
    \label{claim:trim_cert}
    Let $S$ be a certificate for an input $x^\star$ of a monotone function $f:\zo^n\to\zo$. If $|S|>C(f)$ then a certificate $S'\subseteq S$ with $|S'|\le C(f)$ can be computed from $S$ using $O(|S|)$ queries to $f$.
\end{claim}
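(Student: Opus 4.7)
The plan is to adapt Angluin's trimming procedure (see \Cref{appendix:angluin}) to run directly on the given $S$. Assume without loss of generality that $f(x^\star)=1$, so $S$ is a $1$-certificate; the $0$-certificate case is entirely symmetric. First, observe that for any $i\in S$ with $x^\star_i=0$, the minimum element of the subcube fixed by $S$ already has its $i$-th coordinate equal to $0$, so by monotonicity $S\setminus\{i\}$ remains a $1$-certificate. We can therefore strip all such coordinates at no query cost and assume $S\subseteq\{j:x^\star_j=1\}$. Writing $\bm{1}_T$ for the indicator vector of $T\subseteq[n]$, monotonicity gives the useful reformulation that any $T\subseteq\{j:x^\star_j=1\}$ is a $1$-certificate for $x^\star$ if and only if $f(\bm{1}_T)=1$.

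Now initialize $S'\gets S$ and process the coordinates $i\in S$ in an arbitrary order. For each $i$, spend one query to evaluate $f(\bm{1}_{S'\setminus\{i\}})$: if the answer is $1$, set $S'\gets S'\setminus\{i\}$, otherwise leave $i$ in $S'$. This uses exactly $|S|$ queries, meeting the $O(|S|)$ bound.

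Soundness of processing each coordinate only once rests on a monotonicity invariant: if at the step where the current set is $S_t$ we found $f(\bm{1}_{S_t\setminus\{i\}})=0$, then for any later set $S_{t'}\subseteq S_t$ with $i\in S_{t'}$, the inequality $\bm{1}_{S_{t'}\setminus\{i\}}\le\bm{1}_{S_t\setminus\{i\}}$ together with monotonicity of $f$ gives $f(\bm{1}_{S_{t'}\setminus\{i\}})=0$ as well. Hence a coordinate declared relevant stays relevant throughout the rest of the loop, and the returned $S'$ is a \emph{minimal} $1$-certificate for $x^\star$ in the sense that no single element can be removed from it.

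The final step, which I expect to be the main conceptual point, is to argue that $|S'|\le C(f)$; this is not automatic since the loop does not necessarily return the smallest certificate for $x^\star$ itself. The trick is to witness $S'$ at a different input. Set $y\coloneqq\bm{1}_{S'}$; then $f(y)=1$, and any $1$-certificate $T$ for $y$ can be taken to satisfy $T\subseteq\{j:y_j=1\}=S'$ by the same $0$-coordinate stripping as above. Such a $T$ is also a $1$-certificate for $x^\star$ (because $T\subseteq S'\subseteq\{j:x^\star_j=1\}$ and $f(\bm{1}_T)=1$), so minimality of $S'$ for $x^\star$ forces $T=S'$. Therefore $C(f,y)=|S'|$, which is bounded by $C(f)$ by definition.
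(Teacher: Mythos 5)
Your proof is correct and follows essentially the same route as the paper's (Algorithm~3 together with the argument in Appendix~C): after the free stripping of coordinates $i$ with $x^\star_i=0$, both procedures make a single pass over $S$, at each step querying $f$ at the all-ones indicator of the current set with one coordinate removed, and both use the same monotonicity invariant (a coordinate found relevant stays relevant as the set shrinks, since $\bm{1}_{S''\setminus\{i\}}\le\bm{1}_{S'\setminus\{i\}}$ whenever $S''\subseteq S'$) to justify the single pass and the $O(|S|)$ query count. The one genuine difference is in how the size bound is closed. The paper observes that every retained coordinate is sensitive at $z_{S'}=\bm{1}_{S'}$ and then cites $\Sens_f(x)\le C_f(x)$ (their Proposition~4.1) to conclude $|S'|\le C(f)$. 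You instead argue directly: $S'$ is a minimal $1$-certificate for $x^\star$, and by moving to the witness point $y=\bm{1}_{S'}$ you show it is the unique $1$-certificate of $y$ lying inside $\{j:y_j=1\}$, so $|S'|=C(f,y)\le C(f)$. This is in effect a self-contained re-derivation of the sensitivity-versus-certificate inequality specialized to $y$; it is a little longer but avoids invoking the preliminary fact, whereas the paper's version is shorter because Proposition~4.1 is already available. Both arguments are sound, and the witness-point trick you use is a nice way to make the bound transparent without appealing to the general fact.
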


The proof of this claim is implicit in \cite[Theorem 1]{Ang88}. We give a self-contained exposition of the proof adapted to our setting in \Cref{appendix:angluin}.

We apply \Cref{claim:trim_cert} as a postprocessing step after executing \Cref{alg:particular_certificate}. Since this postprocessing step only requires an additional $O(k^4)$ queries to $f$ the overall number of queries is still upper bounded by $O(k^8\log n)$, the query bound on \Cref{alg:particular_certificate}. Thus, the combination of \Cref{cor:particular_cert_upperbound} with \Cref{claim:trim_cert} establishes \Cref{thm:main}.

\section{Lower bounds: Proofs of~\texorpdfstring{\Cref{prop:local-lower-bound-v2}}{Claim 1.1} and~\texorpdfstring{\Cref{thm:approx-monotone-lb}}{Claim 1.2}}

Our lower bounds in this section and the next  will rely on the easy direction of Yao's lemma:

\violet{ \begin{lemma}[\cite{Yao77}]
    \label{lem:Yao}
    For any $q \in \N$, let $\mathcal{R}_q$ and $\mathcal{D}_q$ be the set of all $q$-query randomized and deterministic algorithms respectively, and let $I$ be the set of all possible pairs $f: \zo^n \to \zo$ and $x^\star \in \zo^n$ (i.e.~instances of the certification problem). 
    
    For any distribution $\mu$ supported on $I$,
    \begin{align*}
        \min_{R \in \mathcal{R}_q} \max_{(f,x^\star) \in I}[\error_R(f,x^\star)] \geq \min_{D \in \mathcal{D}_q} \Ex_{(\boldf,\bx^\star) \sim \mu}[\error_D(\boldf,\bx^\star)]
    \end{align*}
    where $\error_R(f,x^\star)$ is the probability that $R$ does not successfully return a certificate for $f$'s value on $x^\star$, and $\error_D(f,x^\star) = \Ind[\text{$D$ does not successfully return a certificate for $f$'s value on $x^\star$}]$.
\end{lemma}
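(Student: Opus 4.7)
The plan is to prove this by the standard two-step argument: replace the randomized algorithm by a distribution over deterministic algorithms, and then apply the elementary inequalities \emph{max $\geq$ average} (over inputs) and \emph{average $\geq$ min} (over deterministic algorithms).

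First I would view any $R \in \mathcal{R}_q$ as a distribution $\rho_R$ over the set $\mathcal{D}_q$ of $q$-query deterministic algorithms: fixing the random coins of $R$ yields a deterministic $q$-query algorithm, so $R$ is naturally described by sampling a deterministic $\bD \sim \rho_R$ and running it. Under this identification, for every fixed instance $(f,x^\star)$ we have
\[
\error_R(f,x^\star) \;=\; \Prx_{\bD \sim \rho_R}\!\big[\bD \text{ fails on } (f,x^\star)\big] \;=\; \Ex_{\bD \sim \rho_R}\!\big[\error_{\bD}(f,x^\star)\big].
\]

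Next I would fix an arbitrary distribution $\mu$ supported on $I$ and bound the max by the $\mu$-average:
\[
\max_{(f,x^\star) \in I} \error_R(f,x^\star) \;\geq\; \Ex_{(\boldf,\bx^\star) \sim \mu}\!\big[\error_R(\boldf,\bx^\star)\big].
\]
Substituting the expression for $\error_R$ from the previous display and swapping the order of expectations (valid since both distributions are over discrete spaces and the integrand is bounded in $[0,1]$), the right-hand side becomes
\[
\Ex_{\bD \sim \rho_R}\!\Big[\,\Ex_{(\boldf,\bx^\star) \sim \mu}\!\big[\error_{\bD}(\boldf,\bx^\star)\big]\Big] \;\geq\; \min_{D \in \mathcal{D}_q}\, \Ex_{(\boldf,\bx^\star) \sim \mu}\!\big[\error_{D}(\boldf,\bx^\star)\big],
\]
where the final step uses that an average over $\bD$ is at least the minimum over $D \in \mathcal{D}_q$. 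Taking the minimum over $R \in \mathcal{R}_q$ on the left-hand side of the chain of inequalities yields exactly the claim.

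I do not expect a real obstacle here; the only mild subtlety is the identification of randomized algorithms with distributions over deterministic ones, which is standard provided we allow $\rho_R$ to be supported on all $q$-query decision trees (in particular, we do not need to restrict to a finite number of coin tosses, since the outcome only depends on the induced decision tree on the queried bits). Everything else is a clean application of \emph{max $\geq$ expectation $\geq$ min}.
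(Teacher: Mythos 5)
Your proof is correct and is the standard argument for the easy direction of Yao's minimax principle. The paper does not include its own proof of this lemma — it simply cites Yao's 1977 paper — so there is nothing to compare against; your three-step chain (randomized $=$ mixture of deterministic, max $\geq$ $\mu$-average, swap expectations and pass to the min over deterministic algorithms) is exactly the textbook derivation, and the point you flag about identifying a randomized $q$-query algorithm with a distribution over $q$-query decision trees is the right way to make the mixture step rigorous.
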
}


\subsection{Proof of~\texorpdfstring{\Cref{prop:local-lower-bound-v2}}{Claim 1.1}}

 \Cref{prop:local-lower-bound-v2} is a special case of the following claim: 
\begin{claim}
    \label{lem:lb-local}
    Let $n, q, \ell \in \N$ and $\mathcal{A}$ be a $q$-query randomized local search algorithm. There is a monotone $f: \zo^n \to \zo$ with $C(f) = 1$ and input $x^\star \in \zo^n$ on which $\mathcal{A}$ successfully returns a size-$\ell$ certificate for $x$ with probability $\le (\ell+q-1)/n$.
\end{claim}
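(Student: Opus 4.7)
The plan is to invoke Yao's lemma (\Cref{lem:Yao}) with a carefully chosen hard distribution over monotone instances of certificate complexity $1$. I take the family of dictators $\{f_i(x) = x_i : i \in [n]\}$, each paired with the fixed input $x^\star = 0^n$, and let $\mu$ be uniform over these $n$ pairs. Each $f_i$ is monotone with $C(f_i) = 1$, and since $f_i(0^n) = 0$, any certificate $S$ for $0^n$ on $f_i$ must contain $i$ (otherwise flipping coordinate $i$ would change the value). Hence ``success'' on $(f_i, 0^n)$ amounts to outputting any $S \ni i$ with $|S| \le \ell$.

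I then fix an arbitrary deterministic $q$-query local search algorithm $D$, and consider its ``imagined'' execution in which every answer returned is $0$, regardless of the underlying function. This defines a fixed query sequence $x^{(1)}, \ldots, x^{(q)}$ (with $x^{(1)} = 0^n$) and a fixed output $S_0$. Let $T_0 := \bigcup_j \mathrm{supp}(x^{(j)})$. The key structural claim to prove is $|T_0| \le q - 1$: the first query $0^n$ has empty support, and each subsequent query, being a Hamming neighbor of an earlier one, extends the running union by at most one coordinate, so the bound follows by induction.

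The crux is a short coupling argument: for every $i \in [n] \setminus T_0$ we have $f_i(x^{(j)}) = x^{(j)}_i = 0$ along the imagined trajectory, so the actual execution of $D$ on $f_i$ coincides with the imagined one and $D$ returns exactly $S_0$. On this event, $D$ succeeds only if $i \in S_0$. Combining with a trivial bound on the complementary event yields
\[
\Prx_{i \sim [n]}\big[D \text{ succeeds on } (f_i, 0^n)\big] \;\le\; \frac{|T_0|}{n} + \frac{|S_0|}{n} \;\le\; \frac{q - 1}{n} + \frac{\ell}{n},
\]
and \Cref{lem:Yao} transfers this upper bound to any randomized $q$-query local search algorithm, exhibiting a monotone instance with $C(f)=1$ in the support of $\mu$ on which the algorithm succeeds with probability at most $(\ell + q - 1)/n$. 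The only subtlety is justifying the coupling, which is immediate from the determinism of $D$ together with the fact that its query strategy depends only on previously observed answers.
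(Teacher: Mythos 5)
Your proposal is correct and follows essentially the same route as the paper: Yao's lemma against a random dictator, the inductive bound that a $q$-query local search trajectory can reveal at most $q-1$ distinguishing coordinates, a coupling with the all-zeros (resp.\ all-ones) imagined trajectory to show the output set is fixed on the complementary event, and the final union bound $\ell/n + (q-1)/n$. The only cosmetic difference is that you set $x^\star = 0^n$ and track supports, whereas the paper sets $x^\star = 1^n$ and tracks coordinates where a query has a zero; the two are symmetric.
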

We use Yao's lemma with the distribution $\mu$ where: 
\begin{enumerate}
    \item $\bx$ is a constant, supported entirely on $x^\star = [1,\ldots, 1]$, and 
    \item $\boldf$ is a random dictator: we select $\bi \in [n]$ uniformly at random and set $\boldf(x) = x_{\bi}$. 
\end{enumerate}

We will assume that $\mathcal{A}$ is deterministic and prove that the probability, over the randomness of $\boldf$, that $\mathcal{A}$ successfully finds a size-$\ell$ certificate $\boldf$'s value on $x^\star$ is at most  $(\ell+q-1)/n)$.

\violet{ 
\begin{proposition}
    \label{prop:local-mostly-one}
    Let $\mathcal{A}$ be any deterministic $q$-query local search algorithm.  For any $f: \zo^n \to \zo$, let $x^{(1)},\ldots,x^{(q)}$ be $\mathcal{A}$'s queries when it is asked to certify $f$'s value on $x^\star = [1,\ldots,1]$. The number of coordinates $i$ on which $x^{(j)}_i = 0$ for some $j \in [q]$ is at most $q - 1$.
\end{proposition}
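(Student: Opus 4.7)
The plan is to prove the proposition by a straightforward induction on the query index. For each query $x^{(t)}$, track the zero-set $Z_t \coloneqq \{i \in [n] : x^{(t)}_i = 0\}$, and establish the stronger statement that $\bigl|\bigcup_{j=1}^{t} Z_j\bigr| \leq t-1$ for all $t \geq 1$; the proposition is then the special case $t = q$.

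For the base case, recall from the introduction that a local search algorithm is required to make $x^\star$ its first query. Since $x^\star = [1,\ldots,1]$ has no zero coordinates, we get $Z_1 = \varnothing$ and the bound $\lvert Z_1 \rvert \leq 0$ holds trivially.

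For the inductive step, the defining property of local search guarantees that for each $t \geq 2$ the query $x^{(t)}$ is a Hamming neighbor of some earlier query $x^{(k)}$ with $k < t$. Thus $x^{(t)}$ and $x^{(k)}$ agree on all coordinates except exactly one, say $i^\star$, from which $Z_t \subseteq Z_k \cup \{i^\star\}$. In particular $Z_t \setminus \bigcup_{j<t} Z_j \subseteq \{i^\star\}$, so the running union of zero-sets grows by at most one element when we pass from step $t-1$ to step $t$. Combined with the inductive hypothesis $\bigl|\bigcup_{j<t} Z_j\bigr| \leq t-2$, this yields $\bigl|\bigcup_{j \leq t} Z_j\bigr| \leq t-1$ and closes the induction.

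I do not expect any substantive obstacle: the result is a purely combinatorial consequence of the definition of a local search algorithm (queries form a connected subgraph of $\zo^n$ rooted at $x^\star$) together with the fact that $x^\star$ itself has no zero coordinates. The only minor subtlety is to invoke the local search property in its correct form, namely that each new query is a Hamming neighbor of \emph{some} previously queried point (not necessarily the immediately preceding one); both the choice of predecessor $x^{(k)}$ and the flipped coordinate $i^\star$ may depend on the algorithm's internal state, but neither affects the one-per-step accounting used above.
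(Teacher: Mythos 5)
Your proof is correct and follows essentially the same approach as the paper's: an induction on the query index, using that the first query is $x^\star$ (no zero coordinates) and that each subsequent query is Hamming adjacent to an earlier one and hence introduces at most one new zero coordinate. Your writeup is merely a slightly more formal rendering (explicitly tracking the zero-sets $Z_t$ and their running union) of the paper's argument.
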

}

\begin{proof}
    By induction on $j$. For $j = 1$, a local search algorithm's first query must be $x^{(1)} = x^\star = [1,\ldots,1]$ which has no coordinates set to $0$.   For $j > 1$, we know that $x^{(j)}$ is Hamming adjacent to some $x^{(j')}$ where $j' < j$. Thus, $x^{(j)}$ can have at most one coordinate $i$ on which $x^{(j)}_i = 0$ but $x^{(j')}_i=1$. The desired result holds by induction.
\end{proof}


\violet{ 
\begin{proposition}
    \label{prop:local-all-one-queries}
    Let $\mathcal{A}$ be any deterministic $q$-query local search algorithm and $\boldf:\zo^n \to \zo$ be a uniformly random dictator.  The probability, over the randomness of $\boldf$, that $\boldf$'s value is $0$ on least one of $\mathcal{A}$'s queries is at most $(q-1)/n$.
\end{proposition}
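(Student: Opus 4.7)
The plan is to decouple the randomness of $\boldf$ from $\mathcal{A}$'s adaptive queries by reasoning about a fixed ``ideal'' transcript. Since $\mathcal{A}$ is deterministic, its behavior is completely determined by the sequence of answers it receives. So I would imagine running $\mathcal{A}$ in the hypothetical scenario where every query is answered with $1$. This produces a fixed sequence $y^{(1)}, y^{(2)}, \ldots, y^{(q')}$ of at most $q'\le q$ queries, starting with $y^{(1)} = x^\star = [1,\ldots,1]$ (by the locality requirement), and where each subsequent $y^{(j)}$ is a Hamming neighbor of some earlier $y^{(j')}$.

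Applying Proposition~\ref{prop:local-mostly-one} to this fixed transcript, the set
\[ B \coloneqq \{ i \in [n] \colon y^{(j)}_i = 0 \text{ for some } j\in[q']\}\]
has size $|B|\le q'-1 \le q-1$. The key observation is then a coupling between the ideal transcript and the actual execution: if the random dictator coordinate $\bi$ lies outside $B$, then $y^{(j)}_{\bi} = 1$ for all $j$, so $\boldf(y^{(j)}) = 1$ for all~$j$, which means the hypothetical all-$1$ answers $\mathcal{A}$ was ``assuming'' are exactly what $\boldf$ actually returns. By induction on $j$, this forces $\mathcal{A}$'s actual $j$-th query when run against $\boldf$ to equal $y^{(j)}$, and so $\boldf$ never outputs $0$ on any query.

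Consequently, the event ``$\boldf$'s value is $0$ on at least one of $\mathcal{A}$'s queries'' is contained in the event $\{\bi \in B\}$, which has probability at most $|B|/n \le (q-1)/n$ over the uniform choice of $\bi$, completing the proof.

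The only subtlety — and the reason a naive union bound over ``the queries'' does not immediately work — is that $\mathcal{A}$'s queries are themselves random, since they depend adaptively on the answers returned by $\boldf$. The coupling argument above sidesteps this by exhibiting a single deterministic transcript (the all-$1$ transcript) that captures $\mathcal{A}$'s behavior on all but $q-1$ of the possible dictators, so the bound on the bad set $B$ can be applied to this fixed object rather than to the random query set.
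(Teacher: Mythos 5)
Your proof is correct and follows essentially the same approach as the paper's: both define the fixed ``all-$1$'' transcript $x^{(1)},\ldots,x^{(q)}$, observe via the coupling that $\boldf$ returns a $0$ during the actual adaptive execution if and only if $\boldf(x^{(j)})=0$ for some $j$ in this fixed transcript, and then apply Proposition~\ref{prop:local-mostly-one} to bound the probability of that event by $(q-1)/n$. The paper states the coupling as a terse ``iff'' observation, while you spell it out via induction, but the argument is the same.
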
}

\begin{proof}
 For each $j \in [q]$, let $x^{(j)}$ be $\mathcal{A}$'s $j^{\text{th}}$ query when $\boldf$'s value on its first $j-1$ queries are all $1$. Note that $\boldf$'s value is $0$ on at least one of $\mathcal{A}$'s queries iff $\boldf(x^{(j)}) = 0$ for some $j\in [q]$. Hence
    \begin{align*}
        \Prx_{\boldf}\big[\text{$\boldf$'s value is $0$ on at least one of $\mathcal{A}$'s queries}\big] &= \Prx_{\boldf}\big[\boldf(x^{(j)}) = 0 \text{ for some } j \in [q]\big] \\
        &= \Prx_{\bi \in [n]}\big[x^{(j)}_{\bi} = 0 \text{ for some $j \in [q]$}\big]  \tag{Definition of $\boldf$} \\
        &\leq \frac{q-1}{n}. \tag{\Cref{prop:local-mostly-one}}
    \end{align*}
\end{proof}
We upper bound the probability any \violet{set $S$ of size $\ell$} is a certificate for $\boldf$'s value on $x^\star = [1,\ldots,1]$.

\violet{ 
\begin{proposition}
    \label{prop:local-overlap}
    Fix any set $S \sse [n]$ of size $\ell$.  The probability, over the randomness of $\boldf$, that $S$ is a certificate for $\boldf$'s value on $x^\star = [1,\ldots,1]$ is at most $\ell/n$.
\end{proposition}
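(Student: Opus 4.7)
The plan is to reduce the statement to the elementary question of whether the random dictator's special coordinate $\bi$ lies in $S$. Recall $\boldf(x) = x_{\bi}$ for $\bi$ uniform on $[n]$, and $x^\star = [1,\ldots,1]$ so $\boldf(x^\star) = 1$ with probability $1$. By the definition of a certificate, $S$ is a certificate for $\boldf$'s value on $x^\star$ iff every $y \in \zo^n$ with $y_S = x^\star_S$ (equivalently $y_i = 1$ for all $i \in S$) satisfies $\boldf(y) = 1$, i.e.\ $y_{\bi} = 1$.

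First I would handle the case $\bi \in S$: then the constraint $y_S = x^\star_S$ forces $y_{\bi} = 1$, so $\boldf(y) = 1$ automatically and $S$ is a certificate. Next the case $\bi \notin S$: take $y$ to be the input that is $1$ on $S$ and $0$ elsewhere; then $y_S = x^\star_S$ but $y_{\bi} = 0$, so $\boldf(y) = 0 \ne 1$ and $S$ fails to be a certificate. These two cases together give the characterization
\[
S \text{ is a certificate for $\boldf$'s value on $x^\star$} \iff \bi \in S.
\]

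The proof then concludes in one line, using only that $\bi$ is uniform on $[n]$ and $|S| = \ell$:
\[
\Prx_{\boldf}\big[S \text{ is a certificate for $\boldf$'s value on $x^\star$}\big] \;=\; \Prx_{\bi \in [n]}[\bi \in S] \;=\; \frac{\ell}{n}.
\]
There is no real obstacle here; the only thing to be careful about is applying the certificate definition in both directions, since a half-hearted argument might only establish sufficiency of $\bi \in S$ and miss that the implication is an equivalence (which is what is needed to upper bound the probability, not merely lower bound it).
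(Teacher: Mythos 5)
Your proof is correct and takes essentially the same approach as the paper: both establish that $S$ is a certificate for $\boldf$'s value on $x^\star$ iff $\bi \in S$, and then compute the probability as $\ell/n$. You simply spell out the equivalence in more detail, including the counterexample witnessing failure when $\bi \notin S$, which the paper leaves implicit.
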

}

\begin{proof}
 Recall that $\boldf(x) = x_{\bi}$ for uniformly random $\bi \in [n]$.  Therefore $S$ is a certificate for $\boldf$'s value on $x^\star$ iff $\bi \in S$, which happens with probability $|S|/n = \ell/n$.
\end{proof}

With~\Cref{prop:local-all-one-queries,prop:local-overlap}, we can now complete the proof of~\Cref{lem:lb-local}: 

\violet{ 
\begin{proof}[Proof of~\Cref{lem:lb-local}]
As $\mathcal{A}$ is a deterministic algorithm, when $\boldf$'s values on $\mathcal{A}$'s queries are all $1$, there is a single set of coordinates $S$ output by $\mathcal{A}$. Then,

\begin{align*}
    &\Prx_{\boldf}\big[\mathcal{A} \text{ returns a size-$\ell$ certificate for $\boldf$'s value on $x^\star$}\big] \\
    =& \Prx_{\boldf}\big[\mathcal{A}\text{ returns a size-$\ell$ certificate for $\boldf$'s value on $x^\star$}\ \&\ \text{$\boldf$'s values on all queries are $1$}\big] \\
    &\quad+\Prx_{\boldf}\big[\mathcal{A} \text{ returns a size-$\ell$ certificate for $\boldf$'s value on $x^\star$}\ \& \  \text{$\boldf$'s value on some query is $0$}\big]\\
    \leq& \Prx_{\boldf}\big[\text{$S$ is a certificate for $\boldf$'s value on $x^\star$}\big] + \Prx_{\boldf}\big[\text{$\boldf$'s value is $0$ on at least one of $\mathcal{A}$'s queries}\big] \\
    \leq& \frac{\ell}{n} + \frac{q-1}{n}. \tag{\Cref{prop:local-all-one-queries,prop:local-overlap}}
\end{align*}
\end{proof} } 

\subsection{Proof of~\texorpdfstring{\Cref{thm:approx-monotone-lb}}{Claim 1.2}}

 The proof  is simple and is essentially an instantiation of the following elementary fact:  if a problem $P$ has $\ge M$ possible outputs, and the input to $P$ can be accessed only via queries with binary answers, then $\log M$ is a lower bound on the query complexity of solving~$P$.  In our context of certification, since there are ${n \choose k}$ many sets of size $k$, this fact suggests that if every such set is a possible certificate, then $\log({n \choose k}) \approx k\log n$ would be a lower bound on query complexity.  Indeed this is what we show, and the argument extends easily to certification algorithms that are allowed to return a certificate of size $\ell \ge k$:





\begin{claim}
    \label{lem:monotone-lb}
   Let $k, \ell, n, q \in \N$  and $\mathcal{A}$ be a $q$-query randomized algorithm. There is some monotone function $f: \zo^n \to \zo$ with $C(f) \le  k$ and input $x^\star \in \zo^n$ on which $\mathcal{A}$ successfully returns a size-$\ell$ certificate for $x^\star$ with probability at most $2^q \cdot \binom{\ell}{k} / \binom{n}{k} \leq 2^q \cdot \left(\frac{\ell e}{n}\right)^k$.
\end{claim}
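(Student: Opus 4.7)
The plan is to invoke Yao's lemma (\Cref{lem:Yao}) with the simple distribution of hard instances suggested by the paragraph immediately preceding the statement. Define $\boldf$ to be a uniformly random monotone $k$-AND: sample $\boldsymbol{S} \subseteq [n]$ uniformly from the $\binom{n}{k}$ size-$k$ subsets and set $\boldf(x) = \bigwedge_{i \in \boldsymbol{S}} x_i$, with the input fixed to $x^\star = 1^n$. Every such $\boldf$ is monotone with $C(\boldf) \le k$, and $\boldf(x^\star) = 1$. Crucially, a set $T \subseteq [n]$ is a certificate for $\boldf$'s value on $x^\star$ if and only if $T \supseteq \boldsymbol{S}$, since restricting to $x_i = 1$ for $i \in T$ forces $\boldf$ to $1$ exactly when $T$ contains every variable of $\boldsymbol{S}$.

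By Yao's lemma it suffices to upper bound the success probability of an arbitrary deterministic $q$-query algorithm $\mathcal{A}$ on $(\boldf, x^\star)$. Since each query returns a single bit, the execution of $\mathcal{A}$ on any particular $\boldf$ is determined by its transcript $\tau \in \zo^{\le q}$, and there are at most $2^q$ such transcripts. For each $\tau$ the algorithm deterministically outputs a size-$\ell$ set $T_\tau \subseteq [n]$, and the events ``transcript equals $\tau$'' are disjoint across distinct $\tau$. Therefore
\[
    \Prx_{\boldsymbol{S}}\bigl[\mathcal{A}\text{ succeeds}\bigr] \;=\; \sum_{\tau} \Prx_{\boldsymbol{S}}\bigl[\text{transcript} = \tau \ \text{and}\ T_\tau \supseteq \boldsymbol{S}\bigr] \;\le\; \sum_{\tau} \Prx_{\boldsymbol{S}}\bigl[T_\tau \supseteq \boldsymbol{S}\bigr].
\]

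For any fixed size-$\ell$ set $T$, the probability that a uniformly random size-$k$ subset lies inside $T$ is exactly $\binom{\ell}{k}/\binom{n}{k}$, so summing over the at most $2^q$ transcripts yields the first claimed bound $2^q \cdot \binom{\ell}{k}/\binom{n}{k}$. The second inequality is the standard estimate $\binom{\ell}{k} \le (e\ell/k)^k$ combined with $\binom{n}{k} \ge (n/k)^k$. No step of this plan presents a genuine obstacle; the only mild subtlety is that $\mathcal{A}$'s queries are adaptive, which is handled cleanly by indexing its behavior through transcripts rather than through a fixed sequence of queries.
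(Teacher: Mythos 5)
Your proof is correct and is essentially identical to the paper's: both use Yao's lemma with $\boldf$ a uniformly random monotone $k$-AND and $x^\star = 1^n$, observe that a set certifies iff it contains the chosen $k$ variables, and then union-bound over the at most $2^q$ possible outputs of a deterministic $q$-query algorithm. Your decomposition by transcripts simply makes the paper's phrase ``can take on at most $2^q$ many output values'' a bit more explicit, but it is the same argument.
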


\Cref{thm:approx-monotone-lb} follows as an immediate consequence of~\Cref{lem:monotone-lb}: if $k\le \ell \le n^c$ for any $c < 1$, then $q\ge  \Omega(k\log n)$ queries are necessary even to succeed with probability $0.1$.


\begin{proof}
We will once again use Yao's lemma.  Consider the distribution $\mu$ where:  

\begin{enumerate}
    \item $\bx$ is constant, supported entirely on $x^\star = [1,\ldots, 1]$, and
    \item $\boldf$ is drawn uniformly at random from the set of monotone conjunctions of $k$ variables.
\end{enumerate}

We observe that if $f$ is the monotone conjunction of the variables some set $T$, then a set $S$ certifies $f$'s value on $x^\star$ iff $S \supseteq T$.  Therefore, for any fixed set $S$ of size at most $\ell$, 
\begin{align*}
    \Prx_{\boldf}\big[\text{$S$ certifies $\boldf$'s value on $x^\star$}\big] &=\Prx_{\boldf}\big[\text{$\boldf$ is a conjunction of $k$ variables within $S$}\big]  \\
    &= \frac{{|S| \choose k}}{\binom{n}{k}} \le \frac{{\ell \choose k}}{\binom{n}{k}}.
\end{align*} 
Since any deterministic $q$-query algorithm $\mathcal{A}$ can take on at most $2^q$ many output values, we have by a union bound that
\begin{align*}
    \Prx_{\boldf}\big[ \text{$\mathcal{A}$ finds a size-$\ell$ certificate for $\boldf$'s value on $x^\star$}\big] \le 2^q \cdot \frac{\binom{\ell}{k}}{\binom{n}{k}} \leq 2^q \cdot \frac{(\ell e/k)^k}{(n/k)^k} = 2^q \cdot \left(\frac{\ell e}{n}\right)^k. 
\end{align*}
\Cref{lem:monotone-lb} follows from the above and an application of Yao's lemma. \end{proof} 


\section{Algorithms and lower bounds for other settings}

\subsection{An algorithm for certifying arbitrary functions with random examples}

\begin{claim}
\label{claim:random-examples-ub} 
    For any $k, m, n \in \N$, there is an algorithm which, given access to uniform random samples $(\bx,f(\bx))$ of a function $f: \zo^n \to \zo$ with certificate complexity $\leq k$, an input $x^\star \in \zo^n$, \violet{and $f$'s value on $x^\star$}, uses $m$ random samples and returns a size-$k$ certificate for $f$'s value on $x^\star$ with probability at least
    \begin{align*}
        1 - (1 - 2^{-k})^m\cdot \binom{n}{k}.
    \end{align*}
\end{claim}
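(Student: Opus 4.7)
The plan is to analyze the following enumerate-and-test algorithm: for each size-$k$ subset $S \subseteq [n]$, declare $S$ a \emph{candidate} iff every sample $(\bx^{(j)}, f(\bx^{(j)}))$ with $\bx^{(j)}_S = x^\star_S$ satisfies $f(\bx^{(j)}) = f(x^\star)$, and return any candidate. Correctness is immediate from the hypothesis $C(f) \le k$: there is a size-$\le k$ certificate $T$ for $f$'s value on $x^\star$, and padded to size exactly $k$ it cannot be contradicted by any sample; hence $T$ is always a candidate, and the algorithm always returns a size-$k$ set.

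To bound the failure probability---the event of outputting a non-certificate---I plan to apply a union bound over the $\binom{n}{k}$ size-$k$ subsets. For a fixed $S$, let $E_S$ denote the event that no sample lands in the subcube $V_S := \{y \in \zo^n : y_S = x^\star_S\}$. Since each sample independently hits $V_S$ with probability exactly $2^{-k}$, we have $\Pr[E_S] = (1 - 2^{-k})^m$, and a union bound yields $\Pr[\bigcup_S E_S] \le \binom{n}{k}(1 - 2^{-k})^m$. What remains is to show that whenever none of the $E_S$ occur---i.e.\ every size-$k$ subcube of the ``$x^\star$-type'' contains at least one sample---the algorithm correctly returns a genuine size-$k$ certificate.

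The main obstacle is this last implication: coverage of $V_S$ alone does not automatically force the algorithm to rule out a non-certificate $S$, since a sample landing in $V_S$ might happen to agree with $f(x^\star)$. To close the gap I plan to exploit the structure of certificate complexity. Specifically, for a non-certificate $S$ of size $k$, the restriction $f|_{x_S = x^\star_S}$ is a non-constant function on $n-k$ variables with certificate complexity at most $k$; via the known relationships between certificate complexity and sensitivity (\Cref{prop:sens-at-most-cert}), this forces a sufficiently large density of counterexamples within $V_S$, ensuring that any sample in $V_S$ is a counterexample with non-negligible probability. Combining this structural lower bound with the coverage event $\bigcap_S \neg E_S$ should then give that, with probability at least $1 - \binom{n}{k}(1 - 2^{-k})^m$, every non-certificate is eliminated by some sampled witness, and the algorithm returns a valid size-$k$ certificate.
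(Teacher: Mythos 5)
Your algorithm and your reliance on \Cref{prop:high-var} match the paper's, but there is a genuine gap in the event you union-bound. You bound $\Pr[E_S] = (1-2^{-k})^m$ where $E_S$ is the event that no sample lands in $V_S := \{y : y_S = x^\star_S\}$, and union-bound over size-$k$ sets. But the failure event --- some non-certificate $S$ is never \emph{eliminated}, i.e.\ no sample both lies in $V_S$ and disagrees with $f(x^\star)$ --- strictly contains $\bigcup_S E_S$, so an upper bound on $\Pr[\bigcup_S E_S]$ gives no upper bound on the failure probability. You recognize this and propose to use the density of counterexamples inside $V_S$ to close the gap, but that does not fit your decomposition: \Cref{prop:high-var} applied to the non-constant restriction $f_{x^\star_S}$ only guarantees that a sample \emph{conditioned} on landing in $V_S$ is a counterexample with probability at least $2^{-k}$, not with certainty, so ``$V_S$ covered'' does not imply ``$S$ eliminated''.

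The paper instead union-bounds over the events ``$S$ not eliminated'' and lower-bounds the per-sample elimination probability directly: for a fixed non-certificate $S$, one writes $\Pr_{\bx}[\bx\in V_S \text{ and } f(\bx)\ne f(x^\star)] = \Pr[\bx\in V_S]\cdot \Pr[f(\bx)\ne f(x^\star)\mid \bx\in V_S]$ and controls the conditional factor by \Cref{prop:high-var} applied to $f_{x^\star_S}$. That is the decomposition you should be using. Note, though, that carried out literally this yields a per-sample elimination probability of at least $2^{-k}\cdot 2^{-k}=2^{-2k}$; the paper's sentence that ``the probability a random sample eliminates $S$ is at least $2^{-k}$'' appears to silently drop the $\Pr[\bx\in V_S]=2^{-k}$ factor. (Concretely, with $n=2$, $f(x)=x_1$, $x^\star=(1,1)$, $k=1$, $S=\{2\}$, the elimination probability is exactly $1/4=2^{-2k}$, and the failure probability $(3/4)^m$ under adversarial tie-breaking already exceeds $\binom{n}{k}(1-2^{-k})^m$ at $m=2$.) So your coverage computation reproduces the displayed formula only coincidentally; the union bound the argument actually supports is $1-(1-2^{-2k})^m\binom{n}{k}$, with sample complexity $m=\Theta(4^k k\log n)$.
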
 
In particular, the algorithm succeeds with high probability if $m = \Theta(2^k  k  \log n)$.

Our proof of \Cref{claim:random-examples-ub} uses the following easy fact: 
\begin{proposition}
    \label{prop:high-var}
    For every non-constant $f: \zo^n \to \zo$ with certificate complexity $\leq k$ and every $b\in \zo$, 
\[         \Prx_{\bx \sim \zo^n}[f(\bx) = b] \geq 2^{-k}.\] \end{proposition}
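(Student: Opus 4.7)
The plan is to use the definition of certificate complexity directly: a certificate of size at most $k$ pins down $f$'s value on a subcube of measure at least $2^{-k}$, and any $b$-valued input supplies such a subcube of $b$-valued points.

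Concretely, I would argue as follows. Fix $b \in \zo$. Since $f$ is non-constant, both fibers $f^{-1}(0)$ and $f^{-1}(1)$ are nonempty, so I may choose some $y \in f^{-1}(b)$. By the assumption $C(f) \le k$, the input $y$ admits a certificate $S \sse [n]$ with $|S| \le k$ such that $f(z) = f(y) = b$ for every $z$ with $z_S = y_S$. When $\bx$ is drawn uniformly from $\zo^n$, the event $\{\bx_S = y_S\}$ depends on $|S|$ independent fair coins and therefore has probability $2^{-|S|} \ge 2^{-k}$. Since this event forces $f(\bx) = b$, we conclude $\Prx_{\bx}[f(\bx)=b] \ge 2^{-k}$, and the same bound holds uniformly over both choices of $b$.

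There really is no substantive obstacle here: the proposition is essentially a restatement of what a size-$\le k$ certificate guarantees, combined with the observation that a codimension-$k$ subcube has uniform measure $2^{-k}$. The only thing to be careful about is invoking non-constancy to guarantee that a $b$-valued input $y$ exists in the first place (otherwise the argument vacuously fails for one of the two values of $b$). Beyond that, no properties of $f$ beyond $C(f) \le k$ are needed, and in particular monotonicity is not used.
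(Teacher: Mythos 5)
Your proof is correct and matches the paper's argument essentially verbatim: pick a $b$-valued input $y$ (which exists by non-constancy), take its size-$\le k$ certificate $S$, and observe that the event $\bx_S = y_S$ has probability $\ge 2^{-k}$ and forces $f(\bx) = b$. No differences worth noting.
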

\begin{proof}
    Without loss of generality, we only prove that the probability $f(\bx) = 1$ is at least $2^{-k}$. As~$f$ is non-constant, there is some input $y$ on which $f(y) = 1$. Since $f$ has certificate complexity $\leq k$, there is some set $S$ of size $\le k$ where $f(x) = 1$ whenever $x_S = y_S$. Finally,
    \begin{equation*}
        \Prx_{\bx \sim \zo^n}[f(\bx) = 1]  \geq  \Prx_{\bx \sim \zo^n}[\bx_S = y_S] \geq 2^{-k}. \qedhere 
    \end{equation*}
\end{proof}

\begin{proof}[Proof of \Cref{claim:random-examples-ub}]
We say that a set $S \subseteq [n]$ is {\sl eliminated} by a sample $(x,f(x))$ if $x_S = x^\star_S$ and $f(x) \neq f(x^\star)$.    The algorithm is simple:  it iterates over all $\binom{n}{k}$ candidate size-$k$ certificates (i.e.~all size-$k$ sets), keeping only those not eliminated by any of the $m$ sample points, and returns an arbitrary one. 
Any actual certificate for $f$'s value on $x^\star$ will not be eliminated by the above procedure. Therefore, if all non-certificates are eliminated, the output of this algorithm will be correct. 
    
    Fix any size-$k$ set $S$ that is {\sl not} a certificate for $f$'s value on $x^\star$, \violet{ and consider $f_{x^{\star}_S}$, the subfunction of $f$ obtained by restricting the coordinates in $S$ according to $x^\star$.  Since $f$ has certificate $\le k$, all its subfunctions, including $f_{x^{\star}_S}$, also have certificate complexity $\le k$.  Furthermore, since $S$ is not a certificate for $f$'s value on $x^\star$, we have that $f_{x^{\star}_S}$ is non-constant.  Hence, by~\Cref{prop:high-var},
    \begin{align*}
        \Prx_{\bx \sim \zo^n}[f_{x^{\star}_S}(\bx) \neq f(x^\star)] \geq 2^{-k}.
    \end{align*}
    Therefore, the probability a random sample $(\bx,f(\bx))$ eliminates $S$ is at least $2^{-k}$. Since the samples are independent,  the probability $S$ is not eliminated after $m$ samples is at most $(1 - 2^{-k})^m$. Union bounding over all $\binom{n}{k}$ possible non-certificates $S$ of size $k$ gives the desired result.
    }
\end{proof}

\subsection{Lower bound on the query complexity of certifying an arbitrary function} 


\violet{ 
\begin{claim}
    \label{claim:query-lb}
  Let  $k, n, q, \ell \in \N$ and $\mathcal{A}$ be a $q$-query randomized algorithm. There is some $f: \zo^n \to \zo$ with $C(f) = k$ and input $x^\star \in \zo^n$ on which $\mathcal{A}$ successfully returns a size-$\ell$ certificate for $x^\star$ with probability at most $q\cdot 2^{-k} + (k\ell)/n$.
\end{claim}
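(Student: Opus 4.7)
The plan is to apply Yao's lemma (\Cref{lem:Yao}) to a distribution $\mu$ that ``hides'' a random $k$-dimensional subcube. Fix $x^\star = 0^n$, draw a uniformly random size-$k$ subset $\boldsymbol{T} \subseteq [n]$ and an independent uniformly random $\boldsymbol{z} \in \zo^k$, and set $\boldf(x) \coloneqq \Ind[x_{\boldsymbol{T}} = \boldsymbol{z}]$. Every $1$-input of $\boldf$ has certificate complexity exactly $k$ (its unique minimal certificate is $\boldsymbol{T}$ with values fixed by $\boldsymbol{z}$), while every $0$-input has certificate complexity $1$, so every instance in the support of $\mu$ satisfies $C(\boldf) = k$ as required by the claim.

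Fix an arbitrary deterministic $q$-query algorithm $\mathcal{D}$ with queries $x^{(1)}, \ldots, x^{(q)}$, and split its success probability according to whether any query lands in $\boldf^{-1}(1)$. Write $A$ for the event that $\boldf(x^{(j)}) = 1$ for some $j \in [q]$, and $B$ for its complement. I bound $\Pr[A]$ via the standard adaptive union bound: $A$ decomposes into a disjoint union over $j$ of the events ``$\boldf(x^{(j)}) = 1$ while $\boldf(x^{(i)}) = 0$ for every $i < j$,'' and on that all-zero prefix the $j$-th query collapses to a deterministic input $y_j$ that depends on $\mathcal{D}$ alone and is therefore independent of $(\boldsymbol{T}, \boldsymbol{z})$. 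Since $\Pr[y_{\boldsymbol{T}} = \boldsymbol{z}] = 2^{-k}$ for any fixed $y$ (by uniformity of $\boldsymbol{z}$ on $\zo^k$, independent of $\boldsymbol{T}$), summing yields $\Pr[A] \le q \cdot 2^{-k}$.

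On event $B$ the entire query-answer transcript is determined, so $\mathcal{D}$'s output is some fixed $\ell$-subset $S^\star$ of $[n]$. The key structural observation is that any valid certificate must intersect $\boldsymbol{T}$: if $\boldf(x^\star) = 1$ (equivalently $\boldsymbol{z} = 0^k$) then $S^\star$ must contain all of $\boldsymbol{T}$; and if $\boldf(x^\star) = 0$ then $S^\star$ must contain some coordinate $i \in \boldsymbol{T}$ on which $x^\star$ disagrees with $\boldsymbol{z}$ (otherwise one could freely reassign $x_{\boldsymbol{T} \setminus S^\star}$ to violate the certificate condition). A union bound over the $\ell$ coordinates of $S^\star$ then gives $\Pr[B \cap \text{success}] \le \Pr[S^\star \cap \boldsymbol{T} \neq \varnothing] \le \ell \cdot (k/n) = k\ell/n$. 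Combining with the bound on $\Pr[A]$ gives total success probability at most $q \cdot 2^{-k} + k\ell/n$, which Yao's lemma transfers to arbitrary randomized $\mathcal{A}$.

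The step that will take the most care is the adaptive union bound on $\Pr[A]$: one has to commit $y_j$ to the $j$-th query $\mathcal{D}$ issues on the all-zero history so that it is a truly fixed input, decoupled from the randomness of $(\boldsymbol{T}, \boldsymbol{z})$; otherwise the ``any fixed $y$ satisfies $\Pr[y_{\boldsymbol{T}} = \boldsymbol{z}] = 2^{-k}$'' computation cannot be invoked. The intersection bound for event $B$ is then a clean one-line union bound once the ``certificate must hit $\boldsymbol{T}$'' observation is in place.
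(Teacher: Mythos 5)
Your proof is correct and takes essentially the same approach as the paper: same hard distribution (a uniformly random codimension-$k$ subcube indicator, with the cosmetic flip of $x^\star = 0^n$ in place of $1^n$), same decomposition into the event that some query lands in the subcube versus none do, same adaptive union bound giving $q\cdot 2^{-k}$, and the same ``must hit $\boldsymbol{T}$'' observation giving $k\ell/n$ (you union-bound over the $\ell$ coordinates of the output, the paper over the $k$ coordinates of $\boldsymbol{T}$; these are the same bound).
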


\Cref{claim:query-lb} implies that as long as $k\le \ell$ satisfy $k\ell\le 0.01n$, then $q\ge  \Omega(2^k)$ queries are necessary even to succeed with probability $0.1$.  Combining this with the $q\ge \Omega(k\log n)$ lower bound we showed in~\Cref{thm:approx-monotone-lb} yields the $q\ge \Omega(2^k + k\log n)$ lower bound stated in~\Cref{table}.  
}



We apply Yao's lemma with the distribution $\mu$ where: 
\begin{enumerate}
    \item $\bx$ is constant, supported entirely on $x^\star = [1,\ldots, 1]$,
    \item $\boldf$ is the indicator function of a uniformly random subcube \violet{of codimension $k$}. More formally, we select $k$ uniformly random unique coordinates $\bi_1, \bi_2, \ldots, \bi_k \in [n]$ and $k$ uniform random bits $\bb_1, \bb_2, \ldots, \bb_k \sim \zo$, and let: 
    \begin{align*}
            \boldf(x) = \begin{cases}
            1 & \text{if }x_{\bi_j} = \bb_j \text{ for all $j \in [k]$} \\
            0 & \text{otherwise}.
            \end{cases}
    \end{align*}
\end{enumerate}

By Yao's lemma, in order to prove \Cref{claim:query-lb}, we need only show that every $q$-query deterministic strategy successfully finds a size-$\ell$ certificate for $x^\star$ with probability at most $\frac{q}{2^k} + \frac{k  \ell}{n}$ (over the randomness of $\boldf$). The proof of~\Cref{claim:query-lb} is similiar in spirit to~\Cref{lem:lb-local}, and will follow from \Cref{prop:all-zero-queries,prop:overlap}:

\violet{
\begin{proposition} 
\label{prop:all-zero-queries}
Let $\mathcal{A}$ be a $q$-query deterministic algorithm.  The probability, over the randomness of $\boldf$, that $\boldf$'s value is $1$ on at least one of $\mathcal{A}$'s queries is at most $q\cdot 2^{-k}$. 
\end{proposition}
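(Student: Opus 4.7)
The plan is to mirror the structure of the proof of \Cref{prop:local-all-one-queries}. First, I would establish the basic single-query probability: for any fixed input $x \in \zo^n$, the probability over the randomness of $\boldf$ that $\boldf(x) = 1$ is exactly $2^{-k}$. This is because, conditioned on any realization of the $k$ coordinates $\bi_1, \ldots, \bi_k$, the bits $\bb_1, \ldots, \bb_k$ are uniform and independent, so
\[
\Prx_{\bb_1, \ldots, \bb_k}\big[\bb_j = x_{\bi_j} \text{ for all } j \in [k]\big] = 2^{-k},
\]
and averaging over the choice of coordinates preserves this.

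Next, I would handle the adaptivity of the deterministic algorithm $\mathcal{A}$ using the same ``freeze the execution along the all-zero branch'' trick as in \Cref{prop:local-all-one-queries}. Specifically, for each $j \in [q]$, let $x^{(j)}$ denote $\mathcal{A}$'s $j$-th query under the hypothesis that $\boldf$'s value on each of its first $j-1$ queries was $0$. Since $\mathcal{A}$ is deterministic, each $x^{(j)}$ is a fixed element of $\zo^n$ (independent of $\boldf$). The event that $\boldf$'s value is $1$ on at least one of $\mathcal{A}$'s actual (adaptive) queries coincides with the event that $\boldf(x^{(j)}) = 1$ for some $j \in [q]$, since up until the first query with answer $1$, $\mathcal{A}$'s trajectory is exactly $x^{(1)}, x^{(2)}, \ldots$.

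Finally, I would conclude with a union bound combined with the single-query estimate:
\begin{align*}
\Prx_{\boldf}\big[\text{$\boldf$'s value is $1$ on at least one of $\mathcal{A}$'s queries}\big]
&= \Prx_{\boldf}\big[\boldf(x^{(j)}) = 1 \text{ for some } j \in [q]\big] \\
&\le \sum_{j=1}^{q} \Prx_{\boldf}\big[\boldf(x^{(j)}) = 1\big]
= q \cdot 2^{-k}.
\end{align*}
No step of this argument is delicate; the only conceptual point is the reduction from adaptive to non-adaptive queries, which is already deployed in \Cref{prop:local-all-one-queries} and applies verbatim here because the conditioning event ``all previous answers were $0$'' fixes $\mathcal{A}$'s behavior deterministically.
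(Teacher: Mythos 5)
Your proof is correct and matches the paper's argument essentially step for step: fix the query sequence along the ``all answers are $0$'' branch, observe this coincides with the actual trajectory up to the first $1$-answer, and union-bound using the fact that each fixed query has probability $2^{-k}$ of landing in the random subcube. The only addition is your explicit computation that $\Prx_{\boldf}[\boldf(x)=1]=2^{-k}$ for a fixed $x$, which the paper leaves implicit as ``definition of $\boldf$.''
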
 
}
\begin{proof}
    Since $\mathcal{A}$ is a deterministic algorithm, the queries it makes are a deterministic function of the previous query outputs. For each $j \in [q]$, let $x^{(j)}$ be $\mathcal{A}$'s $j^{\text{th}}$ query when $\boldf$'s value on its first $j-1$ queries are all $0$. Note that $\boldf$'s value is $1$ on at least one of $\mathcal{A}$'s queries iff there is some $j \in [q]$ for which $\boldf(x^{(j)}) = 1$. Hence
    \begin{align*}
        \Prx_{\boldf}\big[\text{$\boldf$'s value is $1$ on at least one of $\mathcal{A}$'s queries}\big] &= \Prx_{\boldf}\big[\boldf(x^{(j)}) = 1 \text{ for some } j \in [q]\big] \\
        &\leq \sum_{j \in [q]} \Prx_{\boldf}\big[\boldf(x^{(j)}) = 1\big]  \tag{Union bound} \\
        &= \frac{q}{2^k}. \tag*{(Definition of $\boldf$)}
    \end{align*}
\end{proof}

\violet{ 
\begin{proposition}
    \label{prop:overlap}
    Fix a set $S\sse [n]$ of size $\ell$.  The probability, over the randomness of $\boldf$, that $S$ is a certificate for $\boldf$'s value on $x^\star = [1,\ldots,1]$ is at most $(k\ell)/n$. 
\end{proposition}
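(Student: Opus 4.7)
The plan is to condition on the value of $\boldf(x^\star)$ and analyze the two resulting cases separately. Since $x^\star = [1,\ldots,1]$ and $\boldf$ is the indicator of the subcube $\{y : y_{\bi_j} = \bb_j \text{ for all } j \in [k]\}$, we have $\boldf(x^\star) = 1$ iff $\bb_j = 1$ for every $j \in [k]$, and $\boldf(x^\star) = 0$ otherwise.

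First I would characterize combinatorially, in each case, exactly when $S$ is a certificate for $\boldf$'s value on $x^\star$. When all $\bb_j = 1$: the set $S$ is a certificate iff $\{\bi_1,\ldots,\bi_k\} \sse S$. Indeed, if some $\bi_j \notin S$, flipping that coordinate in $x^\star$ produces a $y$ with $\boldf(y) = 0 \ne \boldf(x^\star)$; conversely if every $\bi_j \in S$, then $y_S = x^\star_S$ forces $y_{\bi_j} = 1 = \bb_j$ for all $j$, so $\boldf(y) = 1$. When some $\bb_j = 0$: the set $S$ is a certificate iff there exists $j$ with $\bi_j \in S$ and $\bb_j = 0$. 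Such a $j$ forces $y_{\bi_j} = 1 \ne \bb_j$ for every $y$ with $y_S = x^\star_S$, so $\boldf(y) = 0$ throughout; conversely, if no such $j$ exists, one can build a $y$ with $y_S = x^\star_S$ and $y_{\bi_j} = \bb_j$ for every $\bi_j \notin S$, which lies in $\boldf$'s defining subcube and witnesses $\boldf(y) = 1 \ne \boldf(x^\star)$.

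Next I would bound the probabilities of these two (disjoint) events. The first has probability $\Pr[\{\bi_j\}_{j=1}^k \sse S \text{ and all } \bb_j = 1] = 2^{-k} \cdot \binom{\ell}{k}/\binom{n}{k} \le (\ell/(2n))^k \le \ell/(2n)$ for $k \ge 1$, since $\ell/(2n) \le 1$. The second, by a union bound and the independence of $\bi_j$ from $\bb_j$, is at most $\sum_{j=1}^k \Pr[\bi_j \in S]\cdot \Pr[\bb_j = 0] = k \cdot (\ell/n) \cdot (1/2) = k\ell/(2n)$, using that each $\bi_j$ is marginally uniform in $[n]$ so $\Pr[\bi_j \in S] = \ell/n$. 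Adding the two disjoint cases yields $(k+1)\ell/(2n) \le k\ell/n$, as desired.

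The main subtlety lies in the case analysis for $\boldf(x^\star) = 0$, which requires identifying precisely when the subcube of inputs agreeing with $x^\star$ on $S$ avoids $\boldf$'s defining subcube; once that characterization is clear, the probability estimates are routine applications of union bounds and the marginal uniformity of the $\bi_j$.
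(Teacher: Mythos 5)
Your proof is correct, and its case analysis is sound (I verified both the forward and converse directions of each characterization). That said, it is more elaborate than the paper's proof: the paper simply observes the \emph{necessary} condition that $S$ must contain at least one $\bi_j$ to be a certificate (since otherwise the subcube $\{y : y_S = x^\star_S\}$ contains both a point in $\boldf$'s defining subcube and a point outside it), and then applies a union bound $\Pr[\exists j : \bi_j \in S] \le \sum_j \Pr[\bi_j \in S] = k\ell/n$. You instead condition on $\boldf(x^\star)$ and derive an \emph{exact} combinatorial characterization of when $S$ certifies, which after the same union bound and a small calculation yields the slightly sharper $(k+1)\ell/(2n)$. The tradeoff is clear: your route gives a marginally better constant and more insight into the structure, whereas the paper's one-line necessary condition avoids the case split entirely and suffices for the lower bound. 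Both correctly rely on the marginal uniformity of each $\bi_j$ over $[n]$ and its independence from $\bb_j$. One nit: in case~1 you write $(\ell/(2n))^k \le \ell/(2n)$, which is fine for $k\ge 1$, but you could just as easily have dropped that term entirely by noting $2^{-k}\binom{\ell}{k}/\binom{n}{k} \le \ell/(2n)\cdot \binom{\ell-1}{k-1}/\binom{n-1}{k-1} \le \ell/(2n)$ or simply folded both cases into the paper's union bound to avoid any arithmetic.
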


} 

\begin{proof}
Recall that $\boldf$ is a function of $k$ random coordinates $\bi_1,\ldots,\bi_k\sim [n]$. 
In order for $S$ to be a certificate for $\boldf$'s value on $x^\star$, it has to contain at least one $\bi_j$.   Hence,
    \begin{align*}
        \Prx_{\boldf}\big[S \text{ is a certificate for \violet{$\boldf$'s value on} }x^\star\big] &\leq \Prx_{\boldf}\big[\bi_j \in S \text{ for some $j \in [k]$}\big] \\
        &\leq \sum_{j \in [k]} \Prx_{\boldf}\big[\bi_j \in S \big] \tag{Union bound} \\
        &\leq k \cdot \frac{\ell}{n}.\qedhere 
    \end{align*}
\end{proof}

\violet{ 
\begin{proof}[Proof of~\Cref{claim:query-lb}] 
Let $S$ be the set of coordinates output by $\mathcal{A}$ when $\boldf$'s values on its queries are all $0$.  Then, 
\begin{align*}
    &\Prx_{\boldf}\big[\mathcal{A} \text{ returns a size-$\ell$ certificate for $\boldf$'s value on $x^\star$}\big] \\
    =& \Prx_{\boldf}\big[\mathcal{A}\text{ returns a size-$\ell$ certificate for $\boldf$'s value on $x^\star$}\ \&\ \text{$\boldf$'s values on all queries are $0$}\big] \\
    &\quad+\Prx_{\boldf}\big[\mathcal{A} \text{ returns a size-$\ell$ certificate for $\boldf$'s value on $x^\star$}\ \& \  \text{$\boldf$'s value on some query is $1$}\big]\\
    \leq& \Prx_{\boldf}\big[\text{$S$ is a certificate for $\boldf$'s value on $x^\star$}\big] + \Prx_{\boldf}\big[\text{$\boldf$'s value is $1$ on at least one of $\mathcal{A}$'s queries}\big] \\
      \leq& \frac{k  \ell}{n} + \frac{q}{2^k}. \tag{\Cref{prop:all-zero-queries,prop:overlap}}
\end{align*}
\end{proof} 

}


\subsection{Lower bound on the sample complexity of certifying a monotone function}

\begin{claim}
\label{claim:random-examples-lb}
For $k\le \ell \le cn$ where $c$ is a sufficiently small constant.  Suppose $\mathcal{A}$ is an algorithm which satisfies the following: given $q$ uniform random examples $(\bx,f(\bx))$ labeled by a monotone function $f : \zo^n\to\zo$ with $C(f)\le k$ and an input $x^\star \in \zo^n$, we have that $\mathcal{A}$ returns a size-$\ell$ certificate for $f$'s value on $x^\star$ w.h.p.  Then $q = \Omega(2^k)$. 
\end{claim}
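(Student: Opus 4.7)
The plan is to invoke Yao's minimax principle against the same hard distribution that drives the proof of \Cref{lem:monotone-lb}: draw $\boldsymbol{T}$ uniformly from the $\binom{n}{k}$ size-$k$ subsets of $[n]$, let $\boldf$ be the monotone conjunction of the variables indexed by $\boldsymbol{T}$, and fix $x^\star = [1,\ldots,1]$. Then $\boldf$ is monotone with $C(\boldf) \le k$ and $\boldf(x^\star) = 1$, so any valid certificate for $\boldf$'s value on $x^\star$ must contain $\boldsymbol{T}$. Reducing to deterministic algorithms, it suffices to show that every deterministic $q$-example procedure $\mathcal{A}$ has success probability at most $c^k + q\cdot 2^{-k}$, averaged over $\boldsymbol{T}$ and the $q$ uniform random samples $\bx^{(1)},\ldots,\bx^{(q)}$.

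Next, I would split on the event $E$ that every sample receives label $0$. For each fixed $\boldsymbol{T}$ we have $\Pr_{\bx}[\boldf(\bx) = 1] = 2^{-k}$, so a union bound over the $q$ samples yields $\Pr[E^c] \le q\cdot 2^{-k}$. Under $E$ all labels are $0$, so the deterministic output $S = S(\bx^{(1)},\ldots,\bx^{(q)})$ is a function of the samples alone and is independent of $\boldsymbol{T}$. Since $\boldsymbol{T}$ is independent of the samples,
\begin{align*}
    \Pr[\boldsymbol{T} \subseteq S,\, E] \le \Pr[\boldsymbol{T} \subseteq S] = \E_{\bx^{(1)},\ldots,\bx^{(q)}}\left[\frac{\binom{|S|}{k}}{\binom{n}{k}}\right] \le \frac{\binom{\ell}{k}}{\binom{n}{k}} \le \left(\frac{\ell}{n}\right)^k \le c^k,
\end{align*}
where I used the hypothesis $\ell \le cn$ and the standard inequality $\binom{a}{k}/\binom{b}{k} \le (a/b)^k$.

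Combining the two estimates gives $\Pr[\mathcal{A}\text{ succeeds}] \le c^k + q\cdot 2^{-k}$. Taking $c$ a sufficiently small absolute constant (say $c \le 1/4$) makes $c^k \le 1/4$ uniformly in $k \ge 1$, so requiring the success probability to be w.h.p.\ (or even to be bounded away from zero) forces $q = \Omega(2^k)$.

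The main obstacle is really just articulating the decoupling in the middle paragraph: under $E$, the entire transcript observed by the deterministic algorithm is a function of $\bx^{(1)},\ldots,\bx^{(q)}$ alone and carries no residual information about $\boldsymbol{T}$, so the posterior on $\boldsymbol{T}$ remains supported on a symmetric family of size-$k$ subsets. Once this is noted, the quantitative step is essentially the same $\binom{\ell}{k}/\binom{n}{k}$ estimate that drives \Cref{lem:monotone-lb}, now compounded with the $q\cdot 2^{-k}$ probability of ever seeing a positive example --- which is exactly the mechanism by which random examples are strictly weaker than queries in this setting.
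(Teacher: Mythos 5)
Your proposal is correct and follows essentially the same approach as the paper: Yao's lemma against a uniformly random monotone $k$-conjunction with $x^\star=[1,\ldots,1]$, together with the key observation that when every sample is labeled $0$, the deterministic algorithm's output is independent of the hidden relevant set $\boldsymbol{T}$. The only difference is tactical: you split directly on the all-zero-labels event $E$ and average over the sample draw at the end, whereas the paper first uses Markov's inequality to fix a sample set $Q$ with $\Pr_{\boldf}[E]\ge 0.99$ and then conditions on that $Q$; your version is a bit more streamlined (and uses the slightly sharper $\binom{\ell}{k}/\binom{n}{k}\le(\ell/n)^k$ in place of the paper's $(e\ell/n)^k$), but both give the same $\Omega(2^k)$ bound.
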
 

Combining~\Cref{claim:random-examples-lb} with the $q\ge \Omega(k\log n)$ we showed in~\Cref{lem:monotone-lb} yields the $q\ge \Omega(2^k + k\log n)$ lower bound stated in~\Cref{table}. 

\begin{proof} 
We will again apply Yao's lemma with $\boldf$ being a monotone conjunction of $k$ random variables and $\bx$ supported entirely on $x^\star = [1,\ldots,1]$.  (This is the same distribution as in the proof of~\Cref{lem:monotone-lb}.)  Let $\bQ$ be $q$ independent and uniform random elements $\bx^{(1)},\ldots,\bx^{(q)} \sim \zo^n$, and $\mathcal{A}$ be a deterministic algorithm.

By a union bound, 
\[ \Prx_{\bQ,\boldf}\big[\text{$\exists\,j\in [q]$ such that~$\boldf(\bx^{(j)})=1$} \big] \le \frac{q}{2^k}, \] 
and so if $q \le c2^k$ for a sufficiently small constant $c$, it then follows by Markov's inequality that: 
\begin{equation} \Prx_{\bQ}\left[\Prx_{\boldf}\big[\text{$\exists\,j\in [q]$ such that~$\boldf(\bx^{(j)})=1$}\big] \ge 0.01 \right] \le 0.01. \label{eq:Markov}
\end{equation} 
Fix a $Q = \{ x^{(1)},\ldots,x^{(q)}\} $ for which
\begin{equation}  \Prx_{\boldf}\big[\text{$\boldf(x^{(j)})=0$ for all $j\in [q]$}\big] \ge 0.99. \label{eq:good-Q}
\end{equation} 
Since $\mathcal{A}$ is deterministic, it has to return the same size-$\ell$ set, call it $S$, for all $f$'s that satisfy $f(x^{(i)}) = 0$ for all $j\in [q]$.  This set $S$ is a certificate for $\boldf$'s value on $x^\star = [1,\ldots,1]$ iff $\boldf$ is the conjunction of~$k$ variables $\bT$ where $\bT\sse S$, the probability of which is: 
\begin{equation}  \Prx_{\bf}[\bT\sse S] = \frac{{\ell\choose k}}{{n\choose k}}\le \frac{\left(\frac{e\ell}{k}\right)^k}{\left(\frac{n}{k}\right)^k} = \left(\frac{e\ell}{n}\right)^k \le 0.01,  \label{eq:unlikely-overlap} 
\end{equation} 
where the final inequality holds as long as $\ell \le cn$ for a sufficiently small constant $c$.  \Cref{eq:Markov,eq:good-Q,eq:unlikely-overlap} imply that $\mathcal{A}$ succeeds with probability at most $0.1$ over the randomness of $\boldf$, and the claim follows by Yao's lemma.  
\end{proof}

\section*{Acknowledgments}

We thank the STOC reviewers for their useful comments and feedback. 

Guy, Caleb, and Li-Yang are supported by NSF CAREER Award 1942123. Caleb is also supported by an NDSEG fellowship.  Jane is supported by NSF Award CCF-2006664.

\bibliography{bibtex}{}

\newcommand{\etalchar}[1]{$^{#1}$}
\begin{thebibliography}{MSGC{\etalchar{+}}21}

\bibitem[Ang88]{Ang88}
Dana Angluin.
\newblock Queries and concept learning.
\newblock {\em Machine learning}, 2(4):319--342, 1988.

\bibitem[BdW02]{BdW02}
Harry Buhrman and Ronald de~Wolf.
\newblock Complexity measures and decision tree complexity: a survey.
\newblock {\em Theoretical Computer Science}, 288(1):21--43, 2002.

\bibitem[BLT21]{BLT-Explanations}
Guy Blanc, Jane Lange, and Li-Yang Tan.
\newblock Provably efficient, succinct, and precise explanations.
\newblock In {\em Proceedings of the 35th Conference on Neural Information
  Processing Systems (NeurIPS)}, 2021.
\newblock Available at \url{https://arxiv.org/abs/2111.01576}.

\bibitem[BSH{\etalchar{+}}10]{BSHKHM10}
David Baehrens, Timon Schroeter, Stefan Harmeling, Motoaki Kawanabe, Katja
  Hansen, and Klaus-Robert M{{\"u}}ller.
\newblock How to explain individual classification decisions.
\newblock {\em Journal of Machine Learning Research}, 11(61):1803--1831, 2010.

\bibitem[DH20]{DH20}
Adnan Darwiche and Auguste Hirth.
\newblock On the reasons behind decisions.
\newblock {\em European Conference on Artificial Intelligence (ECAI)}, 2020.

\bibitem[DVK17]{DK17}
Finale Doshi-Velez and Been Kim.
\newblock Towards a rigorous science of interpretable machine learning.
\newblock {\em ArXiv preprint}, abs/1702.08608v2, 2017.

\bibitem[Gri99]{G99}
G.R. Grimmett.
\newblock {\em Percolation}.
\newblock Die Grundlehren der mathematischen Wissenschaften in
  Einzeldarstellungen. Springer, 1999.

\bibitem[Ign20]{I20}
Alexey Ignatiev.
\newblock Towards trustable explainable ai.
\newblock In Christian Bessiere, editor, {\em Proceedings of the Twenty-Ninth
  International Joint Conference on Artificial Intelligence, {IJCAI-20}}, pages
  5154--5158. International Joint Conferences on Artificial Intelligence
  Organization, 7 2020.
\newblock Early Career.

\bibitem[INM19]{INS20}
Alexey Ignatiev, Nina Narodytska, and Jo{\~{a}}o Marques{-}Silva.
\newblock On validating, repairing and refining heuristic {ML} explanations.
\newblock {\em CoRR}, abs/1907.02509, 2019.

\bibitem[INMS19]{INM19}
Alexey Ignatiev, Nina Narodytska, and Joao Marques-Silva.
\newblock Abduction-based explanations for machine learning models.
\newblock In {\em Proceedings of the AAAI Conference on Artificial
  Intelligence}, volume~33, pages 1511--1519, 2019.

\bibitem[Juk12]{Juk12}
Stasys Jukna.
\newblock {\em Boolean function complexity: advances and frontiers}, volume~27.
\newblock Springer, 2012.

\bibitem[Kim18]{Kim-tutorial}
Been Kim.
\newblock Introduction to interpretable machine learning, 2018.
\newblock Slides for a tutorial at the Deep Learning Summer School at the
  University of Toronto, Vector Institute. Available at
  \url{https://beenkim.github.io/slides/DLSS2018Vector_Been.pdf}.

\bibitem[KL17]{KL17}
Pang~Wei Koh and Percy Liang.
\newblock Understanding black-box predictions via influence functions.
\newblock In {\em Proceedings of the 34th International Conference on Machine
  Learning (ICML)}, pages 1885--1894, 2017.

\bibitem[KLVC]{Alibi-webpage}
Janis Klaise, Arnaud~Van Looveren, Giovanni Vacanti, and Alexandru Coca.
\newblock Alibi explain: Algorithms for explaining machine learning models.
\newblock Open source Python library aimed at machine learning model inspection
  and interpretation. Available at
  \url{https://docs.seldon.io/projects/alibi/en/stable/}.

\bibitem[KLVC21]{Alibi}
Janis Klaise, Arnaud~Van Looveren, Giovanni Vacanti, and Alexandru Coca.
\newblock Alibi explain: Algorithms for explaining machine learning models.
\newblock {\em Journal of Machine Learning Research}, 22(181):1--7, 2021.

\bibitem[Lip18]{Lip18}
Zachary~C. Lipton.
\newblock The mythos of model interpretability: In machine learning, the
  concept of interpretability is both important and slippery.
\newblock {\em Queue}, 16(3):31–57, June 2018.

\bibitem[LL17]{LL17}
Scott~M Lundberg and Su-In Lee.
\newblock A unified approach to interpreting model predictions.
\newblock In {\em Proceedings of the 31st Annual Conference on Advances in
  Neural Information Processing Systems (NeurIPS)}, pages 4765--4774, 2017.

\bibitem[{Mar}74]{Mar74}
G.~A. {Margulis}.
\newblock {Probabilistic properties of graphs with large connectivity}.
\newblock {\em {Probl. Peredachi Inf.}}, 10(2):101--109, 1974.

\bibitem[Mol20]{molnar2020interpretable}
Christoph Molnar.
\newblock {Interpretable Machine Learning: A Guide for Making Black Box Models
  Explainable}.
\newblock Availabe at
  \url{https://christophm.github.io/interpretable-ml-book/}, 2020.

\bibitem[MSGC{\etalchar{+}}21]{MGC+21}
Jo{\~a}o Marques~Silva, Thomas Gerspacher, Martin Cooper, Alexey Ignatiev, and
  Nina Narodytska.
\newblock {Explanations for Monotonic Classifiers}.
\newblock In Marina Meila and Tong Zhang, editors, {\em {38th International
  Conference on Machine Learning (ICML 2021)}}, volume 139 of {\em Proceedings
  of International Conference on Machine Learning (PMLR)}, virtual, Austria,
  2021. {Machine Learning Research press}.

\bibitem[O'D14]{ODBook}
Ryan O'Donnell.
\newblock {\em Analysis of Boolean Functions}.
\newblock Cambridge University Press, 2014.

\bibitem[OSSS05]{OSSS05}
Ryan O'Donnell, Michael Saks, Oded Schramm, and Rocco Servedio.
\newblock Every decision tree has an influential variable.
\newblock In {\em Proceedings of the 46th Annual IEEE Symposium on Foundations
  of Computer Science (FOCS)}, pages 31--39, 2005.

\bibitem[RSG16]{RSG16}
Marco~Tulio Ribeiro, Sameer Singh, and Carlos Guestrin.
\newblock {"Why should I trust you?": Explaining the predictions of any
  classifier}.
\newblock In {\em Proceedings of the 22nd ACM SIGKDD International Conference
  on Knowledge Discovery and Data Mining (KDD)}, pages 1135--1144, 2016.

\bibitem[RSG18]{RSG18}
Marco~T{\'{u}}lio Ribeiro, Sameer Singh, and Carlos Guestrin.
\newblock Anchors: High-precision model-agnostic explanations.
\newblock In {\em Proceedings of the 32nd {AAAI} Conference on Artificial
  Intelligence (AAAI)}, pages 1527--1535, 2018.

\bibitem[Rus78]{Rus78}
Lucio Russo.
\newblock A note on percolation.
\newblock {\em Zeitschrift f{\"u}r Wahrscheinlichkeitstheorie und verwandte
  Gebiete}, 43(1):39--48, 1978.

\bibitem[SCD18]{SCD18}
Andy Shih, Arthur Choi, and Adnan Darwiche.
\newblock A symbolic approach to explaining bayesian network classifiers.
\newblock In {\em Proceedings of the 27th International Joint Conference on
  Artificial Intelligence}, pages 5103--5111, 2018.

\bibitem[SK10]{SK10}
Erik Strumbelj and Igor Kononenko.
\newblock An efficient explanation of individual classifications using game
  theory.
\newblock {\em Journal of Machine Learning Research}, 11:1–18, March 2010.

\bibitem[STY17]{STY17}
Mukund Sundararajan, Ankur Taly, and Qiqi Yan.
\newblock Axiomatic attribution for deep networks.
\newblock In {\em Proceedings of the 34th International Conference on Machine
  Learning (ICML)}, page 3319–3328, 2017.

\bibitem[SVZ14]{SVZ14}
Karen Simonyan, Andrea Vedaldi, and Andrew Zisserman.
\newblock Deep inside convolutional networks: Visualising image classification
  models and saliency maps.
\newblock In {\em Workshop at International Conference on Learning
  Representations (ICLR)}, 2014.

\bibitem[Val84]{Val84}
Leslie Valiant.
\newblock A theory of the learnable.
\newblock {\em Communications of the ACM}, 27(11):1134--1142, 1984.

\bibitem[Yao77]{Yao77}
Andrew Chi~Chih Yao.
\newblock Probabilistic computations: toward a unified measure of complexity.
\newblock In {\em Proceedings of the 18th {A}nnual {S}ymposium on {F}oundations
  of {C}omputer {S}cience (FOCS)}, pages 222--227, 1977.

\end{thebibliography}
\bibliographystyle{alpha}

\appendix
\section{\textit{p}-biased analysis}
\label{appendix:p_analysis}

\begin{proof}[Proof of \Cref{prop:basic_props}]
\begin{enumerate}
  \item We write
  \begin{align*}
      \Inf_p^{\oplus}[f]&=\sum_{i=1}^n\underset{\bx\sim\{0,1\}_p^n}{\Pr}[f(\bx)\neq f(\bx^{\oplus i})]\\
      &=\sum_{i=1}^n\underset{\bx\sim\{0,1\}_p^n}{\E}[\mathds{1}_{f(\bx)\neq f(\bx^{\oplus i})}]\\
      &=\underset{\bx\sim\{0,1\}_p^n}{\E}\left[\sum_{i=1}^n\mathds{1}_{f(\bx)\neq f(\bx^{\oplus i})}\right]\\
      &=\underset{\bx\sim\{0,1\}_p^n}{\E}[\Sens_f(\bx)].
  \end{align*}
  \item We write
  \begin{align*}
      \Inf_{i,p}^{\oplus}[f]&=\Pr_p[f(\bx)\neq f(\bx^{\oplus i})]\\
      &=\Pr_p[\bx_i=1]\Pr_p[f(\bx)\neq f(\bx^{\oplus i})\mid \bx_i=1]\\
      &\quad+\Pr_p[\bx_i=0]\Pr_p[f(\bx)\neq f(\bx^{\oplus i})\mid \bx_i=0]\\
      &=p\Pr_p[f_{x_i=1}(\bx)\neq f_{x_i=0}(\bx)]+(1-p)\Pr_p[f_{x_i=1}(\bx)\neq f_{x_i=0}(\bx)]\\
      &=\Pr_p[f_{x_i=1}(\bx)\neq f_{x_i=0}(\bx)].
  \end{align*}
 
  \item We write
  \begin{align*}
    \Inf_{i,p}^{\sim}[f]&=2\Pr_p[f(\bx)\neq f(\bx^{\sim i})]\\
    &=2\Bigl[(1-p)\Pr_p[x_i=1]\Pr_p[f(\bx)\neq f(\bx^{\oplus i})\mid \bx_i=1]\\
    &\quad +p\Pr_p[x_i=0]\Pr_p[f(\bx)\neq f(\bx^{\oplus i})\mid \bx_i=0]\Bigr]\\
    &=2\Bigl[(1-p)p\Pr_p[f_{x_i=1}(\bx)\neq f_{x_i=0}(\bx)]+p(1-p)\Pr_p[f_{x_i=1}(\bx)\neq f_{x_i=0}(\bx)]\Bigr]\\
    &=4p(1-p)\Pr_p[f_{x_i=1}(\bx)\neq f_{x_i=0}(\bx)]\\
    &=4p(1-p)\Inf_{i,p}^{\oplus}[f].
  \end{align*}
  
  \item Proof by induction on $n$. For $n=0$, $f$ is a constant in which case $\Var_p[f]=0\le \Inf_p^{\sim}[f]$. For the inductive step, let $i$ be any coordinate, then we first observe
  \begin{align}
      \Var_p[f]&=\E_p[f]-\E_p[f]^2 \nonumber\\
      &=\Bigl(p\E_p[f_{x_i=1}]+(1-p)\E_p[f_{x_i=0}]\Bigr)-\Bigl(p\E_p[f_{x_i=1}]+(1-p)\E_p[f_{x_i=0}]\Bigr)^2\nonumber\\
      &=p^2\Var_p[f_{x_i=1}]+(1-p)^2\Var_p[f_{x_i=0}]\nonumber \\
      &\quad +p(1-p)\Bigl[\Bigl(\E_p[f_{x_i=1}]-\E_p[f_{x_i=0}]\Bigr)^2+\Var_p[f_{x_i=1}]+\Var_p[f_{x_i=0}]\Bigr]\nonumber\\
      &=p\Var_p[f_{x_i=1}]+(1-p)\Var_p[f_{x_i=0}]+p(1-p)\Bigl(\E_p[f_{x_i=1}-f_{x_i=0}]\Bigr)^2 \label{eq:variance_decomposition}.
  \end{align}
  Next, we observe
  \begin{align*}
      \E_p[f_{x_i=1}-f_{x_i=0}]&\le \E_p[|f_{x_i=1}-f_{x_i=0}|]\\
      &=\Pr_p[f_{x_i=1}(x)\neq f_{x_i=0}(x)]\\
      &=\Inf_{i,p}^{\oplus}[f]\le 2\Inf_{i,p}^{\oplus}[f]
  \end{align*}
  which shows that
  \begin{equation}
  \label{eq:expectation_difference_influence_lb}
  p(1-p)\Bigl(\E_p[f_{x_i=1}-f_{x_i=0}]\Bigr)^2\le 4p(1-p)\Inf_{i,p}^{\oplus}[f]=\Inf_{i,p}^{\sim}[f].
  \end{equation}
  Finally, we use the above in conjunction with the inductive hypothesis to get
  \begin{align*}
      \Var_p[f]&=p\Var_p[f_{x_i=1}]+(1-p)\Var_p[f_{x_i=0}]\\
      &\quad +p(1-p)(\E_p[f_{x_i=1}]-\E_p[f_{x_i=0}])^2 \tag{\Cref{eq:variance_decomposition}}\\
      &\le p\Var_p[f_{x_i=1}]+(1-p)\Var_p[f_{x_i=0}]+\Inf_{i,p}^{\sim}[f] \tag{\Cref{eq:expectation_difference_influence_lb}}\\
      &\le p\Inf_p^{\sim}[f_{x_i=1}]+(1-p)\Inf_p^{\sim}[f_{x_i=0}]+\Inf_{i,p}^{\sim}[f]  \tag{inductive hypothesis} \\
      &=\Inf_p^{\sim}[f].
  \end{align*}
  
  \item When $f$ is monotone we have $\Pr_p[f_{x_i=1}(\bx)\neq f_{x_i=0}(\bx)]=\E_p[f_{x_i=1}-f_{x_i=0}]=\E_p[f_{x_i=1}]-\E_p[f_{x_i=0}]$ and hence \Cref{prop:basic_props:infi_pr} can be rewritten as 
  $$
  \Inf_{i,p}^{\oplus}[f]=\E_p[f_{x_i=1}]-\E_p[f_{x_i=0}].
  $$
  Thus, we can write, using the law of total expectation
  \begin{align*}
      \E_p[f]&=\Pr_p[x_i=1]\E_p[f\mid x_i=1]+\Pr_p[x_i=0]\E_p[f\mid x_i=0]\\
      &=p\E_p[f_{x_i=1}]+(1-p)\E_p[f_{x_i=0}]\\
      &=p\left(\E_p[f_{x_i=1}]-\E_p[f_{x_i=0}]\right)+\E_p[f_{x_i=0}]\\
      &=\E_p[f_{x_i=0}]+p\Inf_{i,p}^{\oplus}[f]
  \end{align*}
  and analogously $p\E_p[f_{x_i=1}]+(1-p)\E_p[f_{x_i=0}]=\E_p[f_{x_i=1}]+(p-1)\E_p[f_{x_i=1}]-(p-1)\E_p[f_{x_i=0}]=\E_p[f_{x_i=1}]-(1-p)\Inf_{i,p}^{\oplus}[f]$.    
\end{enumerate}
\end{proof}

\section{Russo-Margulis Lemma}
\label{appendix:rm_lemma}
In this section we give a self-contained proof of the Russo-Margulis lemma. We adapt an exposition of the proof from \cite{G99} to the Boolean function setting here.

\begin{proof}[Proof of \Cref{lemma:russo_margulis}]
  The key step in our proof will be to generalize $\mathbb{E}_p[f]$ to a multivariate function. Specifically let $\bp=(p_1,\ldots,p_n)\in [0,1]^n$ and write $\{0,1\}^n_{\bp}$ to denote the distribution on $n$ bit strings where the $i$th bit is $1$ with probability $p_i$. Hence $\{0,1\}_p^n=\{0,1\}_{(p,\ldots,p)}^n$. We then define a function $H:[0,1]^n\to [0,1]$ by
  $$
  H(\bp)=H(p_1,\ldots,p_n)=\underset{\bx\sim\{0,1\}_{\bp}^n}{\Pr}{\left[f(\bx)=1\right]}.
  $$
  Note that $H(p,\ldots,p)=\mathbb{E}_p[f]$ and so it is sufficient to show the result holds for the derivative of $H$ with respect to $\bp$ evaluated at $(p,\ldots,p)$. The partial derivative of $H$ with respect to its $i$th input is given by
  $$
  \frac{\partial H}{\partial p_i}(\bp)=\lim_{\varepsilon\to 0}\frac{H(\bp+\boldsymbol{\varepsilon})-H(\bp)}{\varepsilon}
  $$
  where $\boldsymbol{\varepsilon}\in [0,1]^n$ is the vector with $\varepsilon$ in the $i$th entry and $0$s in all other entries. Let $T\subseteq \{0,1\}^n$ be the set of inputs $x$ for which $f(x)=1$. Furthermore, we partition $T$ into sets $T_0, T_1$ depending on whether $x_i$ is $0$ or $1$ for $x\in T$. Formally for $b\in\{0,1\}$ define
  $$
  T_b=\{x\in T:x_i=b\}.
  $$
  We write $\Pr_{\bp}[x]$ to denote the probability of $x$ under the distribution $\{0,1\}^n_{\bp}$ so that
  \begin{align*}
      H(\bp)&=\sum_{x\in T}\Pr_{\bp}[x]\\
      &=\sum_{x\in T_1}p_i\Pr_{\bp}[x \mid x_i=1]+\sum_{x\in T_0}(1-p_i)\Pr_{\bp}[x\mid x_i=0].
  \end{align*}
  Since $\Pr_{\bp}[x \mid x_i=1]$ doesn't depend on the $i$th entry of $\bp$ we have that $\Pr_{\bp}[x \mid x_i=1]=\Pr_{\bp+\boldsymbol{\varepsilon}}[x \mid x_i=1]$  and likewise for $\Pr_{\bp}[x \mid x_i=0]$. Thus we have
  \begin{align*}
      H(\bp+\boldsymbol{\varepsilon})-H(\bp)&=\sum_{x\in T_1}(\varepsilon+p_i-p_i)\Pr_{\bp}[x \mid x_i=1]+\sum_{x\in T_0}((1-p_i-\varepsilon) -(1-p_i))\Pr_{\bp}[x\mid x_i=0]\\
      &=\sum_{x\in T_1}\varepsilon\Pr_{\bp}[x \mid x_i=1]-\sum_{x\in T_0}\varepsilon\Pr_{\bp}[x\mid x_i=0].
  \end{align*}
  If $f$ is monotone and $f(x)=1$ for some $x\in T_0$ then $f(x^{\oplus i})=1$ and so $x^{\oplus i}\in T_1$. Moreover, since $x$ and $x^{\oplus i}$ differ only on the $i$th bit, $\Pr_{\bp}[x \mid x_i=0]=\Pr_{\bp}[x^{\oplus i} \mid x^{\oplus i}_i=1]$ and so the two terms cancel each other in the above summation. The only terms left will be those $x\in T_1$ with no counterpart in $T_0$ which means $f(x)=1$ but $f(x^{\oplus i})=0$. It follows that
  \begin{align*}
      H(\bp+\boldsymbol{\varepsilon})-H(\bp)&=\sum_{x\in T_1}\varepsilon\Pr_{\bp}[x \mid x_i=1]-\sum_{x\in T_0}\varepsilon\Pr_{\bp}[x\mid x_i=0]\\
      &=\varepsilon\sum_{\substack{x\in T_1\\ f(x)\neq f(x^{\oplus i})}}\Pr_{\bp}[x \mid x_i=1]\\
      &=\varepsilon\underset{\bx\sim\{0,1\}_{\bp}^n}{\Pr}[f(\bx)\neq f(\bx^{\oplus i})]
  \end{align*}
  where the last equality follows from the observation that if $x\in T_1$ and $f(x)\neq f(x^{\oplus i})$ then
  \begin{align*}
      \Pr_{\bp}[x\mid x_i=1]&=p_i\Pr_{\bp}[x\mid x_i=1]+(1-p_i)\Pr_{\bp}[x^{\oplus i}\mid x^{\oplus i}_i=0]\\
      &=\Pr_{\bp}[x]+\Pr_{\bp}[x^{\oplus i}].
  \end{align*}
  We can now write the partial derivative of $H$ with respect to its $i$th input as
  $$
  \frac{\partial H}{\partial p_i}(\bp)=\lim_{\varepsilon\to 0}\frac{\varepsilon \underset{\bx\sim\{0,1\}_{\bp}^n}{\Pr}[f(\bx)\neq f(\bx^{\oplus i})]}{\varepsilon}=\underset{\bx\sim\{0,1\}_{\bp}^n}{\Pr}[f(\bx)\neq f(\bx^{\oplus i})].
  $$
  Now using the multivariate chain rule and evaluating at $\bp=(p,\ldots,p)$ we compute
  \begin{align*}
      \frac{d}{dp}\mathbb{E}_p[f]&=\left.\frac{dH}{d\bp}(\bp)\right\rvert_{\bp=(p,\ldots,p)}\\
      &=\left.\sum_{i=1}^n\frac{\partial H}{\partial p_i}(\bp)\right\rvert_{\bp=(p,\ldots,p)}\\
      &=\left.\sum_{i=1}^n\underset{\bx\sim\{0,1\}_{\bp}^n}{\Pr}[f(\bx)\neq f(\bx^{\oplus i})]\right\rvert_{\bp=(p,\ldots,p)}\\
      &=\sum_{i=1}^n\Inf_{i,p}^{\oplus}[f]\\
      &=\Inf_p^{\oplus}[f].
  \end{align*}
\end{proof}

\section{Angluin's Algorithm}
\label{appendix:angluin}

In this section we give an overview of Angluin's algorithm adapted to our setting and a proof of correctness.

\begin{algorithm}
    \begin{algorithmic}[1]
    \Require A monotone function $f:\{0,1\}^n\to \{0,1\}$, a $b$-certificate $S$ for $b\in\{0,1\}$, and input $x^\star$.
    \State $\texttt{SEEN}\leftarrow \varnothing$
    \State Initialize $z_S\in\{0,1\}^n$ to be equal to $x^\star$ on coordinates in $S$ and $1-b$ everywhere else
    \While{$|S|\le C(f)$}
    \State Pick some $i\in S\setminus\texttt{SEEN}$
    \State If $f(z_S^{\oplus i})\neq f(z_S)$ then add $i$ to \texttt{SEEN}, otherwise remove $i$ from $S$ and update $z_S$
    \EndWhile
    \State \Return $S$.
    \end{algorithmic}
    \caption{Reducing a certificate}
    \label{alg:trim_certificate}
\end{algorithm}

\begin{proof}[Proof of \Cref{claim:trim_cert}]
    \Cref{alg:trim_certificate} gives a sketch of the procedure. Suppose without loss of generality that $f(x^\star)=1$ and so $S$ is a $1$-certificate. We can continuously attempt to remove coordinates from $S$ one at a time until $|S|\le C(f)$. For a 1-certificate $S$, write $z_S\in\{0,1\}^n$ for the string which has a $1$ at each coordinate in $S$ and $0$s everywhere else. Note that $z_S\le x^\star$, $f(z_S)=1$, and also $z_S\le y$ for all $y$ satisfying $y\rvert_{S}=z_S\rvert_{S}$. For $i\in S$, we check if $i$ is an irrelevant coordinate (\Cref{def:irrelevance}) by checking if flipping the $i^{\text{th}}$ coordinate in $z_S$ flips the output of the function. That is, we check if $i$ is sensitive on $z_S$. If $i$ is not sensitive, we remove $i$ from $S$ and recurse on $S\setminus\{i\}$. Otherwise, we leave $i$ in $S$ and do not check it again. We proceed in this fashion until $|S|\le C(f)$. Since we only check coordinates in $S$ and check each such coordinate at most once we make $\le 2|S|$ queries to $f$.

    To establish correctness, suppose this procedure returns $S'$. Since we only remove non-sensitive coordinates from $S$ we have $f(z_{S'})=1$. For any $y$ satisfying $y\rvert_{S'}=z_{S'}\rvert_{S'}$ we know that $y\ge z_{S'}$ and hence $f(y)=1$ by monotonicity. It follows that $S'$ is a $1$-certificate for $z_{S'}$ and likewise for $x^\star$ as $S'\subseteq S$. Note also that if $i$ is in $S$ and $i$ is sensitive for $z_S$ then $i$ remains sensitive for all $z_{S'}$ with $i\in S'\subseteq S$. In particular, $z_{S'}\le z_S$ and $z_{S'}^{\oplus i}\le z_S^{\oplus i}$ which shows $0=f(z_S^{\oplus i})\ge f(z_{S'}^{\oplus i})$ by monotonicity. Thus, any sensitive coordinate can be left in the certificate without having to check again. Moreover, since $\Sens_f(z_S)\le C(f)$ we know that the number of sensitive indices we keep in the certificate $S$ is at most $C(f)$ which ensures that if $|S|>C(f)$ there will always be some non-sensitive index that we can remove from $S$.
\end{proof}

\end{document}